\newcommand\mtiny[1]{\mbox{\!\tiny\ensuremath{#1}}}
\newtheorem{lemma}{Lemma}[section]
\newtheorem{theorem}[lemma]{Theorem}
 \newtheorem{definition}[lemma]{Definition}
\newtheorem{proposition}[lemma]{Proposition}
\newtheorem{corollary}[lemma]{Corollary }
\newtheorem{claim}{Claim}
\newcommand{\ms}[1]{\ensuremath{\mathsf{#1}}}
\newcommand{\bra}[1]{\ensuremath{\langle#1|}}
\newcommand{\ket}[1]{\ensuremath{|#1\rangle}}
\newcommand{\argmax}{\operatornamewithlimits{arg\ max}}
\newcommand{\nm}[1]{\ensuremath{|#1|}}
\renewcommand{\bar}{\overline}
\newcommand{\ap}{s_{\mtiny{\times}}}
\newcommand{\apm}{s_{\mtiny{\times}}^{-}}
\newcommand{\app}{s_{\mtiny{\times}}^{+}}
\newcommand{\ver}{{\ms{V}}}
\newcommand{\edge}{{\ms{E}}}
\newcommand{\union}{\cup}
\newcommand{\XX}{{\ms{XX}}}
\newcommand{\dist}{{\ms{dist}}}
\newcommand{\Jxx}{J_{\ms{xx}}}
\newcommand{\Jcr}{J^{\ms{merge}}_{\ms{xx}}}
\newcommand{\Jdr}{J^{\ms{double}}_{\ms{xx}}}
\newcommand{\Jsr}{J^{\ms{single}}_{\ms{xx}}}
\newcommand{\Jsp}{J^{\ms{split}}_{\ms{xx}}}
\newcommand{\Jtr}{J^{\ms{transition}}_{\ms{xx}}}
\newcommand{\nbr}{{\ms{nbr}}}
\newcommand{\LENS}{{\ms{LENS}}}
\newcommand{\energy}{{\mathcal{E}}}
\newcommand{\ham}{{\mathcal{H}}}
\newcommand{\wmis}{\ms{mis}}
\newcommand{\GS}{\ms{GS}}
\newcommand{\FS}{\ms{FS}}
\newcommand{\mdef}{\stackrel{\mathrm{def}}{=}}
\newcommand{\sign}{\ms{sgn}}
\newcommand{\EG}{\edge(G_{\ms{driver}})}
\newcommand{\PNS}{{\bf{PNStoq}}}
\newcommand{\ES}{{\bf{EStoq}}}
\newcommand{\St}{{\bf{Stoq}}}
\newcommand{\AC}[1]{{(#1)\!\!-\!\!{\footnotesize\textsc{Anti-crossing}}}}
\newcommand{\ACgap}{{\ms{AC\!\!-\!\!Gap}}}
\newcommand{\OL}[2]{{\langle{#1}\rangle}_{#2}}
\newcommand{\EL}{{\mathcal{E}}_{L_0}}
\newcommand{\FL}{{\mathcal{E}}_{L_1}}
\newcommand{\ER}{{\mathcal{E}}_{R_0}}
\newcommand{\FR}{{\mathcal{E}}_{R_1}}
\newcommand{\oy}{{\mathcal{Y}}}
\newcommand{\Htr}{\ham^\ms{TFIM}}
\newcommand{\HD}{\ham^\ms{XX\!-\!Sys}}
\newcommand{\DDD}{{\sc Dic-Dac-Doa{}}}
\newcommand{\gic}{{\sc GIC{}}}
\def\final{1} % set this to 0 to get a comment-free version
\newcommand{\vnote}[1]{[{\small Vicky: \bf #1}]\marginpar{*}}
\newcommand{\sidecomment}[1]{\marginpar{\tiny #1}}
\newcommand{\vnote}[1]{}
\newcommand{\sidecomment}[1]{}
\begin{document}
\title{Essentiality of the Non-stoquastic Hamiltonians and Driver Graph Design in Quantum Optimization Annealing}
\author{
Vicky Choi\\
Gladiolus Veritatis Consulting Co.
}

\maketitle       
\begin{abstract}
One of the distinct features of quantum mechanics is that the
probability amplitude can have both positive and negative signs, which
has no classical counterpart as the classical probability must be
positive. Consequently, one possible way to achieve quantum speedup
is to explicitly harness this feature.
Unlike a stoquastic Hamiltonian whose ground state has only positive
amplitudes~(with respect to the computational basis),
a non-stoquastic Hamiltonian can be  {\em eventually stoquastic} or
{\em properly non-stoquastic} when its ground state has both positive and negative
amplitudes.
In this paper, we describe that, for some hard instances which are
characterized by the presence of an anti-crossing (AC) in a
transverse-field
quantum annealing (QA)
algorithm,
how to design an appropriate XX-driver graph (without knowing the
prior problem structure) with an appropriate
XX-coupler strength
such that the resulting non-stoquastic QA algorithm
is  {\em proper-non-stoquastic} with two bridged anti-crossings
      (a double-AC) where the spectral gap between the first and
      second level is large enough such that the system can be operated {\em
      diabatically} in polynomial time.
The speedup is exponential in the original AC-distance, which can be sub-exponential or exponential in the system size, over
the stoquastic QA algorithm, and possibly the same order of speedup over the state-of-the-art classical
algorithms in optimization.
 This work is developed based on the novel characterizations of a
  modified and generalized parametrization definition of an
  anti-crossing in
the context of quantum optimization annealing introduced in
\cite{Choi2020}.

%\newpage

\end{abstract}

\section{Introduction}
Adiabatic quantum computation in the quantum annealing form is a quantum
computation model proposed for solving the NP-hard combinatorial
optimization problems, see \cite{AQC-Review} for the history survey and references
therein.
A quantum annealing  algorithm is described by 
a system Hamiltonian
\begin{align}
\label{eq:Ham1}
  \ham(s) = (1-s) \ham_{\ms{driver}} + s \ham_{\ms{problem}}
\end{align}
where the driver Hamiltonian
$\ham_{\ms{driver}}$, whose ground state is known and easy to prepare;
the problem Hamiltonian $\ham_{\ms{problem}}$, whose ground state
encodes the solution to the 
optimization problem;
$s \in [0,1]$ is a parameter that depends on
time.
A typical example of the system Hamiltonian is the transverse-field
Ising model (TFIM) where the driver Hamiltonian is
$\ham_X =-\sum_{i \in \ver(G)}
\sigma^x_i$, and the problem Hamiltonian $\ham_{\ms{problem}}$ is an
Ising Hamiltonian:
$
\ham_{\ms{Ising}} = \sum_{i \in \ver(G)} h_i \sigma^z_i + \sum_{ij \in \edge(G)} J_{ij}
\sigma^z_i \sigma^z_j
$
%\end{align*}
defined on a problem graph $G=(\ver(G),
\edge(G))$, which encodes the optimization problem to be solved.
The quantum processor that implements the
many-body system Hamiltonian is also referred to as a {\em quantum
  annealer} (QA).
The QA system is initially prepared in the known ground state of $\ham_{\ms{driver}}$, and then through a quantum evolution process, it reaches the
ground state of $\ham_{\ms{problem}}$ at the end of the evolution.
We will consider the driver Hamiltonian $\ham_{\ms{driver}}$  that includes both X-driver
term: $\ham_X =-\sum_{i \in \ver(G)}
\sigma^x_i$, and XX-driver term
$\ham_{\XX} = \Jxx \sum_{ij \in
  \edge(G_{\ms{driver}})} \sigma_i^x \sigma_j^x$ where
$G_{\ms{driver}}$ is the driver graph, and  $\Jxx$ can
be a positive or negative real number.
A Hamiltonian is originally defined to be {\em stoquastic} 
\cite{stoq-def} if its non-zero off-diagonal
matrix elements in the computational basis  are all negative; otherwise, it is called {\em
  non-stoquastic}. Thus, $\ham_{\XX}$ is stoquastic if $\Jxx < 0$ and
non-stoquastic if $\Jxx >0$. Below, we will discuss a refinement of
the non-stoquasticity.
% For simplicity, in this paper, we will only consider $\Jxx \equiv
% d_{ij}$ for all $ij \in \edge(G_{\ms{driver}})$.
We consider the following three types of driver
Hamiltonians:
$$
\ham_{\ms{driver}} = \left\{
\begin{array}[h]{ll}
 \ham_{X}  & (D1) \\
\ham_{X} + \ham_{\XX} & (D2)\\
\ham_{X} + s \ham_{\XX} & (D3)
\end{array}
\right.
$$
Type $(D1)$ is the standard transverse field driver Hamiltonian 
with the uniform superposition state as the (initial)
ground state.
At this point it is unclear what kind of
$(D2)$ Hamiltonian is possible to prepare experimentally, even if the (initial) ground
state is known. 
For this reason, we consider the
type $(D3)$ driver Hamiltonian, which is known as the catalyst~\cite{Non-stoquastic}, 
with the uniform superposition state as the (initial)
ground state.
In this paper, we consider the NP-hard MWIS problem (which any
Ising problem can be easily reduced to \cite{Choi2020}).
We will denote the system Hamiltonian in Eq. (\ref{eq:Ham1}) for
solving the MWIS problem on the weighted 
$G$ by $\Htr(G)$ when $\ham_{\ms{driver}}= \ham_X$; and by
$\HD(\Jxx, G_{\ms{driver}},G)$ when  $\ham_{\ms{driver}}= \ham_X
+ s \ham_{\XX}= \ham_X + s (\Jxx \sum_{ij \in
  \edge(G_{\ms{driver}})} \sigma_i^x \sigma_j^x)$, without explicitly
stating the weight function $w$ on $G$, and omitting the time parameter $(s)$,
$s \in [0,1]$.

Typically, QA is assumed to be operated adiabatically, and the system remains in its instantaneous ground state
throughout the entire evolution process.
However, QA
can be operated non-adiabatically when the system undergoes diabatic
transitions to the excited states and then
return to the ground state. The former is referred to as AQA, and the
latter as DQA. Recently, some quantum enhancement with DQA were
discussed in \cite{CL2020}.
It is worthwhile to point out that the DQA defined there means that
 the system remains in a subspace spanned by a
band of eigenstates of the system Hamiltonian
and it does not necessarily return to the ground state at the end of
the evolution. We are interested in successfully solving the optimization
problem and therefore require the system returns to the ground
state. To distinguish these two versions, 
we shall refer to our version of DQA as DQA-GS.
DQA-GS has been exploited for the possible
quantum speedup  by an oracular stoquastic QA algorithm
in\cite{diabatic1,diabatic2}. More discussion on this in Section~\ref{sec:discussion}.

The running time of a QA algorithm is the 
total evolution time, $t_f=s^{-1}(1)$. According to the Adiabatic Theorem (see, e.g.,\cite{AT-rigorous} for a rigorous
statement), the run time of an AQA algorithm
is inversely proportional to a low power of
the minimum spectral gap (min-gap), $\Delta_{10}(s^*) = \min_{s}
E_1(s) - E_0(s)$, where $E_0(s)$ ($E_1(s)$, resp.)  is the energy value of
the ground state (the first excited state, resp. ) of
$\ham(s)$. 
Thus far,  the possible quantum speedup of  the
transverse-field Ising-based QA (TFQA)  as a heuristic
solver for optimization problem over state-of-the-art classical (heuristic)
algorithms has been called into questions, see \cite{CL2020} for a
discussion. 
As a matter of fact, 
the min-gap can be exponentially small in the problem size and thus a
TFQA algorithm can take an
exponential time, without achieving a quantum advantage.
Indeed, one can easily construct instances that have an exponentially
small gap due to an anti-crossing between levels corresponding to local
and global minima of the optimization function,
see
e.g., \cite{Choi2020,Amin-Choi,diff-path}.

Anti-crossing (AC), also known as avoided level
crossing or level repulsion, is a well-known concept for physicists.
In the context of adiabatic quantum optimization (AQO), the small-gap due to an anti-crossing
has been explained in terms of some established physics theory, such as
first-order phase transition~\cite{Amin-Choi}, Anderson localization
\cite{AKR}\footnote{A correction to the paper in \cite{PNAS-correction}.}. In these two cases, the argument was based on applying the
perturbation theory at the end of evolution where the anti-crossing
occurs.
Such an anti-crossing was later referred to as a perturbative
crossing, see e.g. \cite{Non-stoquastic}.
A parametrization definition of an anti-crossing was first introduced  by Wilkinson in
\cite{Wilkinson2}, and was used in \cite{Pechukas-gas2018} to study
the effect of noise on the QA system.
In \cite{Choi2020}, we gave a parametrization definition
for an anti-crossing in
the context of quantum optimization annealing where we also describe the behavior of the energy
states that are involved in the anti-crossing, including the symmetry-and-anti-symmetry
(SAS) property of
the two states at the anti-crossing point.

In this paper, we
modify and generalize the parametrization definition of the
  anti-crossing in
\cite{Choi2020}. 
We derive some novel
characteristics of such an anti-crossing and develop it into an
analytical tool for the design and analysis of the QA
algorithm.
In particular, (1) we discover the significance of the sign of the coefficients of the
states involved in the anti-crossing which leads to the revelation of the
significant distinction between the {\em proper-non-stoquastic} and
{\em eventually-stoquastic} Hamiltonians  (to be elaborated below).
(2) We derive the necessary conditions for the formation of an
anti-crossing. This provides us algorithmic insight into the
relationship between an anti-crossing and the structure of the local
and global minima of the problem. 
(3) We
derive two different formulae for computing the AC-gap. One is descriptive in that we
can use it to study how the AC-gap changes (without actually computing the gap size) as we vary one parameter in
the system Hamiltonian, including either
the parameters in the problem Hamiltonian or the \XX-coupler strength
in the driver Hamiltonian.  The other gives AC-gap bound asymptotically
in terms of AC-distance.
The gap size is not only important for the run-time of the adiabatic
algorithm\footnote{One word of caution:  because of our two-level assumption
(that other levels are far apart) in the AC definition, the presence
of an AC implies a small gap; however, the absence of an AC does
not necessarily imply a large (polynomial in the system size) gap.
That is, AC-min-gap is necessarily small, but non-AC min-gap is not
necessarily large.}, but it is also crucial for the analysis of the diabatic transition in DQA
setting.

We now describe our algorithm, named \DDD{}, which stands for: Driver graph from
      Independent-Cliques; Double Anti-Crossing; Diababtic quantum Optimization Annealing.
The idea is to design an appropriate \XX-driver
graph $G_{\ms{driver}}$  with an appropriate
\XX-coupler strength $\Jxx$
such that  for an instance $G$ (which is
characterized by the presence of an AC  in the  TFQA algorithm
 described by  $\Htr( G)$), 
the corresponding non-stoquastic Hamiltonian system that is described
by $\HD(\Jxx, G_{\ms{driver}},G)$
has a double-AC or a sequence of nested double-ACs, or a double
multi-level anti-crossing,  with the desired property, such
that one can apply DQA-GS successfully to solve the optimization problem
in polynomial time.
A high-level description of \DDD{} is described in Table 1.
The input to the algorithm is a vertex-weighted graph $G$ of the MWIS problem with
the assumption that it has a special independent-cliques (IC)
structure. We refer to such an instance as a \gic{} ({\em graph of
  independent cliques}) instance.
Each clique  in the IC  is a clique of {\em partites},  with
each partite consisting of either one single vertex or an independent
set (of vertices). 
When all
partites are single vertices, the clique of partites is the normal
clique; otherwise the clique is also known as the multi-partite graph
(not necessarily complete). The size of the clique is the
number of partites in the clique.
%(Remark: The multi-partite graph is not necessarily complete, i.e. all
%possible edges between any two partites are present.)
The algorithm consists of two main phases. Phase 1 discovers
the IC (if presented) in the graph. The information of the IC is used to
construct the \XX-driver graph.
Phase 2 runs the QA polynomial number of times, each time in polynomial annealing
time. 
The output is the best solution found, which would be the MWIS of
$G$, if the algorithm works correctly as claimed.

\begin{table}[h]
  \centering
  \begin{tabular}[c]{c}
    \small
    \noindent\fbox{
  \parbox{6in}{
\begin{description}
  \item {Input:} A \gic{} instance -- a vertex-weighted graph $G$ of the MWIS problem with
    an (unknown)  IC structure
    \item {Step 0.} Compute the MWIS-Ising problem Hamiltonian
      $\{h_i,J_{ij}\}$ from the weighted graph $G$
\item[ ] --- Phase 1: Discover the IC ---
  \item {Step 1.1} Run $\Htr(G)$ on QA adiabatically in polynomial time
    to obtain a set of local minima $P$
  \item{Step 1.2} Find an IC formed from $P$ 
    
    DIC-- set the XX-driver graph $G_{\ms{driver}}$
    according to IC
  \item[ ] --- Phase 2: Run the QA polynomial number of times ---
    \item {Step 2.1} Estimate a range of $\Jxx$ by bounding the MWIS
      instance (potential values for forming DACs)
    \item {Step 2.2} Run $\HD(\Jxx, G_{\ms{driver}},G)$ on QA
       in polynomial annealing time for each different $\Jxx$;
      \item{Output:} The best solution found
      \end{description}
      }
}
  \end{tabular}
  \caption{A high-level description of \DDD.}
  \label{tab:DDDalg}
\end{table}

It is believed that the \gic{} instances can pose an obstacle to
classical MWIS algorithmic-solvers or
stoquastic AQA because of the IC structure which 
generates a large set  (e.g. exponentially many) of near-cost maximal
independent sets, corresponding to local minima of the optimization function.
Each such local minumum is formed from one element (either one vertex or one
partite) from each clique in the IC. For example,  if there are $k$
cliques in the IC,
each of size $t$, there will be  $t^k$ local minima.
Such a set of local minima $L$ would cause a
formation of an anti-crossing in a
transverse-field
quantum annealing 
algorithm described by $\Htr(G)$. 
We show if we take all the edges (between any two partites) within the cliques as the $\XX$-couplers
in the driver graph, and if the $\XX$-coupler
strength $\Jxx (>0)$ in $\HD(\Jxx, G_{\ms{driver}},G)$ is
appropriately large, it will force  $L$ to split into
two opposite subsets, namely
$L^+$ (states with positive amplitudes)  and $L^-$ (state with
negative amplitudes), if $L$ is to occupy the instantaneous ground state (while its
energy is minimized). This in turn
will result in two anti-crossings bridged by $(L^+,L^-)$, called a
{\em double-AC} in during the evolution process of $\HD(\Jxx, G_{\ms{driver}},G)$.
%It is worthwhile to point out here that we make use
%of the negative amplitudes in the ground state wavefunction. 
That is, if we take $G_{\ms{driver}}= G|_L$,
where $G|_{L}\mdef G[\cup_{l \in L}l]$\footnote{For a vertex set $W
  \subset \ver(G)$, its induced subgraph $G[W]$ is defined as
  $G[W]=(W, \{\{u, v\} \in \edge(G): u, v \in W\})$},
the induced
subgraph with vertices from $L$, there is an appropriate \XX-coupling strength such that the resulting
non-stoquastic QA algorithm $\HD(\Jxx, G_{\ms{driver}},G)$ is {\em proper-non-stoquastic} with a
double-AC. 
Consequently, if the second-level gap between the two anti-crossings
is large enough, the system can be operated diabatically in polynomial
time, i.e. solve the problem through DQA-GS in polynomial time. 
More specifically, the system diabatically transitions to the first excited state at the
first AC; then it adiabatically follows the first excited state and
does not transition to the second excited state because of the
large second-level gap; finally, the system returns to the ground
state through the second AC.
The above idea can be generalized to the system that has a sequence
of {\em nested double-ACs} or a double {\em multi-level
  anti-crossing} where the diabatic transitions go through a cascade
of anti-crossings similar to the {\em diabatic
  cascade}~\cite{tunneling-MAL}\footnote{However, in \cite{tunneling-MAL}, the
  diabatic cascade is made possible by first exciting the system to
  very high energy states, and then came down through a cascade of
  anti-crossings.}.
In which case,
the system diabatically  transitions to a band of excited
states through a cascade of first ACs (of the nested double-ACs), with the condition that the band of eigenstates
is well separated from the next excited state (so that the system
will not transition further),
and return to the ground state through a cascade  of second ACs
(of the nested double-ACs).
The procedure of how to identify the independent cliques efficiently
(as the driver graph) and how to identify the appropriate $\Jxx$
coupler strength is described in 
Section~\ref{sec:speedup}
when we do not have the prior knowledge of the problem structure.

The possible speedup of \DDD{} over the TFQA (by
overcoming the AC) can be exponential or super-polynomial in the system size depending on the
AC-gap size,  and possibly the same order of speedup over
state-of-the-art classical algorithms in optimization.
There are further resons to support this possible quantum speedup
because we make use of the negative amplitudes which is one of the
distinct features of quantum mechanics
that
has no classical counterpart as the classical probability must be
positive. It appears that there is no equivalent classical ways to
implement \DDD.
Furthermore, it is believed that the {\em proper-non-stoquastic}
Hamiltonians can not be efficiently simulated by classical methods
through quantum Monte Carlo (QMC) algorithms as explained below in
Section\ref{sec:PN}.

\subsection{Non-stoquasticity: Eventually Stoquastic vs Proper
  Non-stoquastic}
\label{sec:PN}
The distinction of the stoquasticity is vital in quantum
simulations because of the `sign' problem, see e.g.,\cite{vgp,curing-non-stoq}
and the references therein. In particular, in \cite{vgp}, Hen
introduced the VGP to further classify the stoquasticity of the
Hamiltonian. 
The concept of stoquasticity  
is also important from a computational
complexity-theory viewpoint, see \cite{de-sign} for the summary.

From the algorithmic design perspective, we
distinguish the stoquasticity of the Hamiltonian
based on the Perron-Frobenius (PF) Theorem\footnote{This property was
  used by others e.g. \cite{Jarret-Jordan,Jarret2} in spectral gap analysis
  but in a different way.}
which we
recast in our terminology.
\begin{theorem}(Perron-Frobenius Theorem)
If a real Hermitian Hamiltonian
$A$ is entry-wise non-negative (i.e. all entries are non-negative), then the ground state
of $-A$ is a non-negative vector (all entries are non-negative). 
\end{theorem}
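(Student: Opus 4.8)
The plan is to prove this via the variational (Rayleigh quotient) characterization of the extreme eigenvalue, together with a sign-removal trick. First I would unpack what "ground state of $-A$" means: since $A$ is real symmetric, it has an orthonormal eigenbasis $u_1,\dots,u_n$ with eigenvalues $\lambda_1 \ge \cdots \ge \lambda_n$, and the ground state of $-A$ — the eigenvector of the smallest eigenvalue $-\lambda_1$ of $-A$ — is exactly a unit vector lying in the top eigenspace of $A$, i.e.\ a maximizer of the Rayleigh quotient. Concretely, for a unit vector $v = \sum_k c_k u_k$ one has $\langle v, A v\rangle = \sum_k \lambda_k c_k^2 \le \lambda_1 = \lambda_{\max}(A)$, with equality if and only if $v$ is supported on the $\lambda_1$-eigenspace; in particular $\lambda_{\max}(A) = \max_{\|v\|_2 = 1}\langle v, A v\rangle$, the maximum is attained, and the maximizers are precisely the ground states of $-A$.

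The key step is then to pass from an arbitrary maximizer $v^\star$ to a non-negative one by taking entry-wise absolute values. Set $(|v^\star|)_i = |v^\star_i|$. Then $\| |v^\star| \|_2 = \|v^\star\|_2 = 1$, and since every entry of $A$ is non-negative,
$$\langle |v^\star|, A\,|v^\star|\rangle \;=\; \sum_{i,j} A_{ij}\,|v^\star_i|\,|v^\star_j| \;\ge\; \sum_{i,j} A_{ij}\,v^\star_i v^\star_j \;=\; \langle v^\star, A v^\star\rangle \;=\; \lambda_{\max}(A).$$
Because $\lambda_{\max}(A)$ is the maximum of the Rayleigh quotient, equality must hold, so $|v^\star|$ is also a maximizer, hence — by the first step — an eigenvector of $A$ belonging to $\lambda_{\max}(A)$, equivalently a ground state of $-A$. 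By construction it is entry-wise non-negative, which establishes the claim.

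I expect no real obstacle here; this is the elementary half of Perron--Frobenius and needs no irreducibility hypothesis. The only points requiring care are: (i) the statement must be read existentially — the argument produces \emph{a} non-negative ground state, not the assertion that every ground state is non-negative, which fails whenever the top eigenspace is degenerate (e.g.\ $A$ the identity, for which every unit vector is a ground state of $-A$); and (ii) the identification of Rayleigh-quotient maximizers with eigenvectors, which uses $A = A^{\mathsf T}$. Obtaining strict positivity and uniqueness of the ground state would additionally require irreducibility of the support graph of $A$, but since the statement only asserts non-negativity, the two steps above suffice.
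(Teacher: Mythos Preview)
Your proof is correct and is the standard variational (Rayleigh-quotient plus absolute-value) argument for the non-negativity part of Perron--Frobenius. However, the paper does not actually prove this theorem: it is stated as a classical result (``which we recast in our terminology'') and is immediately followed by references to generalizations \cite{general-PF1,general-PF2}, with no proof given. So there is nothing in the paper to compare your argument against.

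Your caveat~(i) is well taken and in fact sharper than the paper's phrasing: the paper writes ``the ground state of $-A$ is a non-negative vector'' as though uniqueness were granted, whereas, as you note, without irreducibility one only gets existence of a non-negative vector in the top eigenspace. For the paper's purposes this distinction does not matter, since the PF property is used only to motivate the dichotomy between ground states that can be chosen non-negative and those that cannot.
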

We say $A$ has the PF property if the ground state
of $-A$ is a non-negative vector.
It has been shown that there are more general matrices that have the
PF property \cite{general-PF1,general-PF2}. In particular, $A$ has the
PF property if $A$ is {\em eventually} non-negative, i.e., there
exists $k_0>0$ such that  $A^k \ge 0$ (entry-wise non-negative)
for all $k >k_{0}$.
In particular, the non-stoquastic $\ham_{\XX}$ with small positive
$\Jxx$ can have the
PF property. 
\begin{definition}
A non-stoquastic $\ham$ is called {\em proper} if it fails to have the
PF property (that is, the
ground state has both positive and negative entries), otherwise is
called {\em eventually stoquastic}.
\end{definition}
We shall denote the proper non-stoquastic (eventually stoquastic,
stoquastic, resp.) by \PNS{} (\ES, \St, resp.).
An $\HD(\Jxx,G_{\ms{driver}},G)$ is \ES{}, if for all $s
\in [0,1]$, the corresponding $\ham(s)$ is \ES{}; otherwise, i.e. if
exists $s \in [0,1]$ such that the corresponding $\ham(s)$ is \PNS{},
the $\HD(\Jxx,G_{\ms{driver}},G)$ is \PNS{}.
See Figure~\ref{fig:NEStoq} for the stoquasticity of $\HD(\Jxx,G_{\ms{driver}},G)$.
\begin{figure}[h]
  \centering
  \includegraphics[width=0.7\textwidth]{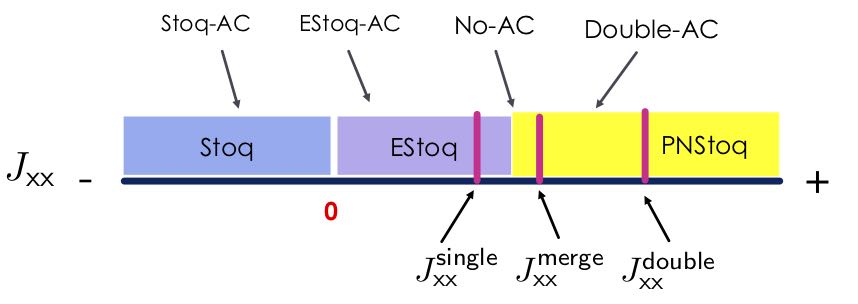}
  \caption{Stoquasticity of $\HD(\Jxx,G_{\ms{driver}},G)$ where
$\ham_{\XX}= \Jxx \sum_{ij \in
  \edge(G_{\ms{driver}})} \sigma_i^x \sigma_j^x$. $\HD(\Jxx,G_{\ms{driver}},G)$ is
\St{}(in blue) if $\Jxx<0$; it is \ES{}(in purple) if $0<\Jxx < \Jtr$; and it is \PNS{}(in yellow)
if $\Jxx \ge \Jtr$. 
%{\bf Need to double check if transition is the same as merge in the
%proof.}
For $\HD(\Jxx,G_{\ms{driver}},G)$ described in Section \ref{sec:speedup},
the system has a double-AC when $\Jxx
\in (\Jcr,\Jdr]$ where $\Jcr \ge \Jtr$. As $\Jxx$ decreases from $\Jdr$,   the bridge between the two ACs shrinks,
and the two ACs eventually merges, resulting in no AC, at
$\Jcr$. Note that no AC does not necessarily imply that the gap is not small.
%Also, the strength of EStoq-AC (or Hxx-Stoq-AC) increases as $\Jxx$
%decreases:
%$\ACgap$($H_{\XX}$-Stoq-AC) $<$ $\ACgap$($H_X$-Stoq-AC) $<$ $\ACgap$(EStoq-AC).
}
  \label{fig:NEStoq}
\end{figure}

Recently, it has been claimed
that for ``typical'' systems, the min-gap of the non-stoquastic Hamiltonian is smaller
than its ``de-signed'' stoquastic counterpart \cite{de-sign}.
However, we show by a counter-argument (in Observation 1 of
Section~\ref{sec:speedup}) and a 
 counter-example (see 
Figures~\ref{fig:AC-strength} and \ref{fig:mg-Jxx}) that the opposite is true when the
non-stoquastic Hamiltonian is 
eventually stoquastic and an appropriate driver graph (which can be
constructed efficiently) is taken into consideration. The driver graph is either not explicitly
addressed or assumed to be the same as the problem graph in
\cite{de-sign}.

For the \PNS{},  its ground state has both positive and
negative amplitudes. It is this  proper non-stoquasticity feature, which is exclusively
quantum mechanics, that we will make use of.
Furthermore, there are reasons to believe that \PNS{} 
Hamiltonians (with \XX-driver Hamiltonians)\footnote{As Itay Hen
  pointed out that the system
  with $-\ham_{X}$ as the driver Hamiltonian is \PNS{} but it is also VGP.}
are not VGP \cite{vgp}, and thus not QMC-simulable.

\subsection{Preliminary and Notation}
We now introduce some necessary notation.
Let $\ket{E_k(s)}$ ($E_k(s)$  respectively) be the instantaneous
eigenstate (energy respectively) of the system Hamiltonian $\ham(s)$
in Eq.(\ref{eq:Ham1}) at time $s \in [0,1]$, i.e., 
$\ham(s) \ket{E_k(s)} = E_k(s) \ket{E_k(s)}$ for $k=0,1, \ldots
2^N-1,$ where $N$ is the number of qubits in the system. 
For convenience, we write $E_k \mdef E_k(1)$ and $\ket{E_k}\mdef
\ket{E_k(1)}$ for the energy and state of the problem (final)
Hamiltonian $\ham_{\ms{problem}}=\ham(1)$. For convenience, we write
$H_P=\ham_{\ms{problem}}$, and $H_D=\ham_{\ms{driver}}$, and let
$\delta H = H_P-H_D$. 

Let $2^{[N]}$ denote all possible bit
strings of length $N$ which correspond to all possible states of the
problem Hamiltonian. Each bit string corresponds to a subset of
$\{0,1,...,N-1\}$ (the position with value 1 corresponds to the
element in the subset). When there is no confusion, we use the subsets
and the corresponding bit strings interchangeably.
Thus, $2^{[N]}$ also represents all possible
subsets of  $\{0,1,...,N-1\}$, and $\{\ket{k} : k \in 2^{[N]}\}$
consists of all problem states (also known as the classical
states). Note  that the problem ground state$\ket{E_0}$ is in general not the
same as the zero state $\ket{0}$.

We express the instantaneous eigenstates ($\ket{E_0(s)},
\ket{E_1(s)}$)  in terms of the classical states:
$$
\left\{
    \begin{array}{ll}
    \ket{E_0(s)} = \sum_{k=0}^{2^N-1}
  \textcolor{red}{c_k(s)} \ket{k}, &\sum_{k=0}^{2^N-1} |c_k(s)|^2=1\\
\ket{E_1(s)} = \sum_{k=0}^{2^N-1}
  \textcolor{blue}{d_k(s)} \ket{k}, &\sum_{k=0}^{2^N-1} |d_k(s)|^2=1
    \end{array}
\right.
$$
That is, we have ${c_k(s)} = \bra{E_0(s)}k\rangle$ and ${d_k(s)} =
\bra{E_1(s)}k\rangle$ for all $k<2^N$.

%\vnote{$\ket{E_0} \neq \ket{0}$}.
%\paragraph{Signed overlaps.}
In general (in the proper non-stoquastic case),
$c_k(s)$ can be positive or negative, and/or can
change the sign during the evolution course.
Since the squared
overlap $|c_k(s)|^2$ would lose the sign of $c_k(s)$, 
we introduce the signed overlap, $\sign (c_k(s))|c_k(s)|^2$.
We will
visualize our results using the signed overlaps whenever the signs are
significant. In particular, the evolution of the signed overlaps of
$c_k(s)$ and $d_k(s)$ will help us understand the formation of the
anti-crossings during the quantum evolution.

For $A \subset 2^{[N]}$ (e.g. a set of local minima states), we denote
the projection of the  eigenstate $\ket{E_i(s)}$ onto $A$ by $\ket{A_i(s)}= \widehat{P}_{A}\ket{E_i(s)}$ where
$\widehat{P}_{A} = \sum_{a \in A} \ket{a}\bra{a}$ is the projection operator, and
$\nm{A_i(s)}=||A_i(s)||$ denotes its norm. For example,
%if $\ket{E_0(s)} = \sum_{k=0}^{2^N-1}{c_k(s)} \ket{k}$, then 
$\ket{A_0(s)} = \sum_{k \in A} c_k(s) \ket{k}$ and
$|A_0(s)| = \sum_{k \in A} |c_k(s)|^2$; similarly for
        $\ket{A_1(s)}$ and $|A_1(s)|$ (with $c_k(s)$ replaced by
        $d_k(s)$).

Let $\Delta_{ij}(s) = E_i(s)-E_j(s)$ be the instantaneous spectral gap
between the $i$th and $j$th energy levels, for $s \in [0,1]$.
% In particular, min-gap $\Delta_{10}(s^*) = min_{s \in [0,1]} \Delta_{10}(s)$
% where $s^*$ is the min-gap position. 
% We are  mainly interested in when there is an occurrence of an
% anti-crossing at $s^*$.

%\paragraph{Driver Hamiltonian.}
%We now describe stoquastic and non-stoquastic driver Hamiltonians. 

\paragraph{Driver-dependent neighborhood and distance.}
As in \cite{Choi2020}, we define the driver-dependent neighbourhood $\nbr_{H_D}(\ket{k})=\{\ket{q}:
\ket{q}=\ms{Op}_i(\ket{k}),  i=0,\ldots, p \}$, where  $H_D=\sum _{i=0}^p
\ms{Op}_i$ is the driver Hamiltonian and $k,q \in 2^{[N]}$.
By this definition, 
$\nbr_{\ham_\ms{X}}$ consists of the single-bit flip neighbourhood of
the state. 
For example, 
$\nbr_{\ham_\ms{X}}(\ket{101}) = \{
\ket{10\underline{0}}, \ket{1\underline{1}1},
\ket{\underline{0}01}
\}$.
Define $\nbr_{H_D}(L) = \cup_{l \in L} \nbr_{H_D}(\ket{l})$.

Define $\dist_{H_D}(\ket{k}, \ket{q)}$ to be the 
number of $\ms{Op}$s in the minimum path 
between $\ket{k}$ and $\ket{q}$ (computational states).
For $L, R \subset 2^{N}$,
define 
$\dist_{H_D}(L,R) \mdef \min_{l \in L, r\in R}
\dist_{H_D}(\ket{l},\ket{r})$.
For example, $\dist_{H_X}(L,R)$ is the minimum Hamming distance
between sets in $L$ and $R$.

\paragraph{Sets $L$, $n(L)$, $\tilde{L}$, $\bar{L}$.}
For $L, R \subset 2^{N}$, the space $2^{[N]}$ is partitioned to $\tilde{L} \cup \tilde{R}$,
where
$\tilde{L} = L \cup n(L)$ and $\tilde{R} = R \cup n(R)$.
The set $n(L)$  ($n(R)$ resp.)
consists of states that are driver-distance closer to $L$ than $R$
($R$ than $L$ resp.).
That is, 
\begin{align*}
  \begin{cases}
    n(L) = \{ x  \in 2^{[N]}:\dist_{H_D}(\{x\},L)\le \dist_{H_D}(\{x\},R) \}\\
     n(R) = \{ x  \in 2^{[N]}:\dist_{H_D}(\{x\},L) > \dist_{H_D}(\{x\},R)
     \}\\
  \end{cases}
\end{align*}

See Figure \ref{fig:LR-partition} for an illustration.
\begin{figure}[h]
  \centering
  \includegraphics[width=0.4\textwidth]{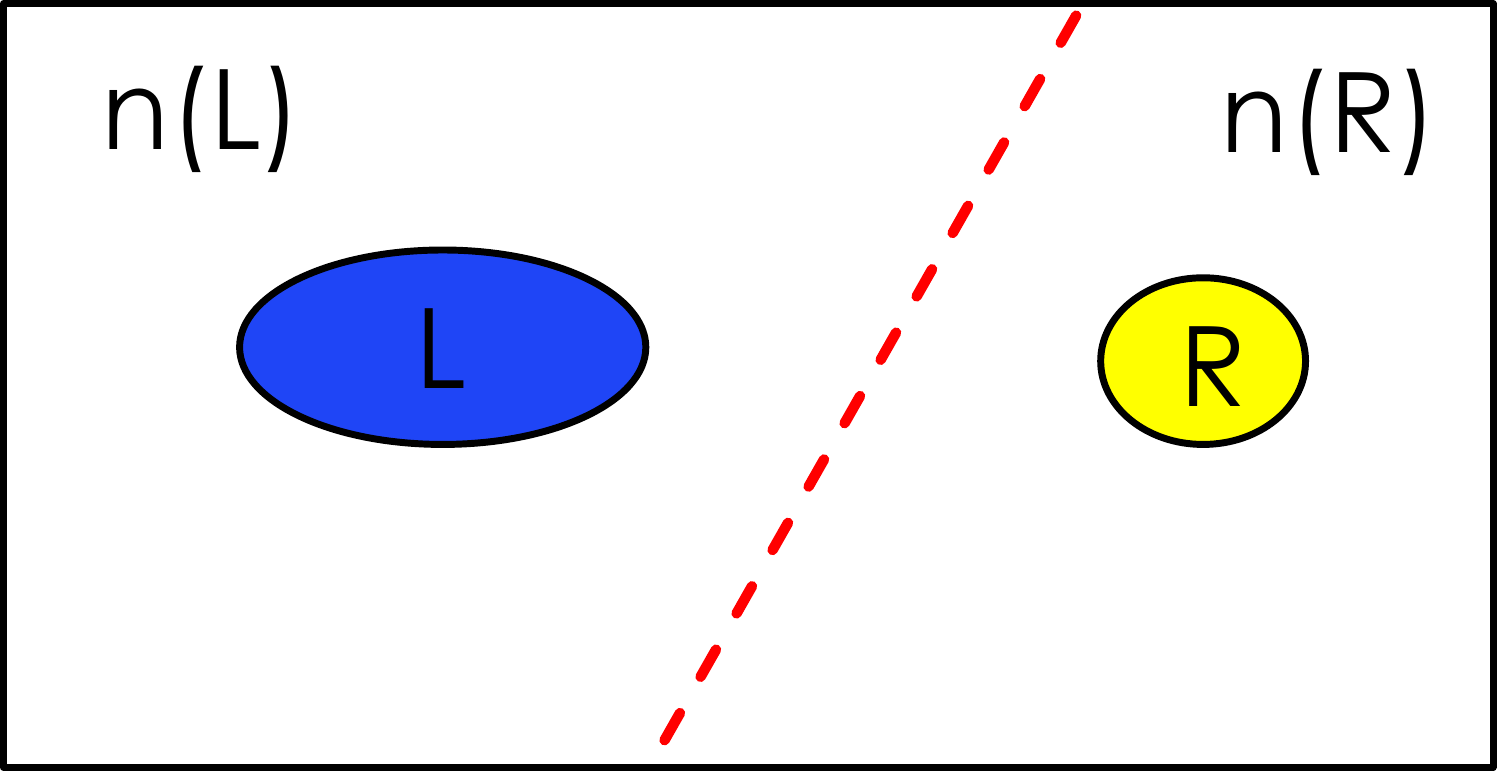} 
  \caption{The space $2^{[N]}$ is partitioned to $\tilde{L} \cup \tilde{R}$,
where
$\tilde{L} = L \cup n(L)$ and $\tilde{R} = R \cup n(R)$.}
  \label{fig:LR-partition}
\end{figure}

We also denote $\bar{L} = L \cup \nbr_{H_D}(L)$ and $\bar{R} = R
\cup \nbr_{H_D}(R)$.
The set $\nbr_{H_D}(A)$ contains the low-energy neighboring-states
(LENS) introduced in \cite{Choi2020}. In particular,  $\LENS(A) =\{k \in \nbr_{H_D}(A): \ms{energy}(k) \le
\ms{threshold}\}$, for some instance dependent $\ms{threshold}$.

The paper is organized as follows. We describe our results in Section
2. In Section 2.1, we give the definition of 
anti-crossing, and describe four characterizations of the AC, the
necessary conditions for the formation of an AC, and the two
formulae for computing the AC-gap. In Section 2.2, we describe our algorithm
\DDD{}. We conclude with our discussion in Section 3. The proofs are
included in Section 4.

\section{Results}

\subsection{Anti-crossing: A Tool for Design and Analysis of QA algorithm}

\subsubsection{New Defintion of Anti-Crossing}
Informally, in the context of the quantum annealing,
we define an anti-crossing between two consecutive levels with  the following four conditions:
(1) the
anti-crossing point corresponds
to a local minimum in the energy spectrum between the two interacting
levels; 
(2) within the anti-crossing interval, all other energy levels are
far away from the two interacting levels;
(3) there are two small sets of ``non-negligible'' states involved in the anti-crossing;
(4) a ``sharp  exchange'' of the non-negligible states within a small
width of the anti-crossing
interval.

More formally, a new parametrization definition of an anti-crossing
between the lowest two levels (which can be straightforwardly
generalized to any two consecutive levels) during the evolution of $\ham(s)$ in Eq. (1) is defined as
follows.
\begin{definition} 
We say that there is an $\AC{L,R}$ at $\ap$
 if  there exists two disjoint subsets $L$ and $R$  ($\subset 2^{[N]}$), 
 and  $\delta>0$, $\gamma >0$ such that
 \begin{description}
   \item[(i)] For $s \in [\ap-\delta, \ap+\delta]$,
$\Delta_{10}(\ap) \le \Delta_{10}(s)$;
\item [(ii)]For $s \in [\ap-\delta, \ap+\delta]$,
$\Delta_{10}(s) << \Delta_{k0}(s)$, for all $k>1$;

\item[(iii)] Within the time interval $[\ap-\delta, \ap+\delta]$, both
$\ket{E_0(s)}$ and $\ket{E_1(s)}$
are mainly composed of states from $L$ and $R$.
More precisely, 
let $\widetilde{L} = L \cup n(L)$, $\widetilde{R} = R \cup n(R)$,
such that all the possible states $2^{[N]}= \widetilde{L} \cup
\widetilde{R}$, and $\widetilde{L} \cap
\widetilde{R} = \emptyset$.
For $s \in [\ap-\delta, \ap+\delta]$,
\begin{align}
  \label{eq:12}
%  \begin{cases}
    \ket{E_i(s)} = \ket{\widetilde{L_i}(s)} + \ket{\widetilde{R_i}(s)}
 %   = (\ket{L_i(s)} +\ket{n(L_i)(s)}) + (\ket{R_i(s)}
 % +\ket{n(R_i)(s)})\\
  \simeq \ket{L_i(s)} +  \ket{R_i(s)}
%  \end{cases}
\end{align}
        where $|L_i(s)| + |R_i(s)| \in [1-\gamma, 1]$ and
        $|n(L_i)(s)| +
        |n(R_i)(s)| \in [0, \gamma]$, $i=0,1$.
        (Recall: $\ket{A_0(s)} = \sum_{k \in A} c_k(s) \ket{k}$ and
        $\ket{A_1(s)} = \sum_{k \in A} d_k(s) \ket{k}$.)
        
      \item[(iv)] Before the anti-crossing at  $\ap^{-} \equiv \ap -\delta$,
        $\ket{E_0(\apm)} \simeq \ket{L_0(\apm)}$ (or $|{L_0(\apm)}|
        \ge (1-\gamma)$), $\ket{E_1(\apm)} \simeq
        \ket{R_1(\apm)}$; after the anti-crossing  at $\ap^{+} \equiv \ap + \delta$,
        $\ket{E_0(\app)} \simeq \ket{R_0(\app)}$, $\ket{E_1(\app)} \simeq
        \ket{L_1(\app)}$.

      \end{description}
\end{definition}
      For convenience, we shall refer $(L,R)$ as the two {\em arms} of the
anti-crossing (they appear in the left and right of each energy
level).  We shall refer $\delta$ as the AC-width, and
$\dist_{H_D}(L,R)$ as the AC-distance.
One can take the AC-width to be  the $\delta$ such that the minimum of the four ``corner
overlaps'' is maximized, i.e.  $\delta = \max_{\xi>0} \min\{ \langle
E_0(\ap - \xi) \ket{L_0(\ap-\xi)}, \langle
E_1(\ap - \xi) \ket{R_1(\ap-\xi)}, \langle
E_0(\ap + \xi) \ket{R_0(\ap+\xi)}, \langle
E_1(\ap + \xi) \ket{L_1(\ap+\xi)}
\}$. In general, we require $\gamma$ to be near
$1$. Sometimes it is possible to increase $\gamma$ by enlarging $L$
and/or $R$ such that $\AC{L^*,R^*}$ is an anti-crossing with a larger
$\gamma$ (for the same $\delta$), where $L \subseteq L^*, R \subseteq
R^*$.
A figure depicting an $\AC{L,R}$ is shown in
Figure \ref{fig:anti-crossing}.

\begin{figure}[h]
  \centering
  \includegraphics[width=0.5\textwidth]{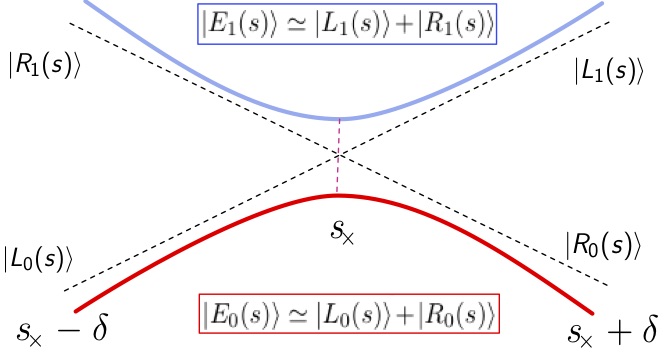} \\
(a)\\
\includegraphics[width=0.4\textwidth]{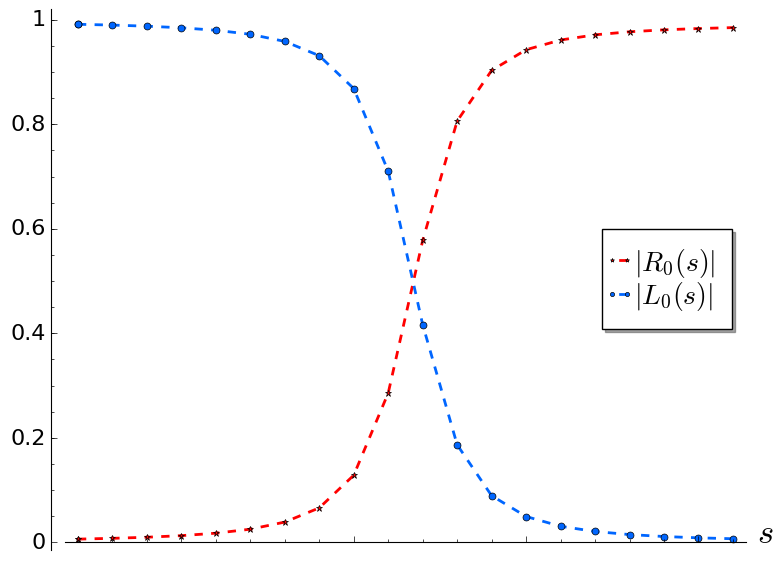} 
\includegraphics[width=0.4\textwidth]{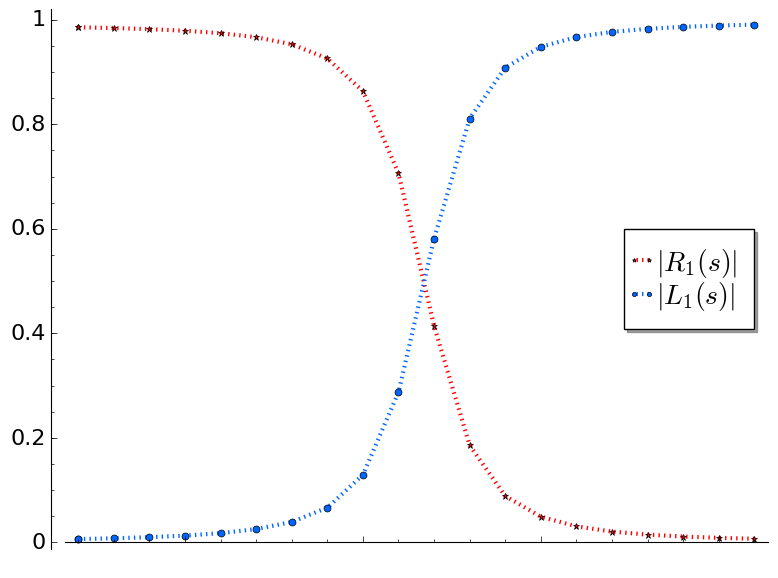} \\
(b)
  \caption{An {\em $\AC{L,R}$} at $\ap$. (a) Within the time interval $[\ap-\delta, \ap+\delta]$, both
$\ket{E_0(s)}$ and $\ket{E_1(s)}$
are mainly composed of states from $L$ and $R$. That is, $\ket{E_0(s)}
\simeq \ket{L_0(s)} +  \ket{R_0(s)}$ and $\ket{E_1(s)} \simeq \ket{L_1(s)} + \ket{R_1(s)}$
where 
$\ket{A_i(s)} = \sum_{k \in A} f^i_k(s) \ket{k}$,  with $f^0_k =c_k,
f^1_k=d_k$, $A \in \{L,R\}$, $i=0,1$.
(b) When $s$ goes from $\ap -\delta$ to $\ap+\delta$ , the state $\ket{E_0(s)}$ shifts
  from $L$ to $R$ while $\ket{E_1(s)}$ shifts from $R$ to $L$:
$|L_0(s)| \downarrow$, $|R_0(s)| \uparrow$ while $|R_1(s)|
  \downarrow$, $|L_1(s)| \uparrow$, where
$|A_i(s)| = \sum_{k \in A} |f^i_k(s)|^2$ is the total magnitude in
$A_i(s)$, for  $A \in \{L,R\}$, $i=0,1$.
  The figures show the evolution of $|L_0(s)|$ and
$|R_0(s)|$ (left)  and $|L_1(s)|$ and
$|R_1(s)|$ (right) within the interval $[\ap-\delta, \ap+\delta]$.
}
  \label{fig:anti-crossing}
\end{figure}

By definition, there is an exchange between the same arm of the two
levels.
For $\epsilon_v>0$, we say the $\AC{L,R}$ is a
$(1-\epsilon_v)$-full-exchange anti-crossing if
\begin{align}
  \label{eq:13}
  \begin{cases}
    |c_k (\apm)| \doteq_{\epsilon_v}  |d_k (\app)| \mbox{ for all } k \in
    \tilde{L} \\
     |c_k (\app)| \doteq_{\epsilon_v}  |d_k (\apm)| \mbox{ for all } k \in
    \tilde{R} \\
  \end{cases}
\end{align}
such that 
        \begin{align}
%          \label{eq:16}
          \label{eq:full-exchange}
        \begin{cases}
         |\bra{L_0(\apm)}L_1(\app)\rangle| \approx |L_0(\apm)| \approx |L_1(\app)|\\
           |\bra{R_1(\apm)}R_0(\app)\rangle| \approx|R_1(\apm)| \approx |R_0(\app)|
        \end{cases}
      \end{align}
where $A \doteq_{\epsilon_v} B$ means $|A-B| \le \epsilon_v$,      
      This condition in Eq. (\ref{eq:full-exchange}) is required to
      derive both 
      the SAS
      property in (C4) in Section \ref{sec:char} and the necessary conditions (Eq.~(\ref{eq:nece}) in Section \ref{sec:nece}.
      
In this paper, we fix a small $\epsilon_v>0$, and assume an
$\AC{L,R}$ satisfies this condition when we do not mention
explicitly.

The earlier definition in \cite{Choi2020} is a special case with
  $L=\{\ket{E_1}\}$ (corresponds to $\ket{\FS}\mdef \ket{E_1}$), and $R=\{\ket{E_0}\}$ (corresponds
  to $\ket{\GS}\mdef \ket{E_0}$), without explicitly requiring the SAS
  property within $\epsilon$ at $\ap$ in the definition. 
We re-introduce the ``negligible'' sets $n(L)$  and $n(R)$ back to the
definition for the purpose of computing the \ACgap{}. 
The coefficients of states
        in $n(L), n(R)$ are ``negligible'' but not
        ``vanishing''. Indeed, we shall show that the
        ``negligible'' states are what contribute to
      the  gap size. (Similar to the idea that it is the high-order correction
      terms in the perturbation formula that contribute to the
      perturbative-crossing gap size.)

\subsubsection{New Characterizations of the Anti-crossing}
      \label{sec:char}
We derive some useful characteristics of the AC based on the
perturbation theory. 
The derivation is based on the idea
that  in the
neighborhood of $\ap$ the behavior of these two energy levels can be
approximately computed by non-degenerate perturbation theory for the
two lowest
energy levels while the other energy levels are far away enough to be
neglected.

\begin{proposition}
  \label{prop1}
 An $\AC{L,R}$ has the following four properties.
  \begin{description}
\item[(C1)] (''Hyperbolic-like curves'')  The two energy levels form two opposite parabolas around $\ap$: \footnote{Throughout this paper, for convenience, we will use the notation $\doteq$
such that $A \doteq B$ means $|A - B| \le \epsilon$ for some small error tolerance
$\epsilon > 0$.}
\begin{align}
  \label{eq:9}
\begin{cases}
  E_1(\ap+\lambda) \doteq \alpha \lambda^2 +\beta_1 \lambda + E_1(\ap)\\
          E_0(\ap+\lambda) \doteq  -\alpha \lambda^2 +\beta_0 \lambda + E_0(\ap)
\end{cases}
\end{align}
for $\lambda \in [-\delta,\delta]$, where $\alpha>0$,
$\alpha =  \frac{|\bra{E_1(\ap)}\delta H\ket{E_0(\ap)}|^2}{\Delta_{10}(\ap)}>0$
and $\beta_1 \simeq \beta_0$.
Thus, we have $E_1(\ap+\lambda) - E_0(\ap+\lambda) \doteq 2\alpha \lambda^2 +
\Delta_{10}(\ap)$.
The gap at the AC point is a local minimum of the gap
spectrum.

\item[(C2)] (''States-exchange'') When $s$ goes from $\ap -\delta$ to $\ap+\delta$ , the state $\ket{E_0(s)}$ shifts
  from $L$ to $R$ while $\ket{E_1(s)}$ shifts from $R$ to $L$:
$|c_l(s)| \downarrow$ for all $l \in L$, $|c_r(s)| \uparrow$ for all
$r \in R$; while $|d_r(s)| \downarrow$ for all $r \in R$, $|d_l(s)| \uparrow$ for all
$l \in L$, where $\downarrow$ denotes decreasing and $\uparrow$
denotes increasing.
In particular, for $s :\ap -\delta \leadsto \ap+\delta$,
$|L_0(s)| \downarrow$, $|R_0(s)| \uparrow$ while $|R_1(s)|
  \downarrow$, $|L_1(s)| \uparrow$. 
% where
% $|L_0(s)|= \sum_{l \in L} |c_l(s)|^2$, $|R_0(s)|=\sum_{r \in R}
% |c_r(s)|^2$,
% and $|L_1(s)|=\sum_{l \in L} |d_l(s)|^2$, $|R_1(s)|=\sum_{r \in R}
% |d_r(s)|^2$. 

\item[(C3)] (``Opposite signs'') One arm is of the same sign in the two states; the other arm is
  of the opposite sign in the two states. That is, it is either
  \begin{align}
    \label{eq:8}
    \begin{cases}
       L_0 \mbox{ and } L_1 \mbox{ in the same sign: } \sign(c_l(s))\sign( d_l(s)) =+1 \mbox{ for all } l \in L\\
      R_0\mbox{ and } R_1 \mbox{ in the opposite sign: } \sign(c_r(s))\sign( d_r(s)) =-1 \mbox{ for all } r \in R 
    \end{cases}
  \end{align}
  or the reverse.
  % \begin{align}
  %   \label{eq:8}
  %   \begin{cases}
  %     c_l(s) d_l(s) <0 \mbox{ for all } l \in L\\
  %     c_r(s) d_r(s) >0 \mbox{ for all } r \in R
  %   \end{cases}
  % \end{align}

\item[(C4)] (``Symmetry-and-anti-symmetry(SAS) at $\ap$'')
  \begin{align}
    \begin{cases}
      c_l(\ap) \doteq d_l(\ap) \mbox{ for } l \in \widetilde{L}\\
         c_r(\ap) \doteq - d_r(\ap) \mbox{ for } r \in \widetilde{R}
       \end{cases}
    or 
    \begin{cases}
      c_l(\ap) \doteq -d_l(\ap) \mbox{ for } l \in \widetilde{L}\\
         c_r(\ap) \doteq d_r(\ap) \mbox{ for } r \in \widetilde{R}
    \end{cases}
  \end{align}

Thus, we have
$\ket{\widetilde{L}(\ap)} \doteq \ket{\widetilde{L_0}(\ap)} \doteq
(+/-) \ket{\widetilde{L_1}(\ap)}$ and $\ket{\widetilde{R}(\ap)} \doteq
\ket{\widetilde{R_0}(\ap)} \doteq (-/+) \ket{\widetilde{R_1}(\ap)}$,
i.e.
\begin{align*}
%  \label{eq:10}
  (*)
  \begin{cases}
    \ket{E_0(\ap)} \doteq \ket{\widetilde{L}(\ap)} \textcolor{red}{+} \ket{\widetilde{R}(\ap)}\\
           \ket{E_1(\ap)} \doteq \ket{\widetilde{L}(\ap)} \textcolor{red}{-} \ket{\widetilde{R}(\ap)}
         \end{cases}
  \mbox{ or } (**)
  \begin{cases}
    \ket{E_0(\ap)} \doteq \ket{\widetilde{L}(\ap)} +\ket{\widetilde{R}(\ap)}\\
           \ket{E_1(\ap)} \doteq \textcolor{red}{-} \ket{\widetilde{L}(\ap)}+ \ket{\widetilde{R}(\ap)}
  \end{cases}
\end{align*}
Furthermore, $|A_i(\ap)| \in [1/2-\gamma/2, 1/2+\gamma/2]$, for $A\in
\{L,R\}$, $i=0,1$.
  \end{description}
\end{proposition}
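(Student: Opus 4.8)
The whole proposition follows from reducing $\ham(s)$, near $s=\ap$, to an effective $2\times 2$ model on $\mathrm{span}\{\ket{E_0(s)},\ket{E_1(s)}\}$. First I would invoke condition (ii) --- the spectator levels sit at $\Delta_{k0}(s)\gg\Delta_{10}(s)$ for $k>1$ --- to perform a Schur-complement/Feshbach reduction onto this pair, the corrections being $O(\Delta_{10}/\Delta_{k0})$; equivalently, one runs non-degenerate perturbation theory for the two lowest levels and drops the rest, exactly as the paper indicates. Next I would use conditions (iii)--(iv) together with the assumed $(1{-}\epsilon_v)$-full-exchange property Eq.~(\ref{eq:full-exchange}) to fix, once and for all on $[\apm,\app]$, a real orthonormal \emph{diabatic} pair $\{\ket{\tilde\ell},\ket{\tilde r}\}$ spanning (up to $O(\gamma+\epsilon_v)$) the same subspace, with $\ket{\tilde\ell}$ supported on $\widetilde L$ and $\ket{\tilde r}$ on $\widetilde R$: the near-parallelism $|\langle L_0(\apm)|L_1(\app)\rangle|\approx|L_0(\apm)|\approx|L_1(\app)|$ is precisely what forces the $\widetilde L$-components of $\ket{E_0(s)}$ and $\ket{E_1(s)}$ to stay proportional to one fixed vector across the window, and likewise for $\widetilde R$. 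In this basis $\ham(\ap+\lambda)$ acts, to leading order, as
\[
\begin{pmatrix}\epsilon_L(\lambda)&g(\lambda)\\ g(\lambda)&\epsilon_R(\lambda)\end{pmatrix},
\]
with $\epsilon_L,\epsilon_R$ nearly affine in $\lambda$ with \emph{distinct} slopes, and $g(\lambda)$ continuous of one fixed sign on $[-\delta,\delta]$; WLOG $g>0$, the opposite sign producing the ``reverse'' alternatives in (C3)--(C4).

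\textbf{SAS at the crossing (C4), then the parabolas (C1).} Condition (i) says $\Delta_{10}$ has an interior minimum at $\lambda=0$. In the $2\times 2$ model $\Delta_{10}(\ap+\lambda)=\sqrt{(\epsilon_L-\epsilon_R)^2+4g^2}$, minimized exactly where $\epsilon_L=\epsilon_R$; hence $\epsilon_L(0)\doteq\epsilon_R(0)$, the matrix at $\ap$ has equal diagonal entries, and its eigenvectors must be the balanced combinations $(\ket{\tilde\ell}\mp\ket{\tilde r})/\sqrt2$. Reading off components gives $c_l(\ap)\doteq d_l(\ap)$ for $l\in\widetilde L$ and $c_r(\ap)\doteq -d_r(\ap)$ for $r\in\widetilde R$ (or the reverse), i.e.\ ($*$)--($**$), while $|L_i(\ap)|=\tfrac12\|\widehat{P}_L\ket{\tilde\ell}\|^2$ and $|R_i(\ap)|=\tfrac12\|\widehat{P}_R\ket{\tilde r}\|^2$ lie in $[\tfrac12-\tfrac\gamma2,\tfrac12+\tfrac\gamma2]$ since $\ket{\tilde\ell}$ (resp.\ $\ket{\tilde r}$) sits in $\widetilde L$ (resp.\ $\widetilde R$) up to $\gamma$. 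For (C1), apply second-order non-degenerate perturbation theory to $\ham(\ap+\lambda)=\ham(\ap)+\lambda\,\delta H$: the dominant quadratic shift of $E_1$ is the level-$0$ term $+\lambda^2|\bra{E_1(\ap)}\delta H\ket{E_0(\ap)}|^2/\Delta_{10}(\ap)=\alpha\lambda^2$ (positive: $E_1$ is above), and that of $E_0$ is the mirror $-\alpha\lambda^2$; the $k>1$ contributions are $O(1/\Delta_{k0})$ and negligible by (ii). First order gives $\beta_i\lambda$ with $\beta_i=\bra{E_i(\ap)}\delta H\ket{E_i(\ap)}$, and $\beta_0\simeq\beta_1$ because, substituting the SAS decomposition just obtained, $\beta_0-\beta_1\doteq 4\bra{\widetilde L(\ap)}\delta H\ket{\widetilde R(\ap)}$, whose $H_P$ part vanishes ($H_P$ diagonal, $\widetilde L\cap\widetilde R=\emptyset$) and whose $H_D$ part is $O(\gamma)$ because the only $H_D$-bonds joining $\widetilde L$ to $\widetilde R$ touch the negligible boundary sets $n(L),n(R)$. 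Subtracting the parabolas yields $\Delta_{10}(\ap+\lambda)\doteq 2\alpha\lambda^2+\Delta_{10}(\ap)$, visibly minimized at $\lambda=0$.

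\textbf{States-exchange (C2) and opposite signs (C3).} Both are read off the $2\times2$ eigenvectors. Write $\ket{E_0(\ap+\lambda)}\doteq\cos\phi\,\ket{\tilde\ell}+\sin\phi\,\ket{\tilde r}$, $\ket{E_1(\ap+\lambda)}\doteq-\sin\phi\,\ket{\tilde\ell}+\cos\phi\,\ket{\tilde r}$, with $\tan\phi(\lambda)=(E_0(\ap+\lambda)-\epsilon_L(\lambda))/g(\lambda)$. Since $\epsilon_L-\epsilon_R$ is monotone and changes sign at $\lambda=0$, $\phi$ runs \emph{monotonically} from $\approx 0$ at $\lambda=-\delta$ (where $\epsilon_L<\epsilon_R$, matching $\ket{E_0}\simeq\ket{L_0}$, $\ket{E_1}\simeq\ket{R_1}$ of (iv)) through $-\pi/4$ at $\lambda=0$ to $\approx-\pi/2$ at $\lambda=+\delta$. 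Hence, componentwise, $|c_l(s)|=|\cos\phi|\,|\langle l|\tilde\ell\rangle|\downarrow$, $|c_r(s)|=|\sin\phi|\,|\langle r|\tilde r\rangle|\uparrow$, $|d_r(s)|\downarrow$, $|d_l(s)|\uparrow$, and summing, $|L_0(s)|\downarrow$, $|R_0(s)|\uparrow$, $|R_1(s)|\downarrow$, $|L_1(s)|\uparrow$: this is (C2). For (C3), multiply the two levels' components of a fixed diabatic vector: $c_l(s)\,d_l(s)\doteq-\cos\phi\sin\phi\,|\langle l|\tilde\ell\rangle|^2>0$ since $\phi\in(-\pi/2,0)$, whereas $c_r(s)\,d_r(s)\doteq\sin\phi\cos\phi\,|\langle r|\tilde r\rangle|^2<0$; so the $L$-arm carries the same sign in $\ket{E_0}$ and $\ket{E_1}$, the $R$-arm opposite signs (the reverse if $g<0$).

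\textbf{Main obstacle.} The perturbative skeleton is routine; the real labour is the bookkeeping of three independent error scales --- $\gamma$ from (iii)/(iv), $\epsilon_v$ from full-exchange, and $\max_{k>1}\Delta_{10}/\Delta_{k0}$ from (ii) --- so that every $\doteq$, $\simeq$, $\approx$ in (C1)--(C4) holds with an $\epsilon$ that genuinely vanishes as $\gamma,\epsilon_v\to 0$ and the spectator levels recede. Two points need the most care: (a) that \emph{one} fixed diabatic basis suffices on the \emph{whole} interval $[-\delta,\delta]$ rather than only infinitesimally near $\ap$ --- this is exactly what Eq.~(\ref{eq:full-exchange}) is there to buy, and why the AC-width is defined as the $\max$--$\min$ of the four corner overlaps; and (b) that $g(\lambda)$ keeps a constant sign across the window, the one place the sign statements (C3)--(C4) could fail if $\delta$ were taken too large. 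A secondary nuisance is that second-order perturbation theory in $\lambda$ is literally valid only for $|\lambda|\lesssim\Delta_{10}(\ap)/|\bra{E_1(\ap)}\delta H\ket{E_0(\ap)}|$; beyond that the $2\times2$ square root leaves its quadratic regime, so (C1) is best stated via the exact $2\times2$ eigenvalues with the parabolas understood as their Taylor expansion valid up to the AC-width.
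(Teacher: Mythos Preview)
Your proof is correct and takes a genuinely different route from the paper's. The paper works directly with first-order Rayleigh--Schr\"odinger perturbation theory about $\ap$: from the standard formulas it writes $c_k(\ap+\lambda)\doteq c_k(\ap)-\lambda\,\frac{\delta H_{10}(\ap)}{\Delta_{10}(\ap)}\,d_k(\ap)$ (and the mirror for $d_k$), then reads off (C1) from the second-order energy formula, (C2) from condition (iv) plus the linear drift in $c_k,d_k$, (C3) by a case split on the sign of $\delta H_{10}(\ap)$, and (C4) by summing the $\lambda=\pm\delta$ relations to get $2c_k(\ap)\doteq c_k(\apm)+c_k(\app)$ and then invoking full-exchange. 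You instead build a fixed diabatic basis $\{\ket{\tilde\ell},\ket{\tilde r}\}$ up front (using full-exchange to justify it holds across the whole window), diagonalize the resulting $2\times2$ matrix exactly via a mixing angle $\phi(\lambda)$, and extract (C4) first from the gap-minimum condition $\epsilon_L(0)\doteq\epsilon_R(0)$, then (C2)--(C3) from the monotone sweep of $\phi$ through $(-\tfrac\pi2,0)$. Your approach buys a cleaner monotonicity argument for (C2) and a more conceptual derivation of (C4); the paper's buys economy, since it never has to argue that a single diabatic frame is valid on all of $[-\delta,\delta]$ (the point you correctly flag as the main obstacle). Both routes use the same perturbative input for (C1), and both ultimately lean on Eq.~(\ref{eq:full-exchange}) for (C4), just at different stages.
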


\paragraph{AC-signature.} We refer to the two reverse curves in the Property (C2) as the
``AC-signature''  to describe the
``sharp exchange'' between the ground state and the first excited
state at the anti-crossing interval. This is more general than the Hamming weight
operator $<\!HW\!>$  (when the
problem ground state is assumed to be
the zero vector)  introduced in \cite{tunneling-MAL} that describes
the changes of the ground state wavefunction (without describing the reverse changes of
the first excited state wavefunction).

\paragraph{Significance of the Sign.} It is worthwhile to emphasize
the signs of the coefficients of the
two arms in Property (C3) are critical. There are two possible cases and are depicted
in Figure~\ref{fig:arms-same-different}. The two anti-crossings can be
connected/bridged through a common arm which is of the opposite sign.
That is, 
the bridge consists of both positive and negative amplitudes
(requiring the 
proper-non-stoquastic condition). 
Furthermore, since one arm of the AC must be in the
  opposite sign, if the ground state only allows the positive sign (as
   in the stoquastic case), the possible combinations (+/+, -/+)
  will be less in the stoquastic case than the possible combinations
  (+/+, -/-, -/+, +/-) in the non-stoquastic cases. This partially explains why one
  would observe more ACs in the non-stoquastic case.
%  \vnote{revision}

\begin{figure}[t]
  \centering
$$
  \begin{array}[h]{cc}
    % \hline\\
    \includegraphics[width=0.32\textwidth]{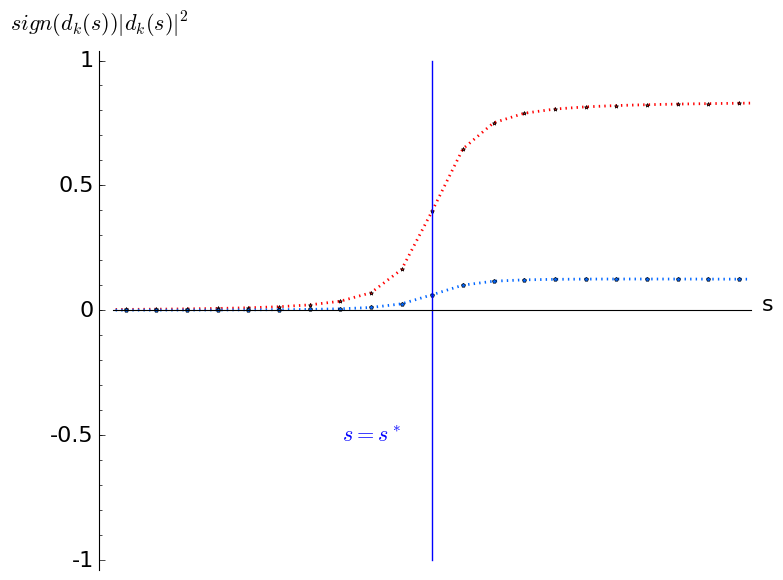} &  
\includegraphics[width=0.32\textwidth]{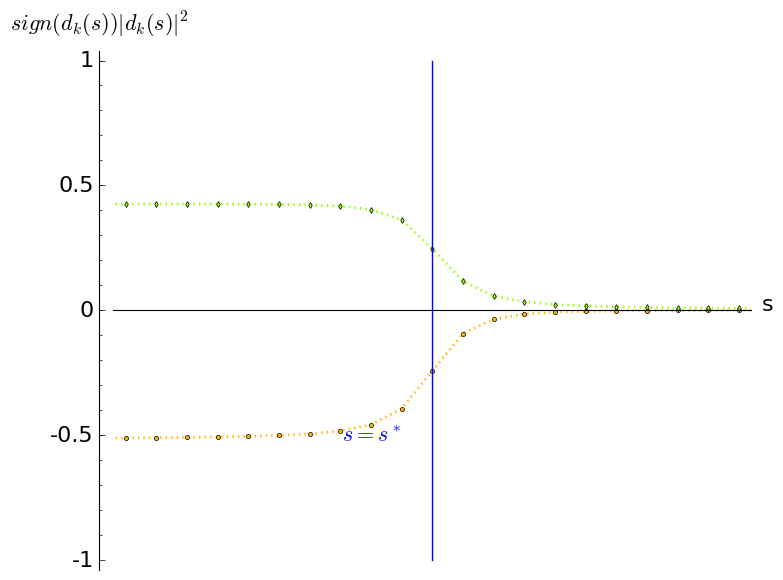} \\
    (a) L_1:|d_l(s)| \uparrow & (c) R_1: |d_r(s)| \downarrow\\
         \includegraphics[width=0.32\textwidth]{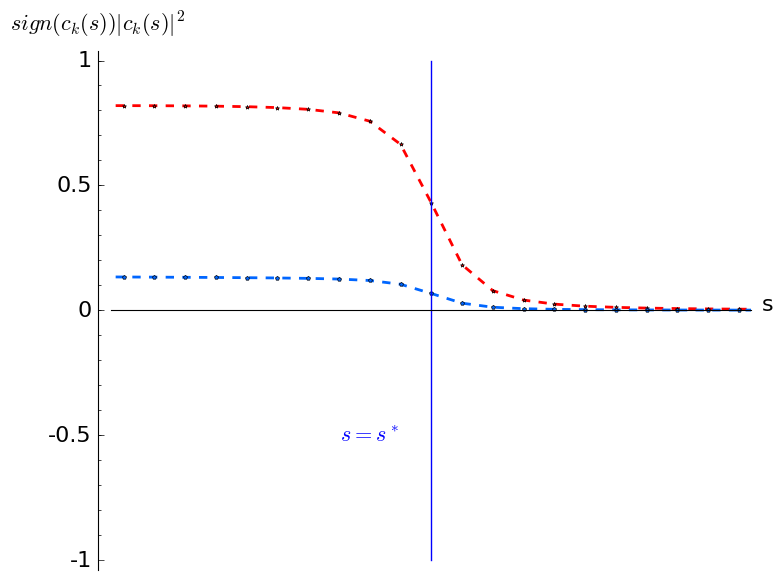} &  
\includegraphics[width=0.32\textwidth]{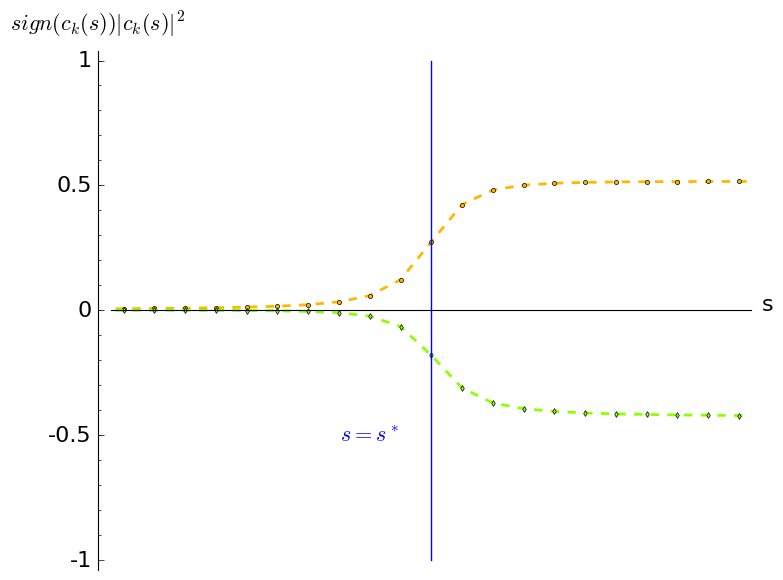} \\
    (b) L_0:|c_l(s)| \downarrow & (d) R_0:|c_r(s)| \uparrow\\
 \hline
  \end{array}
  $$
   $$
  \begin{array}[h]{cc}
    \includegraphics[width=0.32\textwidth]{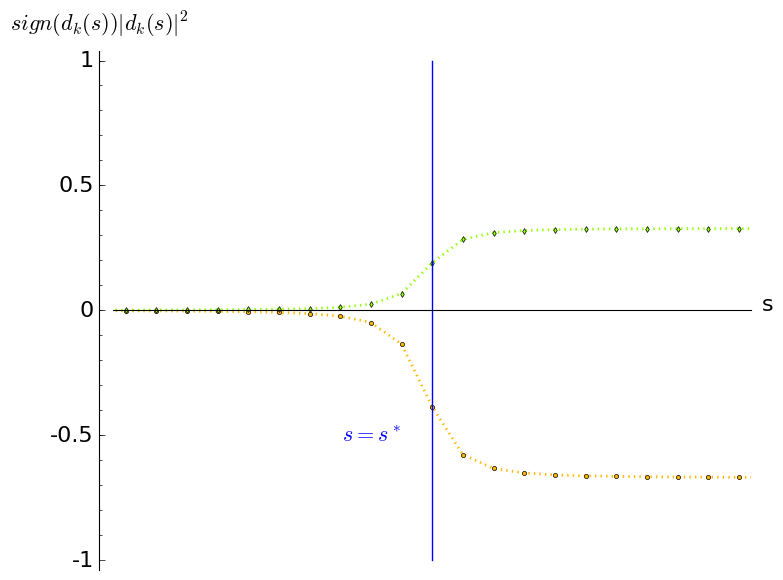} &  
\includegraphics[width=0.32\textwidth]{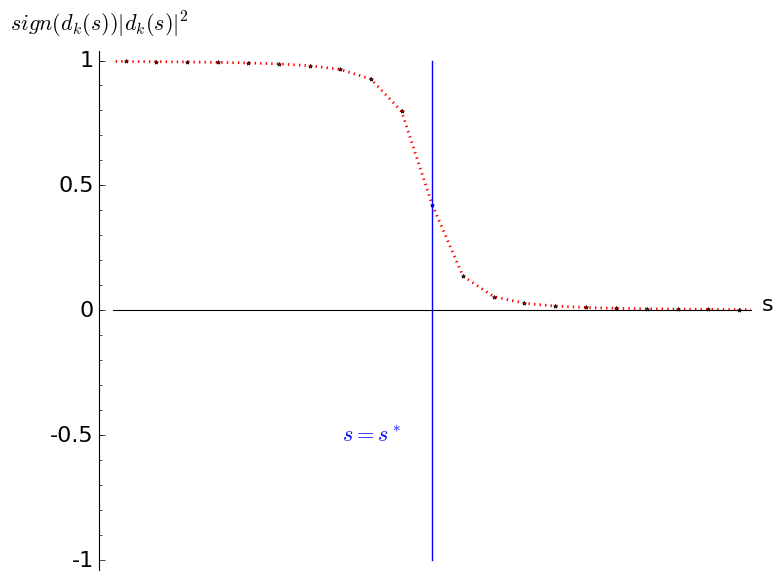} \\
(e) L'_1: |d_l(s)| \uparrow & (g) R'_1: |d_r(s)| \downarrow \\
 \includegraphics[width=0.32\textwidth]{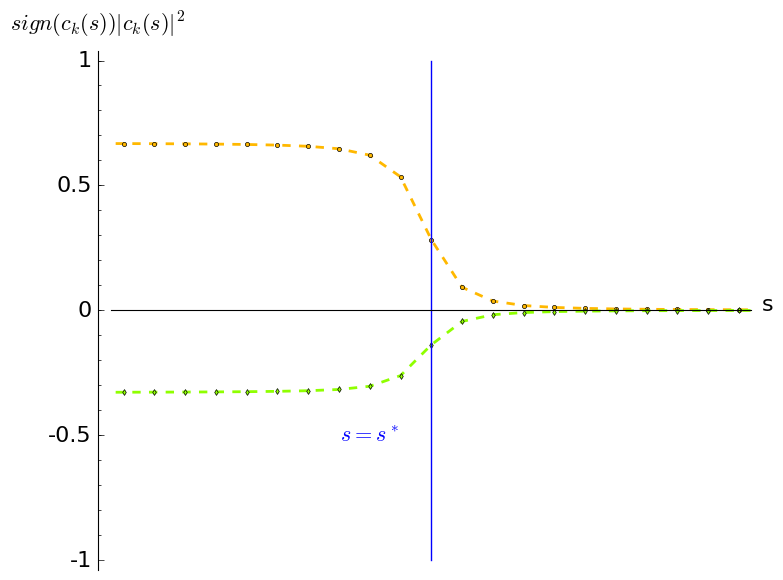} &  
\includegraphics[width=0.32\textwidth]{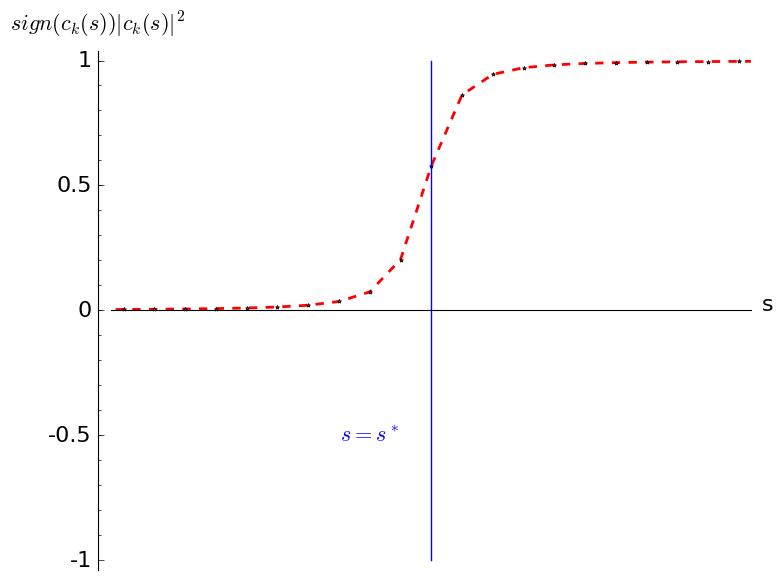} \\
(f) L'_0: |c_l(s)| \downarrow & (h) R'_0: |c_r(s)| \uparrow \\
\hline
  \end{array}
$$
  \caption{(Above)$L_1$ in (a)  and $L_0$ in (b) in the same sign: $\sign(c_l)\sign(d_l)=+1
    \forall l \in L$; $R_1$ in (c)  and $R_0$ in (d)  in the different sign: $\sign(c_r)\sign(d_r)=-1 \forall
    r \in R$.
    (Below) $L'_1$ in (e)  and $L'_0$ in (f) in the different sign; $R'_1$
    in (g)  and $R'_0$ in (h)  in the same sign.
    These are examples from a bridged double-AC: $\AC{L,R}$ and
    $\AC{L',R'}$ where $R=L'$ is the common arm of the opposite
    sign which is a necessary condition for a bridged double-AC to be formed. 
}
  \label{fig:arms-same-different}
\end{figure}

\subsubsection{Necessary Conditions for the Formation of the
  Anti-crossing}
\label{sec:nece}
In this section we derive the necessary conditions for the
formation of an $\AC{L,R}$.
%at $\ap$ with width $\delta$.

\begin{theorem}
The necessary conditions for an $\AC{L,R} $ at $\ap$ with width $\delta$ are:
 \begin{align}
   \label{eq:nece}
   \begin{cases}
  \OL{\delta H}{\ket{E_0(\apm)}} -\OL{\delta H}
  {\ket{E_1(\apm)}}  \approx  2 \eta^2\frac{\delta }{\Delta_{01}(\ap)} \\
  \OL{\delta H}{\ket{E_1(\app)}} -\OL{\delta H}
  {\ket{E_0(\app)}} \approx 2 \eta^2\frac{\delta }{\Delta_{01}(\ap)} 
   \end{cases}
 \end{align}
 where $\eta =\bra{E_1(\ap)}\delta H\ket{E_0(\ap)}$
 and 
 $\OL{\delta H}{\ket{\psi}} \mdef
\bra{\psi}\delta H\ket{\psi}$. (Recall: $\delta H = H_P - H_D$.)

In particular, the necessary conditions described by $L,R$ are approximately:
 \begin{align}
   \label{eq:17}
   \begin{cases}
     \OL{\delta H}{L_0,\apm}  > \OL{\delta H}{R_1,\apm} \\
     \OL{\delta H}{L_1,\app}  > \OL{\delta H}{R_0,\app} 
   \end{cases}
 \end{align}
where $\OL{\delta H}{\ket{E_i(s^*)}}$ is approximated by $\OL{\delta
  H}{A_i,s^*} \mdef \bra{A_i(s^*)} \delta\ham\ket{A_i(s^*)}$ when
$\ket{E_i(s^*)} \simeq \ket{A_i(s^*)}$, for $i\in \{0,1\}$, $A \in \{L,R\},
s^* \in \{\apm, \app\}$.
\label{thm:nece}
\end{theorem}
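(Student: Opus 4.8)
The plan is to obtain both parts of the statement as essentially a one-line consequence of the Hellmann--Feynman theorem combined with the parabolic fit of Property~(C1). Since $\ham(s) = (1-s)H_D + sH_P = H_D + s\,\delta H$, we have $\partial_s \ham(s) = \delta H$, so Hellmann--Feynman gives $\partial_s E_i(s) = \bra{E_i(s)}\delta H\ket{E_i(s)} = \OL{\delta H}{\ket{E_i(s)}}$ for $i=0,1$. Hence, writing $E_i' \equiv \partial_s E_i$, the left-hand sides of~\eqref{eq:nece} equal the slope differences $E_0'(\apm) - E_1'(\apm)$ and $E_1'(\app) - E_0'(\app)$, and the only remaining task is to evaluate these slopes from~(C1).

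First I would differentiate the curves~\eqref{eq:9}. Writing $s = \ap + \lambda$, Property~(C1) gives $E_1(\ap+\lambda) \doteq \alpha\lambda^2 + \beta_1\lambda + E_1(\ap)$ and $E_0(\ap+\lambda) \doteq -\alpha\lambda^2 + \beta_0\lambda + E_0(\ap)$, with $\alpha = |\eta|^2/\Delta_{10}(\ap)$ and $\beta_0\simeq\beta_1$; hence $E_1'(\ap+\lambda)\doteq 2\alpha\lambda+\beta_1$ and $E_0'(\ap+\lambda)\doteq -2\alpha\lambda+\beta_0$. Evaluating at $\lambda=-\delta$ (i.e.\ $s=\apm$) and subtracting, the linear terms cancel thanks to $\beta_0\simeq\beta_1$, leaving $E_0'(\apm)-E_1'(\apm)\doteq 4\alpha\delta = 4|\eta|^2\delta/\Delta_{10}(\ap)$; evaluating at $\lambda=+\delta$ likewise gives $E_1'(\app)-E_0'(\app)\doteq 4\alpha\delta$. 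Together with Hellmann--Feynman this is~\eqref{eq:nece} (the displayed prefactor $2\eta^2\delta/\Delta_{01}(\ap)$ being this quantity up to an order-one constant, with $\Delta_{01}(\ap)$ read as the positive gap); in particular both differences are strictly positive since $\alpha,\delta>0$.

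To pass to the arm-level conditions~\eqref{eq:17} I would invoke condition~(iv) in the definition of $\AC{L,R}$. Before the crossing, $\ket{E_0(\apm)}\simeq\ket{L_0(\apm)}$ and $\ket{E_1(\apm)}\simeq\ket{R_1(\apm)}$, so $\OL{\delta H}{\ket{E_0(\apm)}}\approx\OL{\delta H}{L_0,\apm}$ and $\OL{\delta H}{\ket{E_1(\apm)}}\approx\OL{\delta H}{R_1,\apm}$; since their difference is positive by~\eqref{eq:nece}, we get $\OL{\delta H}{L_0,\apm}>\OL{\delta H}{R_1,\apm}$. After the crossing, $\ket{E_1(\app)}\simeq\ket{L_1(\app)}$ and $\ket{E_0(\app)}\simeq\ket{R_0(\app)}$ give, symmetrically, $\OL{\delta H}{L_1,\app}>\OL{\delta H}{R_0,\app}$.

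The differentiation itself is routine; the substance of the argument — and the part needing the most care — is that the whole chain rests on Property~(C1), so one must be sure the two-level non-degenerate perturbative picture behind~\eqref{eq:9} holds uniformly on the entire AC-width $[\apm,\app]$ (this is where condition~(ii), all other levels far, and the requirement that $\delta$ stays within the radius of validity of the parabolic approximation are used), and that the cancellation $\beta_0\simeq\beta_1$ of the two diabatic background slopes genuinely holds to the order claimed, so that the leading surviving term is the quadratic one $\propto\alpha\delta$ rather than being swamped by $\beta_0-\beta_1$. I would also spell out that replacing $\OL{\delta H}{\ket{E_i(s^*)}}$ by the arm-projected $\OL{\delta H}{A_i,s^*}$ in~\eqref{eq:17} costs only an error controlled by $\gamma$ (and by $\epsilon_v$ from the full-exchange condition), so that the ``$\approx$'' and ``$>$'' in the statement are quantitatively meaningful.
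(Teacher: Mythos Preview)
Your proof is correct and takes a genuinely different route from the paper's. The paper does not use Hellmann--Feynman; instead it first proves an exact identity,
\[
E_i(s_0+\lambda) - E_j(s_0) \;=\; \lambda\,\frac{\bra{E_j(s_0)}\,\delta H\,\ket{E_i(s_0+\lambda)}}{\langle E_j(s_0)\ket{E_i(s_0+\lambda)}},
\]
obtained by sandwiching $(H(s_0+\lambda)-H(s_0))\ket{E_i(s_0+\lambda)}=\lambda\,\delta H\ket{E_i(s_0+\lambda)}$ against $\bra{E_j(s_0)}$, applies it across the whole AC interval ($s_0=\apm$, $\lambda=2\delta$), and then invokes the full-exchange condition~\eqref{eq:full-exchange} directly to argue $\langle E_0(\apm)\ket{E_1(\app)}\approx 1$ and $\bra{E_0(\apm)}\delta H\ket{E_1(\app)}\approx\OL{\delta H}{\ket{E_0(\apm)}}$; only at the final subtraction does it appeal to~(C1) for the energy differences. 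Your route is more economical: Hellmann--Feynman turns the expectation values into slopes \emph{exactly}, so you need only differentiate~(C1) and use $\beta_0\simeq\beta_1$ (whence~(C4), so full-exchange enters only indirectly). The paper's identity has the virtue of being non-perturbative in $\lambda$, but the full-exchange step it uses to reduce off-diagonal matrix elements to diagonal ones is an approximation of the same order as the one you make by trusting the parabola out to $\pm\delta$. Your factor-of-two discrepancy in the prefactor ($4\alpha\delta$ versus the paper's $2\alpha\delta$) is real and stems from these different approximation chains; since the statement carries~$\approx$, this is immaterial, and you handled it appropriately.
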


Since $\delta H= H_P - H_D$, when $\OL{H_P}{L_i,s^*}  \approx
\OL{H_P}{R_{\bar{i}},s^*}$, for $i \in \{0,1\}$, $s^* \in \{\apm,
\app\}$,  the necessary conditions in Eq.~(\ref{eq:17}) become $\OL{-H_D}{L_i,s^*}  >
\OL{-H_D}{R_{\bar{i}},s^*}$. When there is no confusion, we drop
the subscripts $i$ and $\bar{i}$  of $L, R$.  When $H_D=H_X$, $\OL{-H_X}{A,s^*} \approx
\bra{\bar{A}(s^*)} \sum_{i} \sigma_i^x \ket{\bar{A}(s^*)}$ is approximated
by the overlap between $A$ and $\LENS(A)$. Thus the condition that   $\OL{-H_X}{L,s^*}  >
\OL{-H_X}{R,s^*}$ would mean that $L$ has {\em more} LENS than $R$ as
observed in \cite{Choi2020}. Furthermore, we show that with some extra
conditions, the necessary conditions are also sufficient. We also
compare the necessary conditions with the arguments of the
anti-crossing presented in the AQA algorithm for random Exact-Cover 3 instances by Altshuler~et~al in \cite{AKR}.
In our subsequent work, we will further elaborate these results.

\begin{corollary}
The  \ACgap{} of an $\AC{L,R} $ at $\ap$ with width $\delta$ is
given by
  \begin{align}
    \label{eq:coro}
    \Delta_{10}(\ap) \approx
    \begin{cases}
        2 \eta^2\frac{\delta }{\OL{\delta H}{\ket{E_0(\apm)}} -\OL{\delta H}
      {\ket{E_1(\apm)}}}\\
  2 \eta^2\frac{\delta }{\OL{\delta H}{\ket{E_1(\app)}} -\OL{\delta H}
  {\ket{E_0(\app)}}}
    \end{cases}
  \end{align}
  \label{coro}
\end{corollary}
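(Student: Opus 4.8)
The plan is to obtain the Corollary directly from Theorem~\ref{thm:nece} by solving Eq.~(\ref{eq:nece}) for the gap. Each of the two lines of~(\ref{eq:nece}) already equates a difference of $\delta H$-expectations at a corner of the anti-crossing interval --- namely $\OL{\delta H}{\ket{E_0(\apm)}} - \OL{\delta H}{\ket{E_1(\apm)}}$ in the first line and $\OL{\delta H}{\ket{E_1(\app)}} - \OL{\delta H}{\ket{E_0(\app)}}$ in the second --- with the quantity $2\eta^2\delta/\Delta_{01}(\ap)$. The necessary conditions~(\ref{eq:17}) say precisely that each such difference is positive (and bounded away from $0$ on the relevant side of $\ap$), so its reciprocal is meaningful; dividing through and reading $\Delta_{01}(\ap)$ as the magnitude of the spectral gap yields the two displayed expressions for $\Delta_{10}(\ap)$ in~(\ref{eq:coro}). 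No separate consistency argument is needed: the two formulas agree because the two left-hand differences in~(\ref{eq:nece}) are themselves approximately equal.

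As an independent check I would also derive~(\ref{eq:coro}) without invoking the full strength of Theorem~\ref{thm:nece}, using only Property~(C1) of Proposition~\ref{prop1} and the Hellmann--Feynman theorem. Since $\tfrac{d}{ds}\ham(s) = H_P - H_D = \delta H$, Hellmann--Feynman gives $\tfrac{d}{ds}E_i(s) = \bra{E_i(s)}\delta H\ket{E_i(s)} = \OL{\delta H}{\ket{E_i(s)}}$. Differentiating the parabolic approximations of~(\ref{eq:9}) and evaluating the slopes at $\lambda = \pm\delta$ writes each of the four ``corner'' expectations $\OL{\delta H}{\ket{E_i(s^*)}}$, $s^*\in\{\apm,\app\}$, as a linear function of $\alpha$, $\delta$, and the drift parameter $\beta_i$. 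Subtracting the $E_0$- and $E_1$-slopes at a common endpoint cancels the linear term (using $\beta_0\simeq\beta_1$) and leaves a multiple of $\alpha\delta$; substituting $\alpha = |\bra{E_1(\ap)}\delta H\ket{E_0(\ap)}|^2/\Delta_{10}(\ap) = \eta^2/\Delta_{10}(\ap)$ and solving for $\Delta_{10}(\ap)$ reproduces~(\ref{eq:coro}), up to the overall normalization constant fixed by the conventions used to state~(\ref{eq:9}) and~(\ref{eq:nece}).

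I do not expect a real obstacle in the Corollary itself; all the substance has already been spent on Theorem~\ref{thm:nece} and Proposition~\ref{prop1} (the two-level non-degenerate perturbation analysis around $\ap$, the full-exchange hypothesis of~(\ref{eq:full-exchange}), and the control of the ``negligible but non-vanishing'' contributions of the states in $n(L)$ and $n(R)$, which are exactly what set the scale of $\Delta_{10}(\ap)$). The only points that need care are bookkeeping: that the compounded approximations ($\doteq$, $\approx$, $\simeq$) stay controlled, that $\Delta_{01}(\ap)$ in~(\ref{eq:nece}) is consistently read as the positive gap so that inverting it is legitimate, and that the overall numerical constant is carried through in agreement with the convention used to state~(\ref{eq:9}) and~(\ref{eq:nece}).
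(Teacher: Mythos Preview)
Your primary derivation is exactly the paper's: the paper states only that ``the above corollary follows directly from Eq.~(\ref{eq:nece}),'' i.e.\ one simply solves each line of~(\ref{eq:nece}) for $\Delta_{10}(\ap)$, precisely as you do. Your additional Hellmann--Feynman check via the derivatives of the parabolas in~(C1) is a genuinely different (and arguably cleaner) route that the paper does not take; it bypasses Lemma~\ref{lemma41} and the full-exchange estimate~(\ref{eq:2}) entirely, at the cost of leaving the overall numerical constant to be matched against the conventions in~(\ref{eq:9}) and~(\ref{eq:nece}), as you already note.
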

The above corollary follows directly from Eq.(\ref{eq:nece}).
The equation of \ACgap{} in Eq.~(\ref{eq:coro}) provides a description
how the \ACgap{} depends on the two factors: the numerator $\delta$
(anti-crossing width) and the difference between $L$ and $R$ in the
denominator. This will give us as a tool to 
study the effect on the gap size
  (without actually computing the gap size)
  by analyzing how the parameters of the anti-crossing  evolve
  as we vary one parameter in
the system Hamiltonian, including either
the parameters in the problem Hamiltonian or the \XX-coupler strength
in the driver Hamiltonian.
In particular, we use Corollary \ref{coro} to justify the Observation 1 in Section~\ref{sec:speedup}.

\subsubsection{\ACgap{} Bound}

In this section we derive  the analytical \ACgap{}  bound for the
anti-crossing that satisfies the SAS properties (for a  small
$\epsilon_v$) and also large $\gamma$ (such that the corresponding
$L,R$ are taking as large as possible).

% \begin{theorem}
%   \label{thm-gap}
%   % For $R=\{\ket{\GS}\}$ and $L = \{\ket{\FS}\}$,
%   Suppose that $R$ consists of the problem ground state $\ket{\GS}$
%   and possibly its low-energy neighboring states (LENS), and $L$ consists of
%   the lowest few almost degenerate  excited states and possibly their
%   LENS. Then  \ACgap{} 
%   $\Delta_{10}(\ap) = \Theta(\zeta^{\dist_{H_D}(L,R)})$ where
%   $0<\zeta<1$ depends on the anti-crossing width $\delta$, and the
%   energy of its left arm $L$ at $\apm$, $\EL(\apm)$, and the energy value of
%   its right arm $R$ at $\app$, $\ER(\app)$. 
% \end{theorem}

\begin{theorem}
  \label{thm-gap}
  % For $R=\{\ket{\GS}\}$ and $L = \{\ket{\FS}\}$,
  Suppose that $R$ consists of the problem ground state $\ket{\GS}$
  and $L$ consists of
  the lowest few almost degenerate  excited states. Then  \ACgap{} 
  $\Delta_{10}(\ap) = \Theta(\zeta^{\dist_{H_D}(L,R)})$ where
  $0<\zeta<1$ and $\dist_{H_D}(L,R)$ is the driver-distance between
  $L$ and $R$ (referred to as the AC-distance).
  %depends on the anti-crossing width $\delta$, and the
  %energy of its left arm $L$ at $\apm$, $\EL(\apm)$, and the energy value of
  %its right arm $R$ at $\app$, $\ER(\app)$. 
\end{theorem}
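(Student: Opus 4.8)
The plan is to bypass Corollary~\ref{coro} (the SAS structure makes $\eta=\bra{E_1(\ap)}\delta H\ket{E_0(\ap)}$ of order one and forces the exponential smallness into the AC-width $\delta$, so the corollary only re-expresses the problem) and instead to build an explicit $2\times 2$ effective Hamiltonian in a pair of \emph{localized diabatic states}, then to recognize $\ACgap$ as twice a tunneling matrix element that is a sum of driver-graph walks from $L$ to $R$ whose leading order is exactly $\dist_{H_D}(L,R)$.

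\emph{Step 1 (two-level reduction and the tunneling element).} With $R=\{\ket{\GS}\}$ and $L$ the lowest almost-degenerate excited classical states, the partition $2^{[N]}=\widetilde L\cup\widetilde R$ of the AC-definition is the natural basin partition. Let $\ket{\phi_R}$ and $\ket{\phi_L}$ denote the normalized ground states of $\ham(\ap)$ restricted to $\widetilde R$ and to $\widetilde L$; they are orthogonal since $\widetilde L\cap\widetilde R=\emptyset$. Conditions (ii)--(iv) of the AC-definition together with (C4) of Proposition~\ref{prop1} identify, to leading order, $\ket{E_0(\ap)}$ with $\tfrac1{\sqrt2}\bigl(\ket{\phi_L}+\ket{\phi_R}\bigr)$ and $\ket{E_1(\ap)}$ with $\tfrac1{\sqrt2}\bigl(\ket{\phi_L}-\ket{\phi_R}\bigr)$; since $\tfrac1{\sqrt2}(\ket{\phi_L}\pm\ket{\phi_R})$ can be (approximate) eigenstates only when the two diabatic energies $\bra{\phi_L}\ham(\ap)\ket{\phi_L}$ and $\bra{\phi_R}\ham(\ap)\ket{\phi_R}$ coincide, this gives
\[
\ACgap=\Delta_{10}(\ap)\;\doteq\;2\,\bigl|\bra{\phi_L}\ham(\ap)\ket{\phi_R}\bigr| .
\]
Because $H_P$ is diagonal and $\widetilde L,\widetilde R$ disjoint, $\bra{\phi_L}H_P\ket{\phi_R}=0$, so $\bra{\phi_L}\ham(\ap)\ket{\phi_R}=(1-\ap)\bra{\phi_L}H_D\ket{\phi_R}$, which is a sum over the cut edges of $G_{\ms{driver}}$ between $\widetilde L$ and $\widetilde R$, weighted by products $\overline{(\phi_L)_k}(H_D)_{kq}(\phi_R)_q$: only the driver term, across the basin boundary, can open the gap.

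\emph{Step 2 (exponential decay of $\phi_L,\phi_R$; upper bound).} For $k\in n(R)$ the eigen-equation of $\ham(\ap)|_{\widetilde R}$ gives $(\phi_R)_k=\dfrac{1-\ap}{E_0(\ap)-\ap\,\energy_k}\sum_{q:\,\dist_{H_D}(k,q)=1}(H_D)_{kq}(\phi_R)_q$; since every barrier state satisfies $\ap\,\energy_k-E_0(\ap)\ge g>0$ while the driver hopping is bounded, iterating along driver paths yields $|(\phi_R)_k|\le C\,\zeta^{\dist_{H_D}(k,R)}$ with a rate $\zeta\in(0,1)$ fixed by the ratio of the driver-hopping scale to the barrier scale $g$ (a Combes--Thomas / high-order non-degenerate perturbation estimate; equivalently $\bra{\phi_L}H_D\ket{\phi_R}$ is the sum over driver-graph walks $L\to R$, a walk of length $m$ contributing $\Theta(\zeta^{m})$, the shortest having length $\dist_{H_D}(L,R)$), and symmetrically for $\phi_L$ about $L$. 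Each cut edge $(k,q)$ obeys $\dist_{H_D}(k,L)+\dist_{H_D}(q,R)\ge\dist_{H_D}(L,R)-1$, so every summand in Step 1 is $O\!\bigl(\zeta^{\dist_{H_D}(L,R)-1}\bigr)$; summing over the cut edges and absorbing their combinatorial count into a slightly larger $\zeta<1$ gives $\ACgap=O\!\bigl(\zeta^{\dist_{H_D}(L,R)}\bigr)$.

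\emph{Step 3 (matching lower bound) and the main obstacle.} I would fix a geodesic driver path $\ket{l}=\ket{v_0}\!-\!\ket{v_1}\!-\!\cdots\!-\!\ket{v_d}=\ket{\GS}$, $d=\dist_{H_D}(L,R)$, $l\in L$; in the $d$-th order expansion of $\bra{\phi_L}H_D\ket{\phi_R}$ it contributes $\prod_{j=1}^{d}\dfrac{(1-\ap)(H_D)_{v_{j-1}v_j}}{E_0(\ap)-\ap\,\energy_{v_j}}=\Theta(\zeta^{d})$, while (C3) and (C4) fix the relative signs of all length-$d$ walks and of the sub-leading walks (for $H_D=\ham_X$ every hopping is $-1$ and every intermediate denominator is negative, so all geodesic contributions share one sign), and the full-exchange hypothesis together with large $\gamma$ keeps $|\phi_L|,|\phi_R|$ of order one on $L,R$; hence the leading term survives and $\ACgap=\Omega(\zeta^{d})$, which with Step 2 gives $\ACgap=\Theta\!\bigl(\zeta^{\dist_{H_D}(L,R)}\bigr)$. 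The hard part is exactly this lower bound --- excluding catastrophic cancellation among the possibly exponentially many shortest driver-graph walks --- for which (C3)/(C4) are indispensable and, in the proper-non-stoquastic regime, the argument must be tied to the concrete $G_{\ms{driver}}$ and the sign of $\Jxx$; a secondary point is making the two-level reduction of Step 1 quantitative uniformly in $N$, which rests on condition (ii), $\Delta_{10}\ll\Delta_{k0}$, and the smallness of $\gamma$ and $\epsilon_v$.
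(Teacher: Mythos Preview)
Your plan is essentially the same route the paper takes: reduce $\ACgap$ via the SAS property (C4) to an off-diagonal driver matrix element between the $\widetilde L$- and $\widetilde R$-parts, then show the coefficients decay geometrically in driver-distance by a high-order perturbation expansion so that the shortest $L$--$R$ paths dominate. The paper packages this slightly differently --- it works directly with the projections $\ket{\widetilde L_0(\ap)},\ket{\widetilde R_0(\ap)}$ of the true eigenstate (Lemma~\ref{lemma-gap}) rather than your restricted-Hamiltonian diabatic states $\phi_L,\phi_R$, invokes Brillouin--Wigner perturbation theory explicitly for the geometric decay, and handles the case $\ap\not\approx 1$ by a scaling trick that moves the anti-crossing near the end before perturbing about $H_P$ --- but the substance is the same; your Combes--Thomas iteration is equivalent to the BW expansion, and your $\phi_{L/R}$ coincide with the paper's projections to the relevant order. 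One point where you go further than the paper: you correctly flag the lower bound (non-cancellation among the shortest walks) as the delicate step and tie it to (C3)/(C4) and the sign structure of $H_D$, whereas the paper states that the shortest-path terms ``dominate'' without isolating the cancellation issue.
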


Remark.
The $\ACgap{}$ is necessarily exponentially small (in the AC-distance),
  however, it is not necessarily true that every exponentially
small (even in the problem size) gap  corresponds to an anti-crossing defined here. For example,
it is not clear if the exponential small gap example presented in
\cite{Jarret-Jordan} corresponds to our anti-crossing because they
only show that there exists an excited state whose energy is close to
the ground energy level and thus condition (ii) in our
definition is not necessarily satisfied.
A possible future work is to generalize the AC definition between one
energy level and one narrow band of closely together energy
levels.

\subsection{Quantum Speedup by {\sc Dic-Dac-Doa}}
\label{sec:speedup}
\paragraph{MWIS Problem.}
We use the NP-hard maximum-weighted independent set  (MWIS) as our
model problem.
See, e.g. \cite{MWIS2019} for a recent classical algorithm for MWIS.
We make use of the structure of {\em maximal independent sets} to construct
the driver graph.
As described in \cite{Choi2020}, MWIS and Ising problem can be
efficiently reduced to each other. The MWIS-Ising Hamiltonian
is specified by a problem graph $G$ and a weight vector $w$ on its
vertices and a $J_{\ms{penalty}}$ (In our examples in this paper, we
  assume $J_{\ms{penalty}}=4$ when we omit to mention). The formulae
  for computing the corresponding $\{h, J\}$ are
    also described in the Appendix.
%    We denote the corresponding QA algorithms by $\HD(\ham_X,G)$ and
%$\HD(\Jxx, G_{\ms{driver}},G)$.

\subsubsection{An illustrative Example}

%\paragraph{An Example Graph.}
An MWIS graph with 9 weighted vertices is depicted in
Figure~\ref{fig:G1}.
The global minimum corresponds to the maximum independent set $\{2,5,8\}$
with total weight of $4.02$.
$G$ has 8 local minima, $\{ \{v_1,v_2,v_3\}: v_1 \in \{0,1\}, v_2 \in
\{3,4\}, v_3 \in  \{5,6\} \}$, corresponding to the 8 maximal
independent sets,
with total weights ranging from
$3.70$ to $3.95$.
This graph can be scaled to a graph of $3n$ vertices, where the
global minimum consists of $n$ vertices, while there are $2^n$
local minima formed by $n$ independent-cliques.
\begin{figure}[h]
  \centering
  $$
  \begin{array}[h]{cc}
    \includegraphics[width=0.45\textwidth]{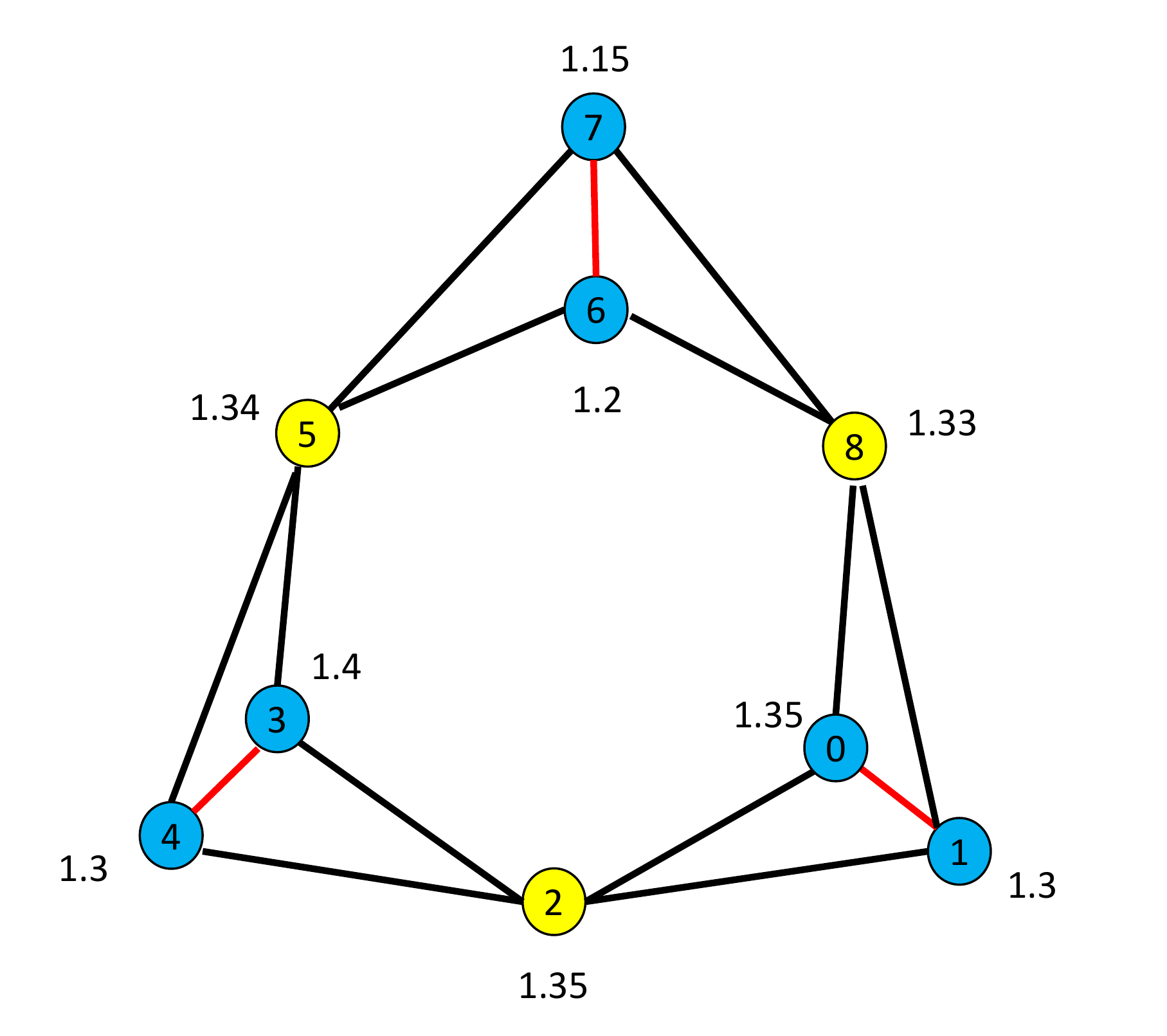}&
                                                          \includegraphics[width=0.45\textwidth]{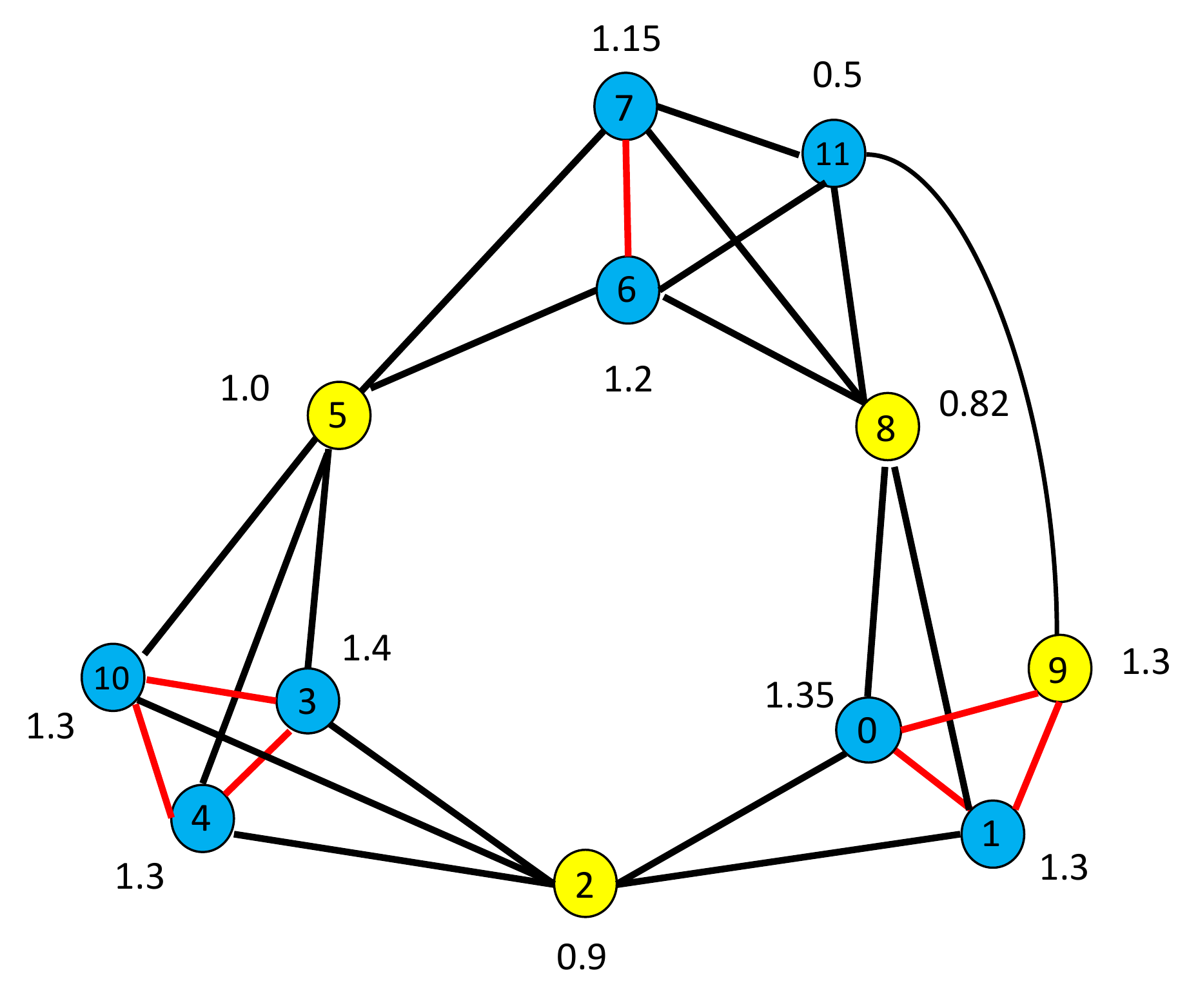} \\
    (a) & (b)
  \end{array}
  $$
  \caption{(a) An MWIS problem graph $G$ with 9 weighted vertices
    (edges are in black and red).
      The global minimum corresponds to the maximum independent set
      $\{2,5,8\}$ (in yellow)
with total weight of $4.02$.
$G$ has 8 local minima, $L=\{ \{v_1,v_2,v_3\}: v_1 \in \{0,1\}, v_2 \in
\{3,4\}, v_3 \in  \{6,7\} \}$,
with weights ranging from
$3.70$ to $3.95$.
The local-minima  subgraph $G|_{L}$ consists  of 3 disjoint cliques
(edges in red) : $\{\{0,1\},\{3,4\},\{6,7\}\}$.
This graph can be scaled to a graph of $3n$ vertices, where the
global minimum consists of $n$ yellow vertices, while there are $2^n$
local minima formed by $n$ independent-cliques.
(b) An  MIS graph $G'$ with 12 weighted vertices (with 3 extra
vertices from $G$ in (a)). The global
minimum is $\{2,5,8,9\}$ (in yellow). There are three independent-cliques that form the local minima, $L=\{ \{v_1,v_2,v_3\}: v_1 \in \{0,1,9\}, v_2 \in
\{3,4,10\}, v_3 \in  \{6,7\} \}$. The edges in $G'|_{L}$ are in red.
}
\label{fig:G1}
\end{figure}

We will compare the evolution of stoquastic $\Htr(G)$ with
the evolutions of
$\HD(\Jxx, G_{\ms{driver}},G)$ for various values of $\Jxx$ for
the weighted graph $G$ in Figure~\ref{fig:G1}.
We use three metrics to compare the evolutions of the different
algorithms: (1) gap-spectrum, shown in Figure~\ref{fig:mgs}; (2) AC-signature (total overlaps of the
two arms with the ground state and the first excited state wavefunctions), shown  in Figure~\ref{fig:Q9-LR}; (3) Signed
overlaps (of the lowest seven problem states with the ground state and the first excited state wavefunctions), shown in Figure \ref{fig:Q9-SignedOverlap}.

\begin{figure}[t]
  \centering
$$
  \begin{array}[h]{cc}
   
\includegraphics[width=0.38\textwidth]{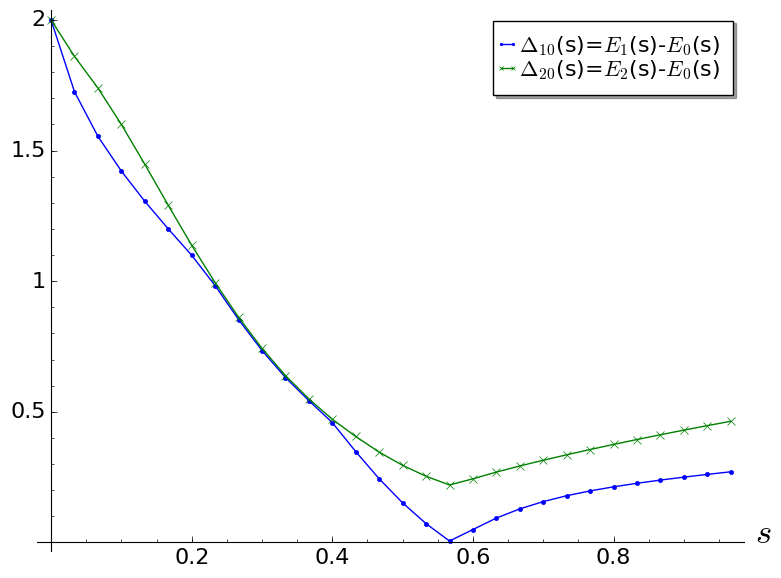} &
                                                      \includegraphics[width=0.38\textwidth]{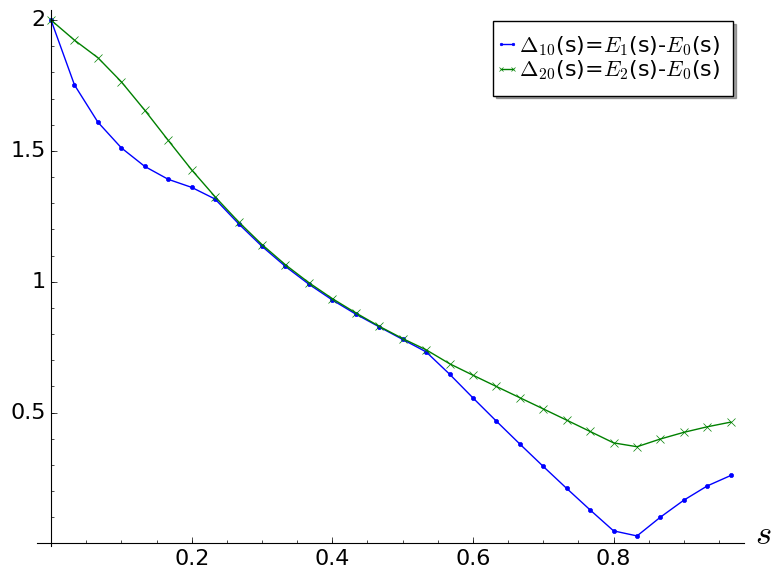}\\
    (a) \Htr(G) &  (b) \HD(\Jxx, G_{\ms{driver}},G),
                            \Jxx=-0.8, G_{\ms{driver}}=G|_{L} \\
    \hline\\
       \includegraphics[width=0.38\textwidth]{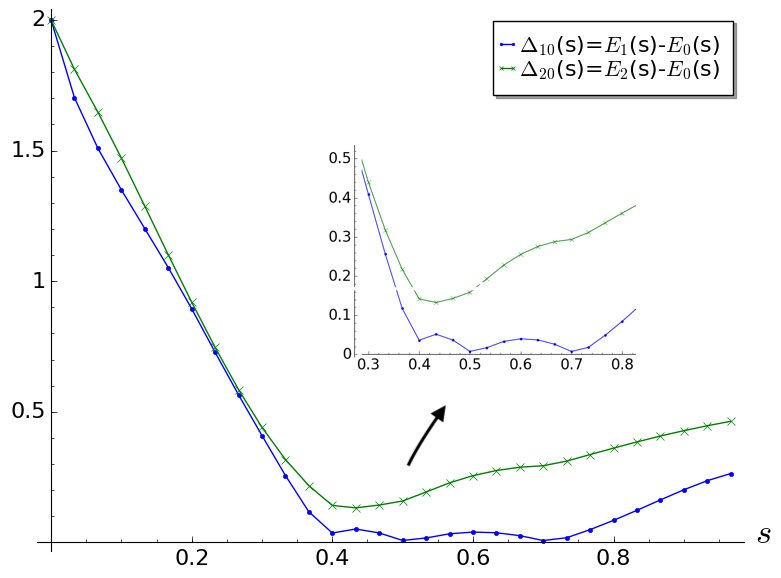} &
                                                                      \includegraphics[width=0.38\textwidth]{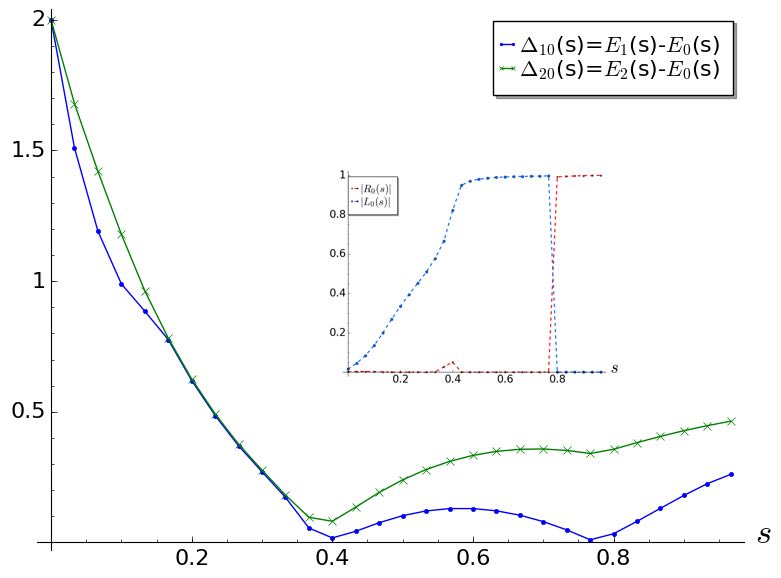}  \\
     (c) \HD(\Jxx, G_{\ms{driver}},G), \Jxx=+0.8, G_{\ms{driver}}=G|_{L} &
                                                       (d) \HD(\Jxx,
                                                                               G_{\ms{driver}},G), \Jxx=+0.8,
                                                                             G_{\ms{driver}}=G\\
    \hline\\
    \includegraphics[width=0.38\textwidth]{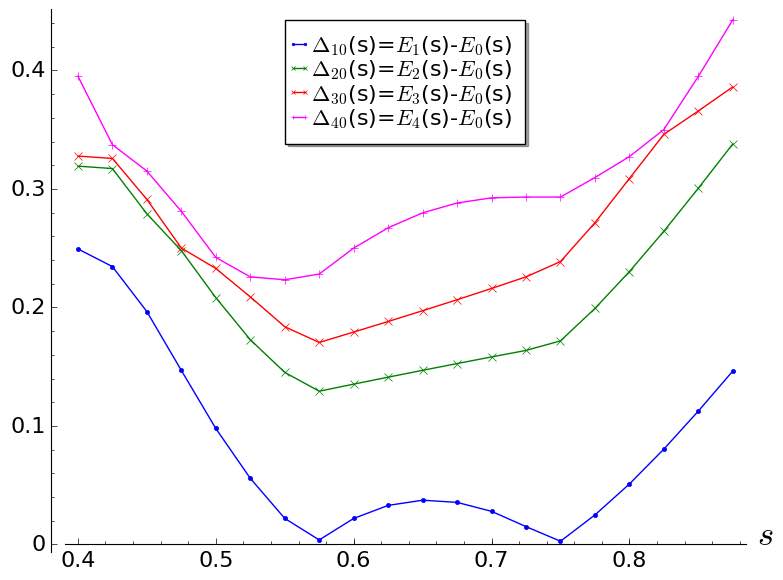} &
                                                                      \includegraphics[width=0.38\textwidth]{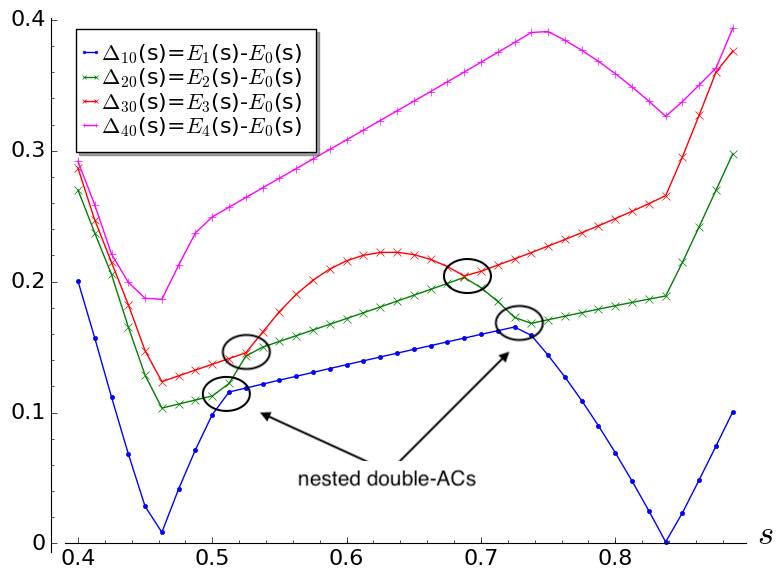}  \\
     (e) \HD(\Jxx, G_{\ms{driver}},G'), \Jxx=+0.8, G_{\ms{driver}}=G|_{L} &
                                                       (f) \HD(\Jxx,
                                                                               G_{\ms{driver}},G'), \Jxx=+1.1,
                                                                                                                  G_{\ms{driver}}=G|_{L}
    \\
    \hline
  \end{array}
$$
\caption{The  {\em gap-spectrum} comparison of $\Htr(G)$ with
  $\HD(\Jxx, G_{\ms{driver}},G)$ 
for 
the weighted graph $G$ shown in Figure~\ref{fig:G1}(a).
%The local minima subgraph $G|_{L}$ is also 
%shown in Figure~\ref{fig:G1}.
There is one local minimum in the gap-spectrum for the stoquastic
Hamiltonians in (a) and (b). There are three local minima in the
gap-spectrum of the proper
non-stoquastic Hamiltonian, where $G_{\ms{driver}}=G|_{L}$, in (c). The
last two local minima in (c) correspond to two
bridged anti-crossings (a double-AC) with a large second-level gap (in
green). In (d), where
$G_{\ms{driver}}=G$, there is no longer a double-AC, but one AC
(the first minimum corresponds to a non-AC local minimum).
In (e) and (f), the problem graph is $G'$  in Figure~\ref{fig:G1}(b). There is a
double-AC with a large second-level gap for $\Jxx=+0.8$ in (e); there
are three nested double-ACs when $\Jxx=+1.1$ in (f), with a large gap
from the 4th-level (in magenta).
}
  \label{fig:mgs}
\end{figure}

\begin{figure}[t]
  \centering
$$
  \begin{array}[h]{cc}
 L/R \mbox{ Overlap with } $\ket{E_{0}(s)}$ &
                                                                  L/R
                                              \mbox{ Overlap
                                                                  with }
                                                      $\ket{E_{1}(s)}$\\
   \includegraphics[width=0.4\textwidth]{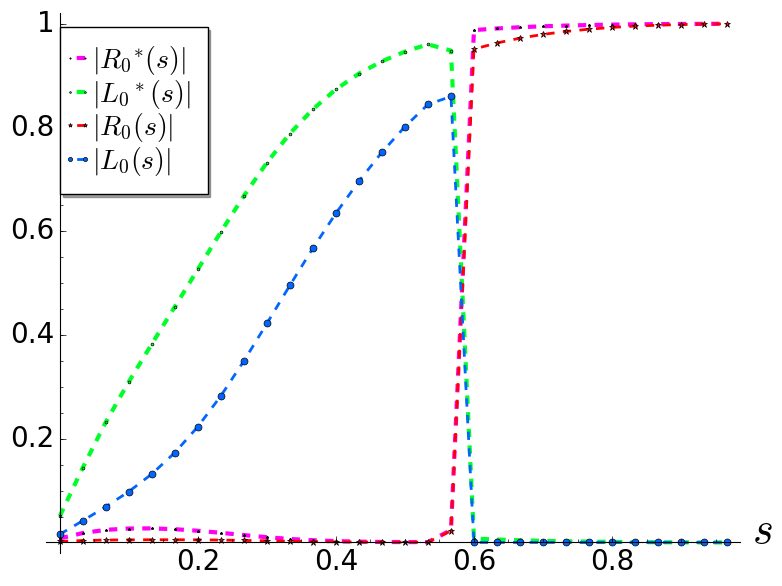} &
                                                                   \includegraphics[width=0.4\textwidth]{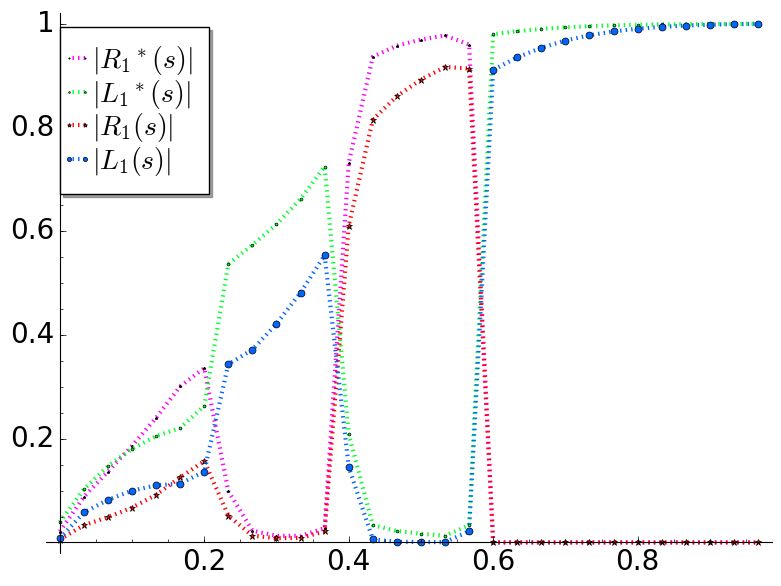}\\
    (a) \ham_{X}: \AC{L^*,R^*}  &\\
      \hline\\
  \includegraphics[width=0.4\textwidth]{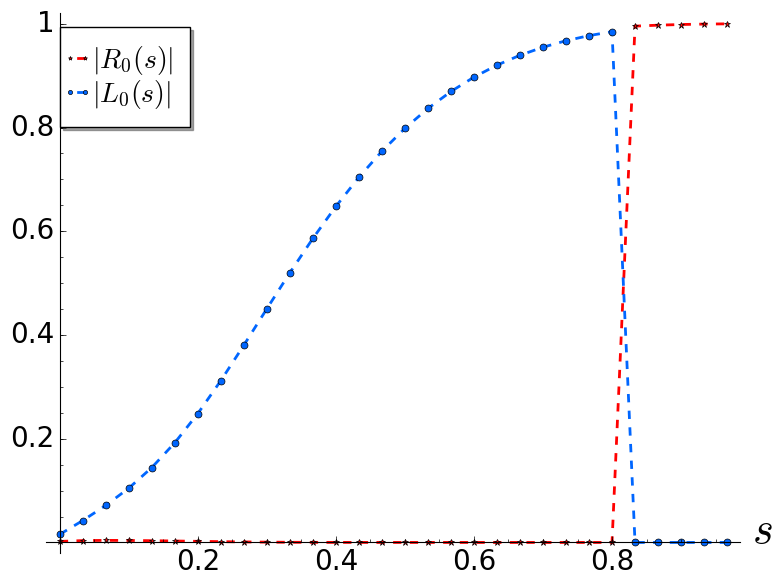} &
                                                                    \includegraphics[width=0.4\textwidth]{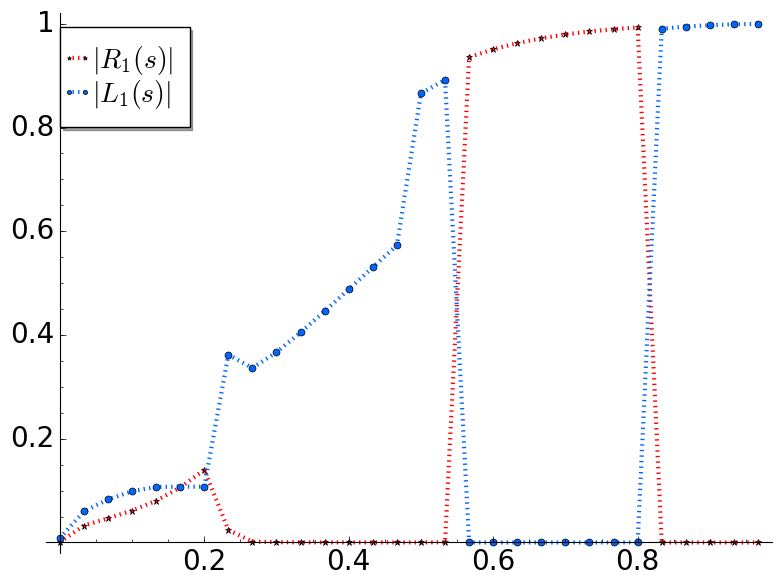}\\
    (b) \ham_{\XX}, \Jxx=-0.8 : \AC{L,R}  &\\
     \hline\\
  \includegraphics[width=0.4\textwidth]{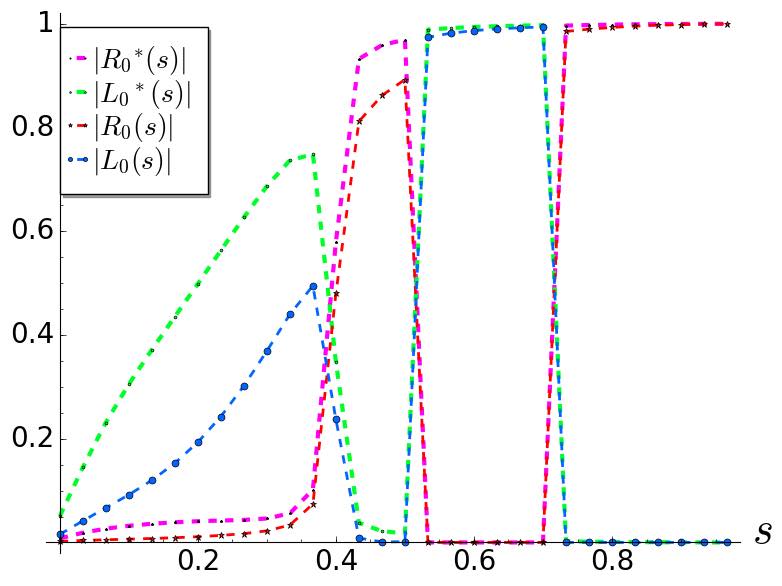} &
                                                                    \includegraphics[width=0.4\textwidth]{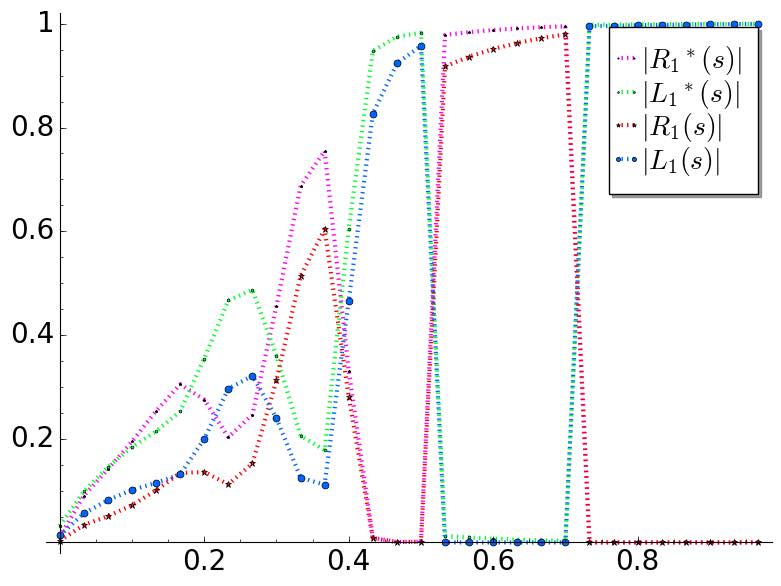}\\
    (c) \ham_{\XX}, \Jxx=+0.8: \mbox{two bridged anti-crossings}\\
%    \bf{1st AC} --
%    \AC{R^*,L}, & \bf{2nd AC} -- \AC{L,R}\\
                  \hline
  \end{array}
$$
\caption{The ``AC-signature'' comparison of $\Htr(G)$ with
  $\HD(\Jxx, G_{\ms{driver}},G)$ 
for
the weighted graph $G$ in Figure~\ref{fig:G1}(a).
The L/R overlaps with $\ket{E_{0}(s)}$ are shown in left, and with
  $\ket{E_{1}(s)}$ are shown in right,
  where $     L^* = L
   \union \nbr_{H_D}(L), R^* = R
   \union \nbr_{H_D}(R)$.
In both (b) and (c), $G_{\ms{driver}}=G|_{L}$.  In (c), there is a
double-AC : (1st AC) $\AC{R^*,L}$ at $\sim 0.5$ and (2nd AC)
$\AC{L,R}$ at $\sim0.7$.
 }
  \label{fig:Q9-LR}
\end{figure}

\begin{figure}[t]
  \centering
$$
  \begin{array}[h]{cc}
 \mbox{Signed overlap with }$\ket{E_{0}(s)}$ &
                                                                \mbox{Signed
                                        overlap
                                                                  with }
                                                      $\ket{E_{1}(s)}$\\
     \includegraphics[width=0.43\textwidth]{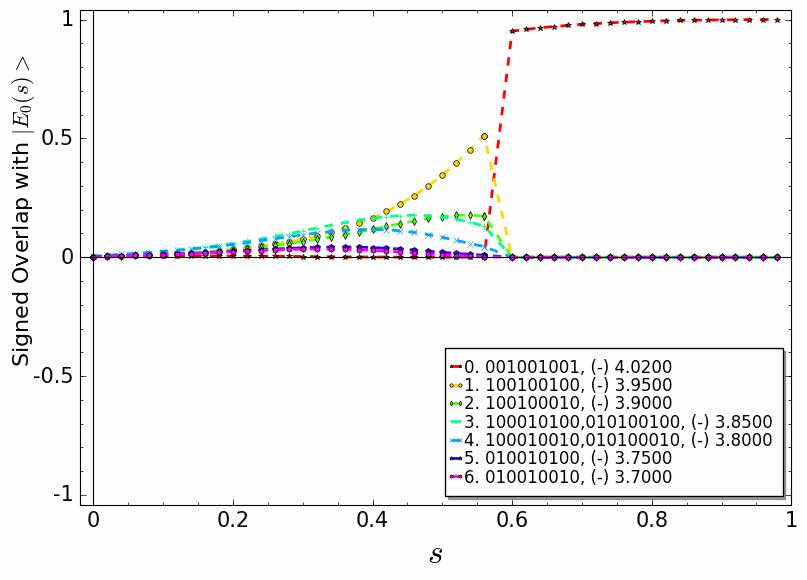} &
                                                          \includegraphics[width=0.43\textwidth]{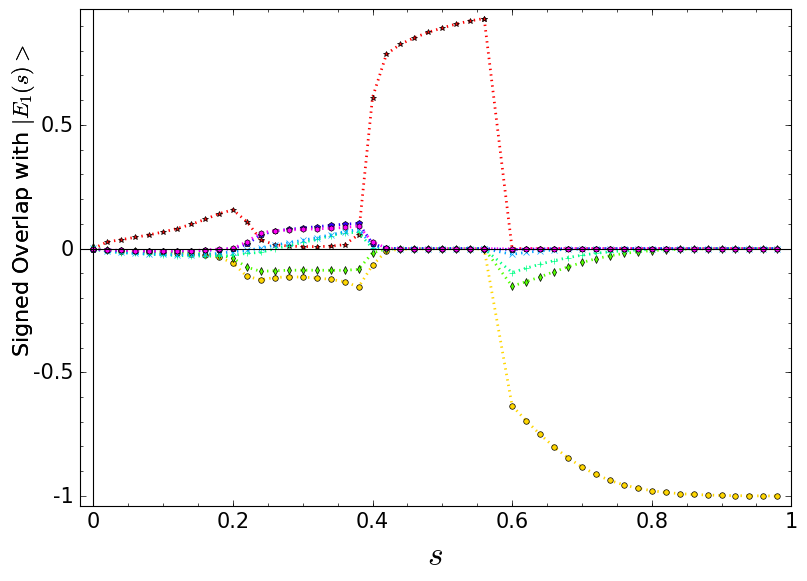}\\
    (a) \Htr(G): \AC{L,R}  &\\
    \hline\\
    \includegraphics[width=0.43\textwidth]{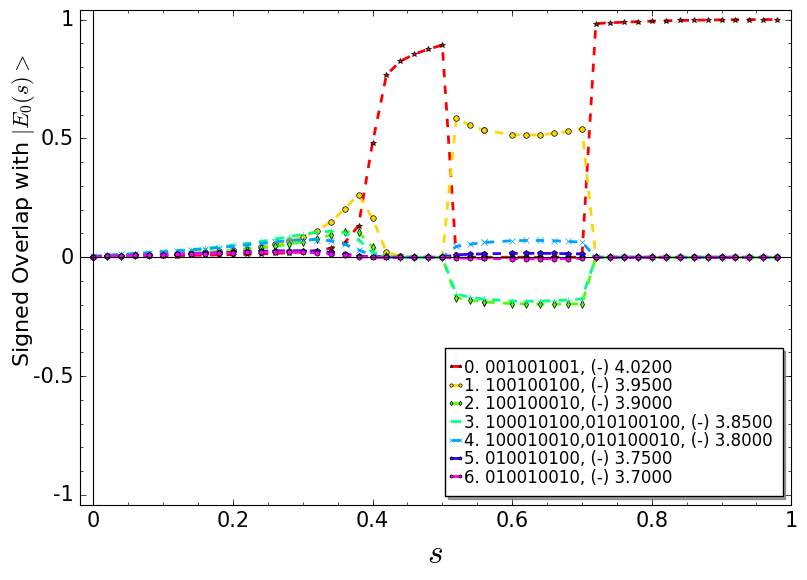} &
                                                                    \includegraphics[width=0.43\textwidth]{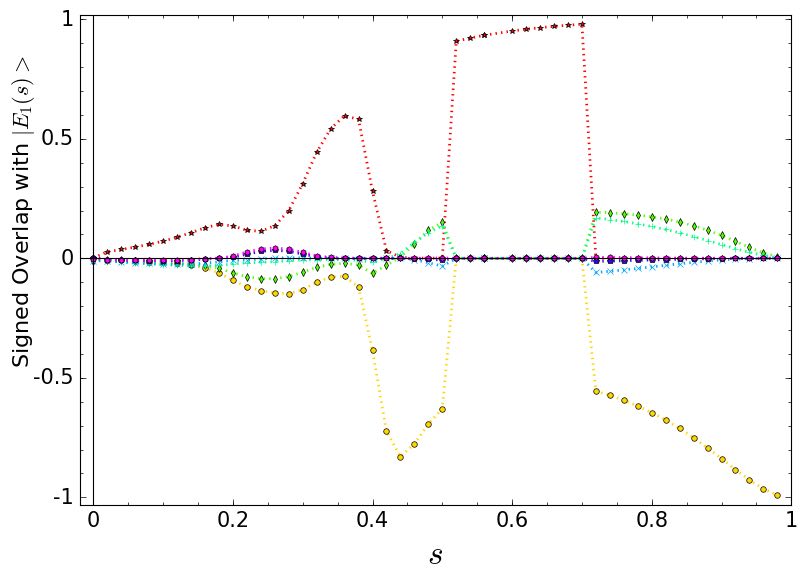}\\

\includegraphics[width=0.4\textwidth]{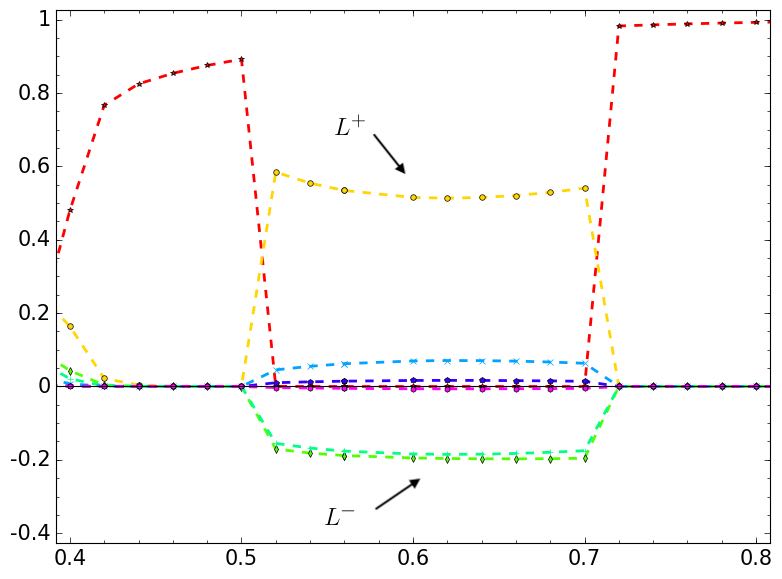}
                                             &\includegraphics[width=0.4\textwidth]{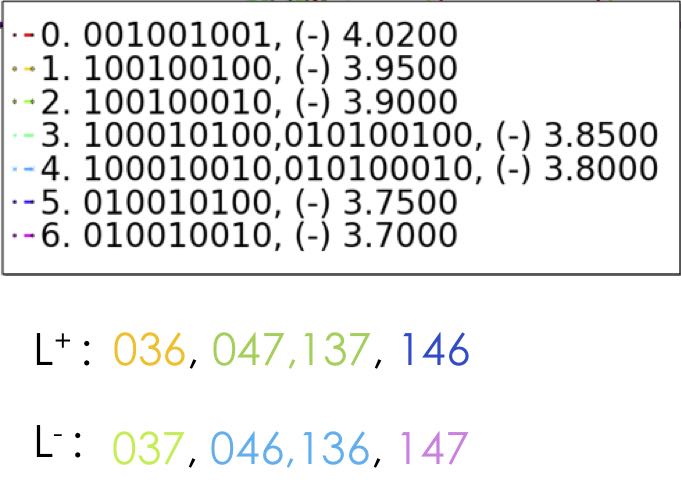}\\
    (b) \Jxx=+0.8: L \mbox{ is splited into } L^{+} \mbox{ and
    } L^{-}\\
    \hline\\
   \end{array}
$$
\caption{The  {\em signed-overlap} comparison of $\Htr(G)$ with
  $\HD(\Jxx, G_{\ms{driver}},G)$ 
for 
the weighted graph $G$ in Figure~\ref{fig:G1}(a).
  Signed Overlap of the seven lowest energy levels with
  $\ket{E_{0}(s)}$ (shown in left) and with 
  $\ket{E_{1}(s)}$ (shown in right).
  In (b), $G_{\ms{driver}}=G|_{L}$. $L$ is
  split into $L^+$ and $L^-$. There is exactly one XX-coupler between
  each state in $L^+$ and $L^-$. There is either zero or more than one
  XX-coupler within $L^+$ (or $L^-$).}
  \label{fig:Q9-SignedOverlap}
\end{figure}

\subsubsection{Local Minima Subgraph and Driver Graph $G_{\ms{driver}}$}

\paragraph{Local minima subgraph with a special independent-cliques (IC)
  structure.}
Let $L=\{l_1,l_2, \ldots, l_m\}$ be a set of local minima of the MWIS problem. 
Each set $l_i$ consists of a subset of
vertices in the problem graph, e.g. $l_1=\{0,3,7\}$ in Figure~\ref{fig:G1}, corresponding to
a weighted maximal independent set.
Any two {\em disjoint} local minima  in
$L$, say $l_i$ and $l_j$, e.g. $l_i=\{0,3,7\}$ and $l_j=\{1,4,6\}$ in Figure~\ref{fig:G1}, form a bipartite graph, where each vertex in
$l_i$  ($l_j$ resp.) must be adjacent to at least one vertex in $l_j$
($l_i$ resp.) because of the maximality of $l_i$ and $l_j$. The
bipartite graph formed by $l_i$ and $l_j$ can consist of several disconnect components,
each component being a connected bipartite subgraph (which is not necessarily
complete, i.e. not all edges between the two partites are present).
Inductively, any three disjoint local minima form a tripartite graph,
consisting of several connected components of tripartite subgraphs. 
A (connected) multi-partite graph can be considered as a generalized clique by
replacing each partite (an independent
set) with one super-vertex and the edges between the partites by one
edge between the two corresponding super-vertices.
It is in this sense we refer a multi-partite graph as a clique of {\em
  partites}. 
For example, a bipartite graph
is a clique of two partites; a tripartite graph
is a clique of three partites.
That is, we generalize a clique of vertices to be  a clique of partites where
each partite consisting of either one single vertex or an independent
set. When all partites are single vertices, the clique of partites is
the normal
clique or a complete subgraph; otherwise the clique is a multi-partite subgraph. 
(For simplicity, in this paper we only illustrate the normal
cliques.
The general case of the cliques of partites will be illustrated in our subsequent work.)
The size of the clique is the
number of partites in the clique.

We assume the subgraph formed by sets in $L$, denoted
by $G|_{L}\mdef G[\cup_{l \in L}l]$,
consists of a set of $\kappa$ vertex-disjoint components such that each
set in $L$ is formed from one element in each component.
By the maximality of the independent sets in $L$, each component is necessarily  a clique of {\em partites},  with
each partite consisting of either one single vertex or an independent
set.
Notice that these $\kappa$ disjoint cliques (of partites) in $G|_{L}$ can be
connected in the original graph $G$, e.g. two cliques are connected
through an edge in the $G$.
We further assume that the cliques (of partites) are {\em independent} in
that there are no edges between vertices from any two cliques.
That is, we assume that the local minima subgraph has a
 special {\em independent-cliques} (IC) structure in that 
the local minima $L$ is covered by a set of independent
  cliques of partites such that each local minimum in $L$ is formed
  by one partite from each clique in the IC.
If the weights within each clique is approximately the same, there
will be $\Pi_{i}^\kappa t_i$ many almost degenerate local minima, where
$t_i$ is the size of the $i$th clique, and thus  
an independent-cliques structure would cause a formation of an
anti-crossing for the TFQA algorithm, and would also pose a challenge for the
branch-and-bound based classical algorithms.

\begin{claim}
   Suppose that $\Htr(G)$ has an $\AC{L,R}$, where $L$
   consists of a set of local minima
and
  $R$ consists of the global minimum.
   of the MWIS problem on $G$.
  Furthermore, we assume that the local minima subgraph has an
  independent-cliques structure, i.e.  $G|_{L}$
  which consists of $\kappa$ cliques (of partities)  with size
  $t_i$.
  Let $G_{\ms{driver}}=G|_{L}$.
 We consider $\HD(\Jxx, G_{\ms{driver}},G)$ for different $\Jxx$
 while  $G_{\ms{driver}},G$ are fixed.  
 % for $\HD(\Jxx, G_{\ms{driver}},G)$, there are two cases: 
  \begin{description}
  \item[(I)] There exists a $\Jsr>0$ such that
    for $\Jxx \in [0,\Jsr]$,
    $\HD(\Jxx, G_{\ms{driver}},G)$ is eventually stoquastic, and 
    $\ACgap(-\Jxx) < \ACgap(0) <\ACgap(+\Jxx)$, where $\ACgap(\Jxx)$
  denotes the anti-crossing gap of $\HD(\Jxx,
  G_{\ms{driver}},G)$.
Also, for $\Jxx \in [-\Jsr,\Jsr]$, $\ACgap(\Jxx)$ increases as $\Jxx$
increases. 
    \item[(II)]
There exist $\Jcr >0, \Jdr>0$ such that 
%\begin{description}
%\item[(i)]
for $\Jxx \in (\Jcr,
\Jdr]$, $\HD(\Jxx, G_{\ms{driver}},G)$ is proper non-stoquastic
and it has a double-AC bridged
by $(L^{+},L^{-})$:  a $\AC{R,L}$ at $s_1(\Jxx)$ and an $\AC{L,R}$
at $s_2(\Jxx)$, where $L^{+}=\{l \in L: c_l(s)>0\} (\neq \emptyset)$ and $L^{-}=\{l \in
L: c_l(s)<0\} (\neq \emptyset)$ for $s \in [s_1(\Jxx),
s_2(\Jxx)]$.
%Thus, $\HD(G, \Jxx, G_{\ms{driver}})$ is \PNS.
Furthermore,  if $1\le t_i \le 2$, for all $i=1, \ldots, \kappa$, then
there is no AC within $[s_1(\Jxx), s_2(\Jxx)]$ between
the first and second energy level, and there is a $\Jxx$ such that
$\Delta_{21}(s)$ is large, for  $s \in [s_1(\Jxx), s_2(\Jxx)]$.
% \item[(ii)] for $\Jxx \in (\Jsr, \Jcr]$, $sysH(G, \Jxx,
%   G_{\ms{driver}})$ has no AC.
%\end{description}
  \end{description}
\label{main-thm}
  \end{claim}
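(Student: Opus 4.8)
The plan is to treat part~(I) with Corollary~\ref{coro} as a quantitative engine and part~(II) with a variational/perturbative ``three-branch'' analysis inside the low-energy subspace. For the eventual-stoquasticity assertion of~(I), I would perturb around $\Jxx=0$: there $\HD(0,G_{\ms{driver}},G)=\Htr(G)$ is stoquastic and, since $-\ham_X$ is irreducible and entrywise non-negative (the hypercube adjacency), each $\ham(s)$, $s\in[0,1)$, has a strictly positive Perron--Frobenius ground state with a nonzero (if possibly exponentially small) gap; adding the off-diagonal perturbation $s\Jxx W$, $W\mdef\sum_{ij\in\edge(G_{\ms{driver}})}\sigma_i^x\sigma_j^x$, keeps the lowest eigenvector non-negative while $\Jxx$ stays below that gap over $\|W\|$, and a compactness argument over $s\in[0,1]$ produces some $\Jsr>0$ (only its positivity being claimed) with $\HD(\Jxx,\cdot,\cdot)$ eventually stoquastic on $[0,\Jsr]$. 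For the gap monotonicity I substitute $H_D=\ham_X+s\Jxx W$, hence $\delta H=H_P-\ham_X-s\Jxx W$, into the denominator $D(\Jxx)\mdef\OL{\delta H}{\ket{E_0(\apm)}}-\OL{\delta H}{\ket{E_1(\apm)}}$ of Corollary~\ref{coro}; to first order in $\Jxx$ (the ``descriptive'' use advertised right after that corollary, holding the corner states and $\apm$ fixed) $D(\Jxx)\approx D(0)-\apm\Jxx\big(\OL{W}{L_0,\apm}-\OL{W}{R_1,\apm}\big)$. The key local observation is that, because $G_{\ms{driver}}=G|_{L}$, every driver edge $ij$ lies inside a clique of $L$, so $\sigma_i^x\sigma_j^x$ carries one local-minimum string to another (and a string selecting neither of $i,j$ to a non-independent, high-energy string of negligible amplitude); in the eventually-stoquastic regime all $c_l(\apm)\ge 0$, so $\OL{W}{L_0,\apm}$ is a sum of nonnegative cross-terms $2c_lc_{l'}$ over adjacent local minima and hence positive, whereas $\OL{W}{R_1,\apm}\approx 0$ since the global-minimum support carries no driver edge (adding two clique-adjacent vertices destroys independence). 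Thus $D(\Jxx)$ is strictly decreasing near $0$, so $\ACgap(\Jxx)=2\eta^2\delta/D(\Jxx)$ is strictly increasing on $[-\Jsr,\Jsr]$, giving $\ACgap(-\Jxx)<\ACgap(0)<\ACgap(+\Jxx)$.

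For the double-AC of~(II) I would track three diabatic branches of $\ham(s)$ near the original anti-crossing $\ap$, as functions of $\Jxx$: the dressed global-minimum $R$-branch, with energy $E_R(s)$ essentially $\Jxx$-independent since $\OL{W}{R}\approx 0$; the all-positive local-minimum $L^{(0)}$-branch, with energy $E_L^{(0)}(s)+\Jxx\OL{W}{L^{(0)}}(s)$ \emph{raised} by $\Jxx$ as $\OL{W}{L^{(0)}}>0$; and the ``split'' branch $\ket{L^{+}}-\ket{L^{-}}$, whose $W$-expectation is \emph{negative} (opposite signs across every driver edge), with magnitude a bump in $s$ that vanishes at $s=0$ (delocalized state) and at $s=1$ (collapse onto a single local minimum) and is a positive constant on the localization plateau. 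Restricted to $L$ the intra-$L$ driver is $W|_{L}=\sum_{a=1}^{\kappa}(\text{adjacency of }K_{t_a}\text{ on super-qubit }a)$, commuting pieces whose least eigenvalue is $\le -1$ and bounded away from $0$, so for $\Jxx>\Jtr$ the properly non-stoquastic split configuration strictly beats every non-negative one and $L^{+},L^{-}\ne\emptyset$. Everything then reduces to comparing $E_{L^{\ms{split}}}(s)=E_L^{(0)}(s)+\Jxx\OL{W}{L^{\ms{split}}}(s)$ with $E_R(s)$ near $\ap$: at $\Jxx=0$ they cross once (at $\ap$), and as $\Jxx$ grows the $L^{\ms{split}}$ curve is pushed down by the bump, so there is a threshold $\Jcr\ (\ge\Jtr)$ at which it first dips below $E_R$ and, for $\Jxx\in(\Jcr,\Jdr]$, $E_{L^{\ms{split}}}(s)<E_R(s)$ exactly on an interval $(s_1(\Jxx),s_2(\Jxx))\ni\ap$ with $s_1<s_2$; the cutoff $\Jdr$ is where the $L^{\ms{split}}$ curve has sunk far enough that further $L$-branches (or other levels) intrude and the clean two-level picture at $s_1,s_2$ breaks into a nested/multi-level one. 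At $s_1$ the ground state exchanges $R\!\to\!L^{\ms{split}}$ and at $s_2$ it exchanges $L^{\ms{split}}\!\to\!R$; checking conditions (i)--(iv) of the AC definition and the $(1-\epsilon_v)$-full-exchange condition at each point through the two-level hyperbolic analysis of Proposition~\ref{prop1} certifies a $\AC{R,L}$ at $s_1$ and a $\AC{L,R}$ at $s_2$, bridged by $(L^{+},L^{-})$.

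For the $t_i\le 2$ refinement, each factor $K_{t_a}$ is then a point or an edge, so $W|_{L}$ is, up to a shift, a transverse field $\sum_a\widehat{X}_a$ on the $\kappa$ super-qubits; its $L$-manifold ground state is the \emph{unique} unfrustrated product state $\bigotimes_a\ket{-}_a$ (the clean bipartition split $L^{+}$/$L^{-}$), separated by a gap of order $2\Jxx\,g(s)$ --- with $g(s)>0$ the per-state weight on the plateau --- from the next $L$-level. Hence within $(s_1,s_2)$ the ground state is robustly this split state, the first excited state is the $R$-branch, and all other $L$-levels lie $\gtrsim\Jxx\,g(s)$ higher, so choosing $\Jxx\in(\Jcr,\Jdr]$ with this separation dominating the residual gaps makes $\Delta_{21}(s)$ bounded below by a constant times $\Jxx\,g(s)$ throughout the bridge: it is large, and there is no anti-crossing between the first and second level in $[s_1,s_2]$. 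The hypothesis is genuinely needed, since if some $t_i\ge 3$ then $K_{t_i}$ has eigenvalue $-1$ with multiplicity $\ge 2$, $W|_{L}$'s minimum is degenerate, the split is frustrated, and the bottom of the $L$-manifold is a near-degenerate band --- the regime where one instead invokes the nested/multi-level-AC generalization.

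The main obstacle throughout is part~(II): making the three-branch energetics quantitative enough to (a) pin down $\Jcr<\Jdr$ and show exactly two $E_{L^{\ms{split}}}$--$E_R$ crossings on that window, and (b) certify condition~(ii) (all other levels far) at $s_1$ and $s_2$, for which one must control the full super-qubit spectrum of $W|_{L}$ --- whence the role of the $t_i\le 2$ assumption. Part~(I) is comparatively routine once Corollary~\ref{coro} is in hand.
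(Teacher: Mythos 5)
First, a caveat that frames the whole comparison: the paper does \emph{not} prove this Claim rigorously --- it ``justifies'' it by three Observations plus numerics and explicitly states ``We leave a rigorous proof as future work.'' Your proposal sits at the same level of rigor and, for the most part, follows the same route. For (I), both you and the paper get eventual stoquasticity by continuity from the stoquastic $\HD(0)$, and both drive the monotonicity of $\ACgap(\Jxx)$ through Corollary~\ref{coro} using the identical key input: $\OL{W}{L_0,\cdot}>0$ because every driver edge lies inside a clique of $L$ and all ground-state coefficients are non-negative in the eventually-stoquastic regime, while $\OL{W}{R,\cdot}\approx 0$ because $R$ carries no driver edges. (The paper additionally argues that the width $\delta$ in the numerator grows with $\Jxx$, which you freeze; both treatments are equally heuristic about the drift of $\ap$ and the corner states.) For (II), your three-branch bookkeeping is a cleaner repackaging of the paper's Observations 2 and 3: energy minimization of the $(1-s)s\,\Jxx\sum_{k\oplus k'}c_kc_{k'}$ term forces the split of $L$ into nonempty $L^{+},L^{-}$, and $\Jcr,\Jdr$ mark where the split branch first dips below, and then detaches too far from, the $R$ branch. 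Where you genuinely add something is the $t_i\le 2$ refinement: the paper's justification that $\Delta_{21}$ can be made large is only the soft statement that ``it is likely that $E_2(s)$ will be different for different $\Jxx$ while $E_1(s)$ is independent of $\Jxx$,'' whereas your identification of $W|_{L}$ with a transverse field on $\kappa$ super-qubits gives a \emph{unique} split ground state and an explicit $O(\Jxx\, g(s))$ separation from the rest of the $L$-manifold, together with a clean explanation of why $t_i\ge 3$ produces the degenerate band behind the nested/multi-level ACs. That is a stronger structural argument than the paper's --- though, as you correctly flag, it still controls only the $L$-manifold and not the non-$L$, non-$R$ levels (e.g.\ LENS states), which is precisely the condition-(ii) obstacle; your honest list of remaining gaps matches the caveats the paper itself attaches to this Claim.
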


\begin{figure}[t]
  \centering
$$
  \begin{array}[h]{cc}
    \includegraphics[width=0.4\textwidth]{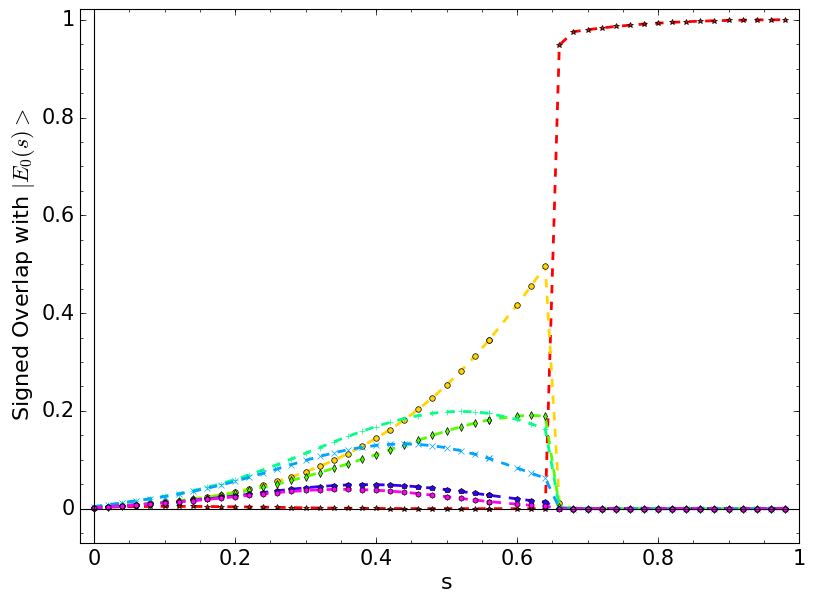} &
                                                              \includegraphics[width=0.4\textwidth]{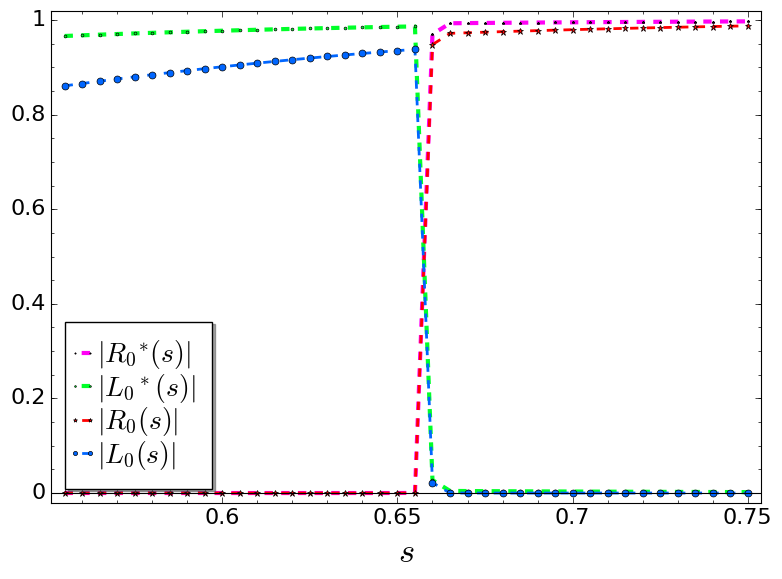} \\
    (a) \Jxx=-0.3: \Delta=2.15e-4, s^*=0.6596&\\
    \hline\\
         \includegraphics[width=0.4\textwidth]{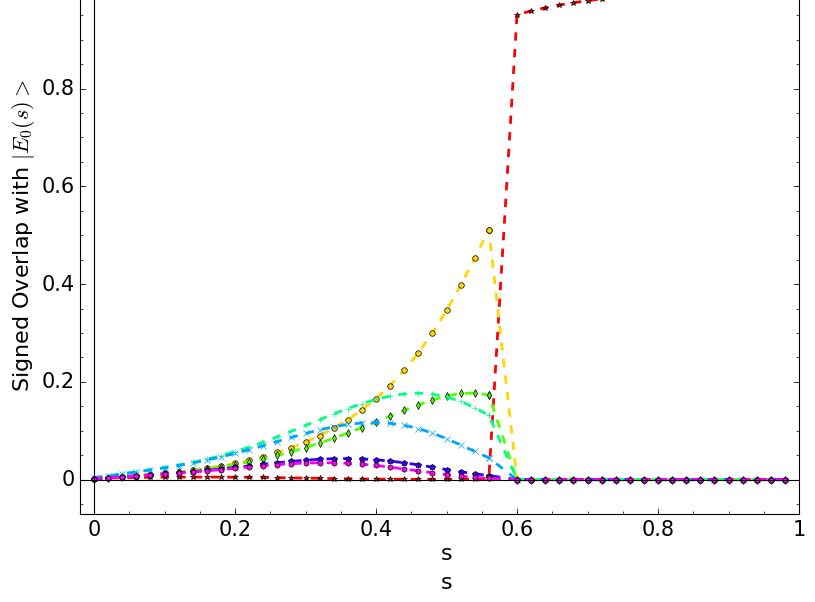}
                                                            &\includegraphics[width=0.4\textwidth]{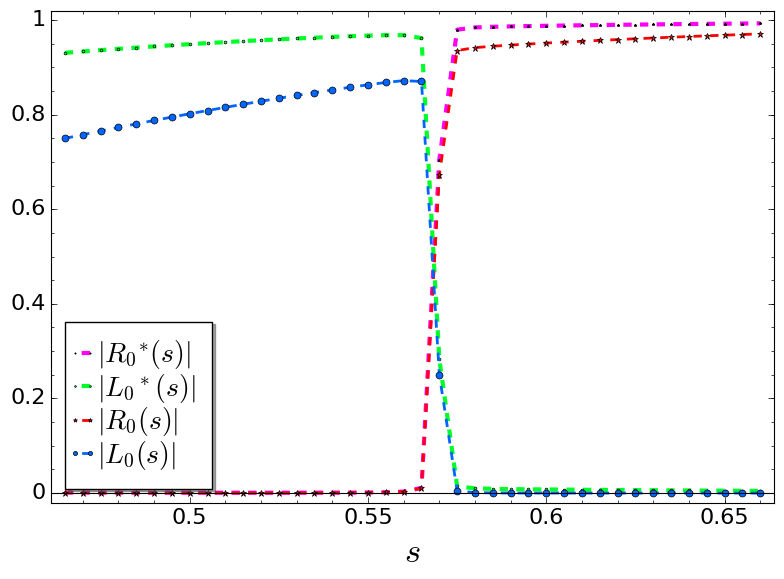}
    \\
    (b) \Jxx=0: \Delta=1.58e-3, s^*=0.5696 & \\
    \hline\\
    \includegraphics[width=0.4\textwidth]{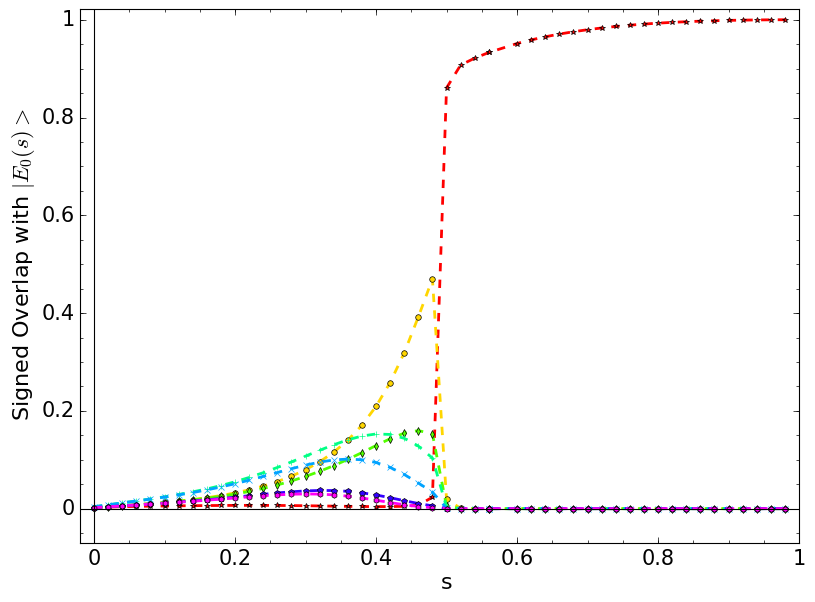}&\includegraphics[width=0.4\textwidth]{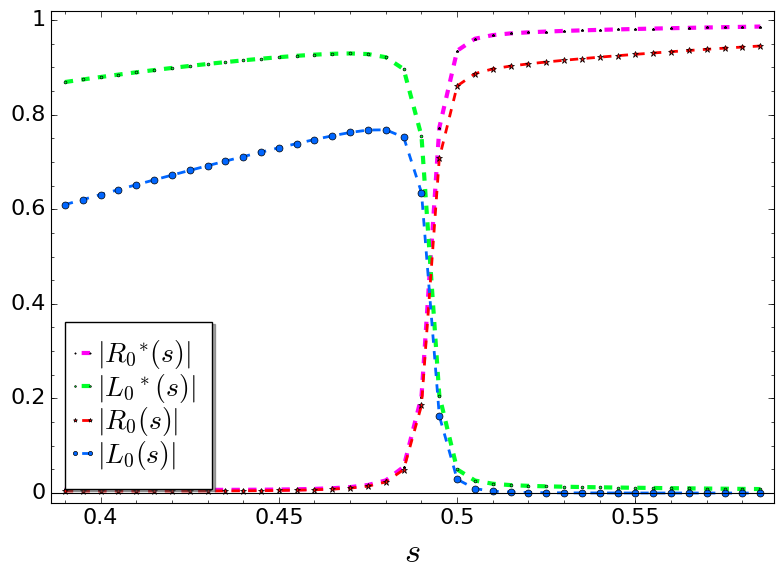} \\
    (c) \Jxx=+0.3: \Delta=7.00e-3, s^*=0.4928 & \\
   \hline
  \end{array}
$$
 
\caption{ Comparison of the overlaps of the lowest seven problem
  states (left) and L/R overlaps (right) with $\ket{E_0(s)}$ of $\HD(\Jxx, G_{\ms{driver}},G)$, for
    $\Jxx=-0.3, 0, +0.3$. As $\Jxx$ increases (within eventually
    stoquastic region), the anti-crossing point $\ap$ shifts to the left,
    the anti-crossing width $\delta$ increases, $|L_0(\apm)|$
    decreases, resulting in the increase of the
    \ACgap{} size.}
  \label{fig:AC-strength}
\end{figure}

\begin{figure}[t]
  \centering
$$
  \begin{array}[h]{cc}
   
\includegraphics[width=0.38\textwidth]{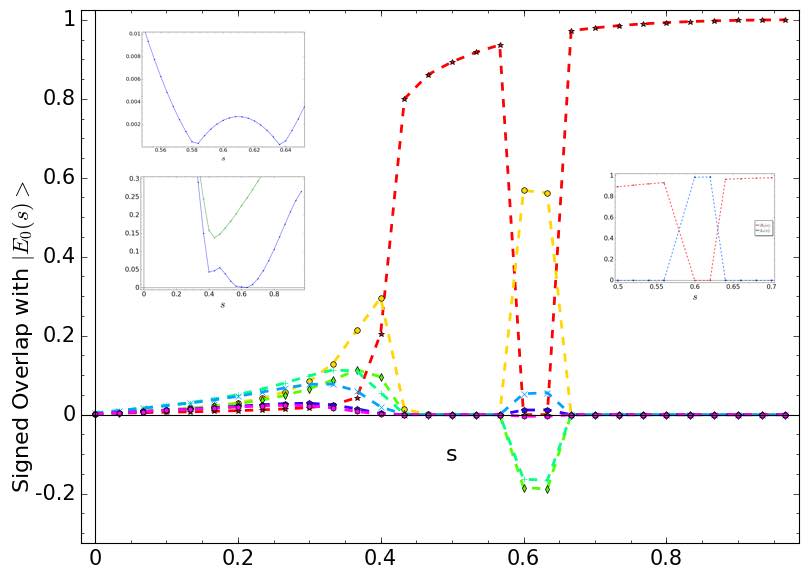} &  \includegraphics[width=0.38\textwidth]{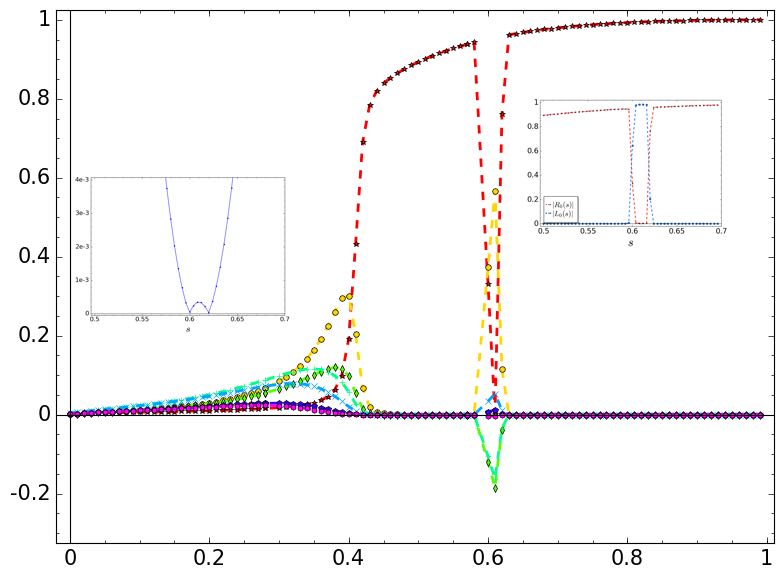} \\
    (a) \Jxx=+0.73&  (b) \Jxx=+0.725\\
    %min-gap=0.0003741 at 0.608 for -0.724
    \hline\\
    \includegraphics[width=0.38\textwidth]{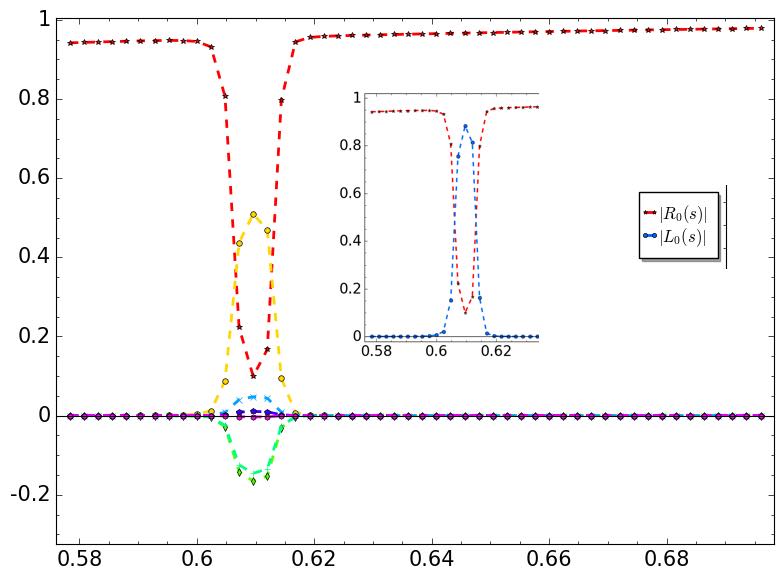} &  \includegraphics[width=0.38\textwidth]{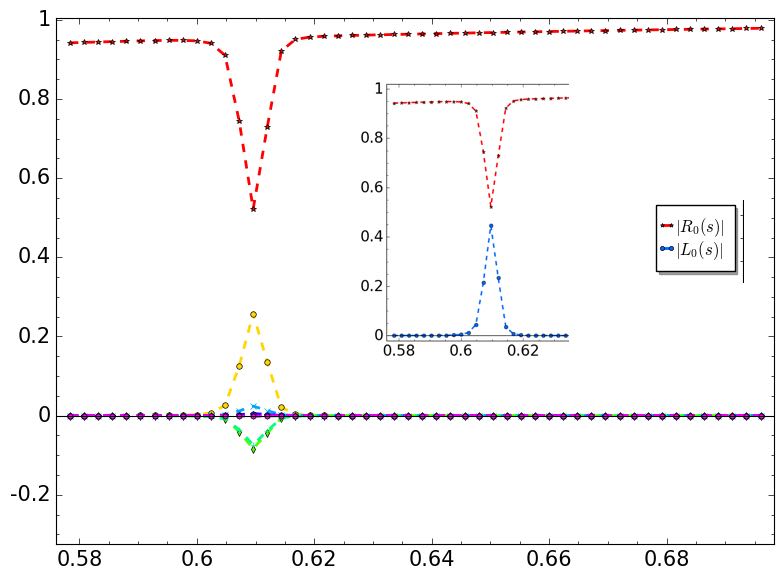} \\
    (c) \Jxx=+0.7248&  (d) \Jxx=+0.7247\\
    \includegraphics[width=0.38\textwidth]{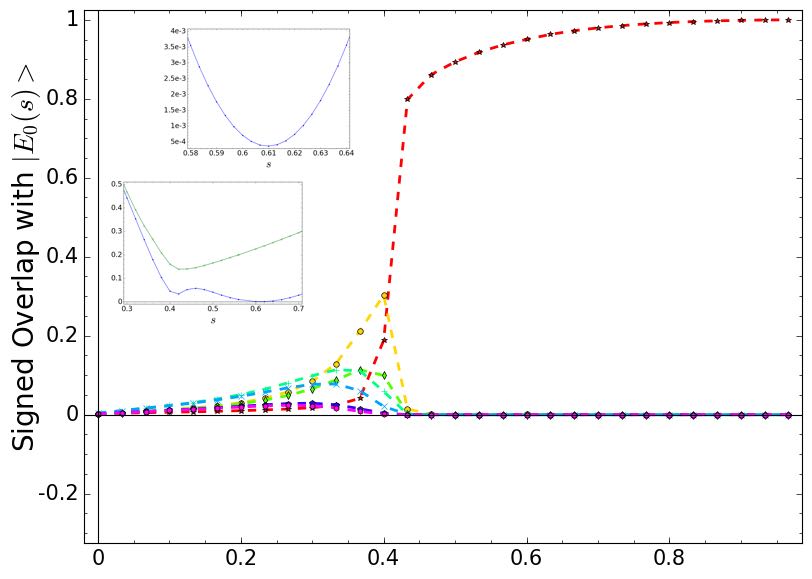} &
                                                          \includegraphics[width=0.38\textwidth]{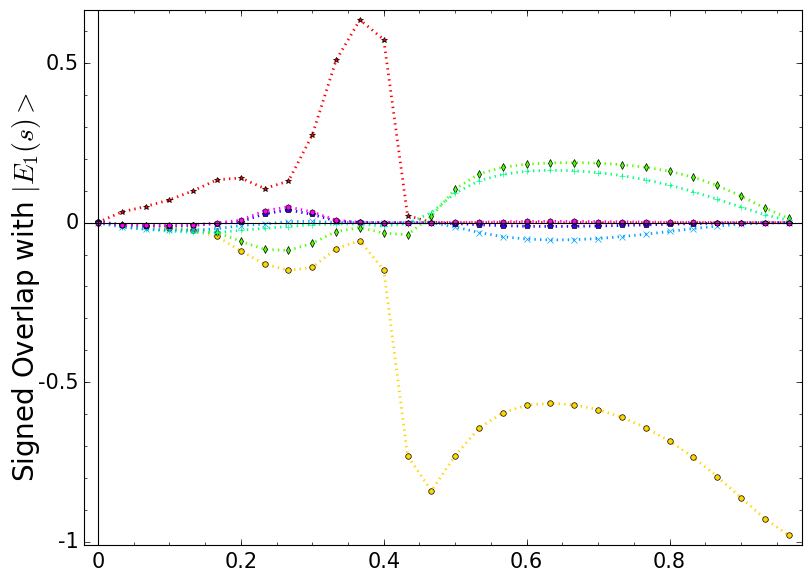}\\
    (e) \Jxx=+0.724 &  (f)
                                                         \Jxx=+0.724
                                                         (\mbox{Overlap
                                                         with
                      }\ket{E_1(s)})\\
    \hline
  \end{array}
$$
\caption{Evolution of the double-AC in $\HD(\Jxx,
  G_{\ms{driver}},G)$ for $\Jxx \in [\Jcr,\Jdr]$.
  As $\Jxx$ decreases 
  from $\Jdr$, the bridge between the two anti-crossings shrinks, as
  shown by comparing the results in 
  (a) and (b). As $\Jsp (\approx 0.725)$ approaches $\Jcr (\approx
  0.724)$, around $s \sim s_c$ (the splitting point), $|L_0(s)|$ decreases
  while $|R_0(s)|$ increases, as shown in (c) an (d).  Eventually $|L_0(s_c)|\approx 0$
  and $|L_1(s_c)|\approx 1$ as in (e) and (f), there is no longer an
AC (as one can see from the signed overlaps in (e) and (f)); the
min-gap is a non-AC local minimum in the gap spectrum.
Here we see that when two ACs are merged, the ground state actually
changes continuously  and rapidly by ``exchanging'' with
the first excited state not by an anti-crossing mechanism but by
merging of two bridged anti-crossings.
%This is perhaps an interesting
%physical phenomenon that a sharp exchange occurs not because of an
%anti-crossing, but merging of two anti-crossings.
}
  \label{fig:Jmerge}
\end{figure}

\begin{figure}[h]
  \centering
 (a)  \includegraphics[width=0.85\textwidth]{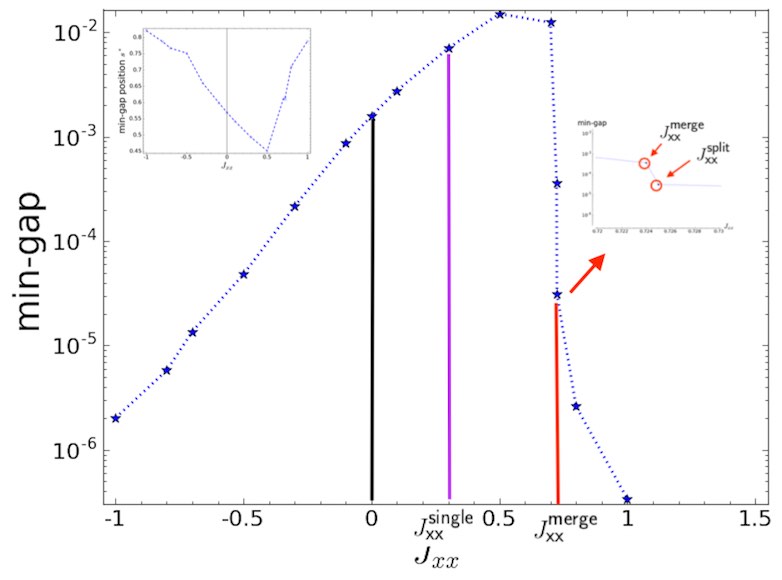}\\
  (b)  \includegraphics[width=0.6\textwidth]{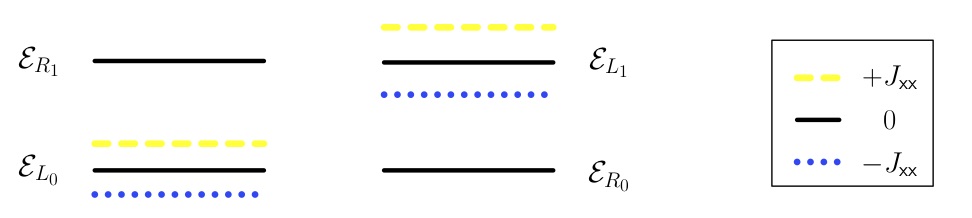}
  \caption{(a) Min-gap $\Delta$  vs \XX-coupler strength $\Jxx$ of $\HD(\Jxx,
  G_{\ms{driver}},G)$ for 
the weighted graph $G$ in Figure~\ref{fig:G1}(a).
    $\Jcr \approx 0.724, \Jsp \approx 0.725$.
    $\Jsr \approx 0.3$. 
    For $\Jxx \in (0,\Jsr]$, $\Delta(\Jxx) >\Delta(0)>\Delta(-\Jxx)$ (``de-signed'' is
    smaller);
    for $\Jxx > \Jcr$, $\Delta(\Jxx) <\Delta(-\Jxx)$ (``de-signed'' is
    greater), but they are actually asymptotically the same. In this example, for $\Jxx \in (\Jsr, \Jcr]$,
    $\Delta(\Jxx) > \Delta(-\Jxx)$ (``de-signed'' is
    smaller) but this is in general unknown as we do not know how
    small is the non-AC gap near $\Jcr$.  For $\Jxx \in (0.3,
    0.5]$, the min-gap is a weak-AC gap which is not necessarily large
    (in the problem size).
    (b) For a small $\Jxx$, $\EL(-\Jxx) < \EL(0) < \EL(+\Jxx)$ and 
    $\FL(-\Jxx) < \FL(0) < \FL(+\Jxx)$; while $\ER$ and $\FR$ stay the
    same because the \XX-couplers in the driver graph affect $L$
    only. Also, $\FL(-\Jxx) - \EL(-\Jxx) = \FL(+\Jxx) -
    \EL(+\Jxx)$ (c.f. Figure 1 in \cite{de-sign}).
    Both the anti-crossing point $\ap$ and the width $\delta$
    get shifted, result in either strengthened-AC ($-\Jxx$) or
    weakened-AC ($+\Jxx$) and
    $\ACgap(-\Jxx)<\ACgap(0) <\ACgap(+\Jxx)$. 
  }
  \label{fig:mg-Jxx}
\end{figure}

We justify our Claim by three main observations below, supported by the
numerical results of a scalable example. We leave a
rigorous proof as future work.
Nevertheless, our results already reveal the significance of the
driver graph played in the non-stoquastic QA, and the
essential difference of the $\ES{}$ and $\PNS{}$ due to different
strength of $\Jxx$. The evolutions of $\ES{}$ QA and $\PNS{}$ QA can be very
different. It is essential to distinguish them when considering
non-stoquastic Hamiltonians. 
For example, our result in (I), where the non-stoquastic is
$\ES{}$, 
already 
provides a counter-example\footnote{The driver graph actually 
  only needs to be a subgraph of $G|_L$, which can be efficiently
  identified.} to \cite{de-sign}, as shown in
Figure~\ref{fig:mg-Jxx}(a) (and in 
Figure~\ref{fig:mg-Jxx-Q12} for the graph $G'$ in the Appendix).
In fact, it is more of a counter-argument, illustrated in Figure~\ref{fig:mg-Jxx}(b),
to the intuition of Figure 1 in \cite{de-sign}.

For simplicity, since $G_{\ms{driver}},G$ are fixed,  we shall refer the system Hamiltonian
by $\HD(\Jxx)$ for each $\Jxx$.
Let ${\mathcal{E}}_{A_i}(\Jxx,s^*) \mdef \OL{\HD(\Jxx)(s^*)}{A_i,s^*} =
\bra{A_i(s^*)}\HD(\Jxx)(s^*)\ket{A_i(s^*)}$
  when (the eigenstate of  $\HD(\Jxx)(s^*)$) $\ket{E_i(s^*)} \simeq \ket{A_i(s^*)}$ for $i\in \{0,1\}$, $A \in \{L,R\},
  s^* \in \{\apm, \app\}$.
  We will consider the changes of the four energy values $\EL(\Jxx,\apm)$,
  $\FL(\Jxx,\app)$, $\FR(\Jxx,\apm)$, $\ER(\Jxx,\app)$ around the
  anti-crossing as $\Jxx$ changes.    
%(I)
Since $\HD(0) = \Htr$ is stoquastic,  by continuity, there exists a $\Jsr >0$ such that
  $\HD(\Jxx)$ is eventually stoquastic, for $\Jxx \in (0,
  \Jsr]$. Thus for, $\Jxx \in [-\Jsr,
  \Jsr]$, the ground state coefficients $c_i(s) \ge 0$ for all $i$ and
  all $s \in [0,1]$.

  \paragraph{Observation 1.}
  For sufficiently small $\Jxx>0$, $\HD(\Jxx)$ can be seen as a
  perturbed Hamiltonian from $\HD(0)$,
where the perturbation $V =  (1-s)s \sum_{ij \in
  G_{\ms{driver}}}\sigma_i^x \sigma_j^x$. That is, $\HD(\Jxx) = \HD(0)
+ \Jxx V$. 
  The entire evolution of  $\HD(\Jxx)$ is a perturbed evolution of
  $\HD(0)$. In particular, the $\AC{L,R}$ of $\HD(0)$ at $\ap$
  also evolves.

  Observe that
  $\OL{V}{L_0,s} = \sum_{i \oplus j
    \in \EG} c_i(s)c_j(s) >0$ because $c_i(s)>0, c_j(s)>0$ and  $\OL{V}{L_1,s} = \sum_{i \oplus j
    \in \EG} d_i(s)d_j(s) >0$ because
  $\sign(d_i(s)d_j(s))=\sign(c_i(s)c_j(s))>0$; while $\OL{V}{R_i,s}=0$
  because there are
 no XX-couplers
 within $R$.
For $\Jxx>0$, $\EL(\Jxx,\apm) > \EL(0,\apm),
 \FL(\Jxx,\app)>\FL(0,\app)$ while $\ER(\Jxx,\app) \approx
 \ER(0,\app), \FR(\Jxx,\apm) \approx \FR(0,\apm)$.
 This results in a weaker $\AC{L,R}$ at $\ap' <
 \ap$ with a larger $\delta'>\delta$.
%, and $L \subseteq L'$ and $R
% \subseteq R'$.

Since
  \begin{align*}
    \begin{cases}
      \OL{H_X}{\ket{E_0(\apm)}}\approx \OL{H_X}{L_0,\apm}
%      = - \langle \sum_i \sigma_i^x \rangle_{L_0,\apm}
      \mdef -2\sum_{l \in L}\sum_{k \in \nbr_{H_X}(l)} c_l(\apm)c_k(\apm)\\
      \OL{H_{\XX}}{\ket{E_0(\apm)}} \approx \OL{H_{\XX}}{L_0,\apm}
      \mdef 2 \Jxx \sum_{l \in L}\sum_{k \in \nbr_{H_\XX}(l)} c_l(\apm)c_k(\apm),\\
        \end{cases}
  \end{align*}
 we have  $\OL{H_X}{\ket{E_0(\apm)}}<0$ and $\OL{H_{\XX}}{\ket{E_0(\apm)}}$
  is of the same sign of $\Jxx$. $\OL{H_{D}}{\ket{E_1(\apm)}} =
  \OL{H_{X}}{\ket{E_1(\apm)}}$ as they are no \XX-couplers within R. 
  Thus, as $\Jxx>0$ increases, the denominators in Eq.~(\ref{eq:coro}), $\OL{\delta H}{\ket{E_0(\apm)}} -\OL{\delta H}
      {\ket{E_1(\apm)}}$ and $\OL{\delta H}{\ket{E_1(\app)}} -\OL{\delta H}
      {\ket{E_0(\app)}}$, decrease, 
 while the numerator in Eq.~(\ref{eq:coro}), $\delta$,
      increases, we have  $\ACgap(\Jxx) >  \ACgap(0)$ by Corollary
      \ref{coro}.
      
Conversely, $-\Jxx<0$, $\EL(-\Jxx,\apm) < \EL(0,\apm),
 \FL(-\Jxx,\app)<\FL(0,\app)$, and it results in a stronger
 $\AC{L,R}$ at $\ap'' >
 \ap$ with a smaller $\delta'' < \delta$ such that $\ACgap(0)
  >\ACgap(-\Jxx)$.
That is, we have $\ACgap(-\Jxx)<\ACgap(0) <\ACgap(+\Jxx)$.
See Figure~\ref{fig:AC-strength} for an example.
By continuity, for $\Jxx \in [-\Jsr,
\Jsr]$, $\ACgap(\Jxx)$ increases as $\Jxx$ increases.
See Figure~\ref{fig:mg-Jxx} of a plot of $\ACgap{}$ vs $\Jxx$ for
$\HD(\Jxx, G_{\ms{driver}},G)$.
 
%(II)
\paragraph{Observation 2.}
If there is a $\Jxx$ such that $L$ is an arm of an
anti-crossing in \PNS{} $\HD(G, \Jxx, G_{\ms{driver}})$, then $L$ will be
split into $L^+$ and $L^-$, with the coefficients for states in $L^+$
($L^-$ resp.)
being positive (negative resp.).
When $\ket{E_0(s)} \simeq \ket{L_0(s)}$, 
the ground state energy 
\begin{align}
E_0(s)
\approx (1-s)s \Jxx \sum_ { k\oplus k'\in \EG} c_k(s)
c_{k'}(s)
    + \sum_{k
  \in L^{+} \union L^{-}} |c_k(s)|^2 E_k.
\end{align}
Since $\Jxx>0$, the minimum of the (ground state) energy  is attained when $L$ is split into $L^+$
and $L^-$ such that $c_k(s) c_{k'}(s) <0$ for ${k
  \in L^{+}, k' \in L^{-}}$, and $\sum_ { k\oplus k'\in \EG} c_k(s)
c_{k'}(s)$ is minimized.
See Figure~\ref{fig:Q9-SignedOverlap} for an illustration.

Furthermore, if we assume that $t_i\le 2$, $L$ can be split into $L^+$
and $L^-$ such that the only \XX-neighboring is between  one state in $L^+$ and
another state in $L^-$. That is, there is no \XX-neighboring between states
within $L^+$ or states within
$L^-$.
By {\em \XX-neighboring} we mean that  there is exactly one XX-coupler between the two
states. 
Since $\ket{E_1(s)} \approx \ket{R_1(s)}$ for $s \in [s_1(\Jxx), s_2(\Jxx)]$, there
can not have an anti-crossing between the first level and the second
level during this interval.
However,
  this does not immediately imply that the second-level gap
  is large,
  which is required in order to apply DQA-GS successfully.
  The second-level gap can still be small if the second-level energy $E_2(s)$ is close to
  $E_1(s)$  for $s \in [s_1, s_2]$. However, it is
  likely that the second level energy $E_2(s)$ will be different for different
  $\Jxx$ while the first level energy $E_1(s)$ is independent with
  $\Jxx$. Therefore,  there is a $\Jxx
  \in (\Jcr, \Jdr]$ such that the second-level gap is large.

\paragraph{Observation 3.}
There
exist $\Jsp>0$ and $s_c$ such  that $\HD(\Jsp)$ has two anti-crossings bridged by
$(L^+, L^-)$, namely,  a $\AC{R,L}$ at $s_1= s_c-\varepsilon$
and an $\AC{L,R}$ at $s_2=s_c+\varepsilon$ for some
$\varepsilon>0$. The necessary conditions (in Eq.~(\ref{eq:17})) for
both ACs
are satisfied. 
By the condition that 
$\EL(\Jsp, s_c) \approx \ER(\Jsp,s_c) \approx \ER(0,\ap)$ and $s_c \approx \ap$, we obtain 
$\Jsp \approx \mu(\bar{E_L} - \bar{E_R})$
%\frac{\bar{E_L} - \bar{E_R}} {(1-\ap) \sum_{ij \in
%    G_{\ms{driver}}}c_i(s_c)c_j(s_c)} \ge \bar{E_L} - \bar{E_R}$,
for some $\mu>0$, where $\bar{E_L}$ and $\bar{E_R}$ are the average energy (w.r.t problem
Hamiltonian) in $L$ and $R$.
  As $\Jsp$ decreases, there is a sharp exchange between the ground
  and first excited
  states and the two
anti-crossings are merged, resulting in no AC at $\Jcr=\Jsp-\epsilon$.
  See Figure~\ref{fig:Jmerge} for an illustration.
As $\Jxx$ increases from $\Jsp$, the bridge length $2\varepsilon$ increases,
the first AC weakens slightly, while the second AC strengthens
slightly. There exists a $\Jdr>\Jsp$, such that when $\Jxx>\Jdr$, the
fist AC is too weak to be qualified as an anti-crossing.

\paragraph{$L$ and $R$ share some common vertices.} In our example, for illustrative purpose, $L$ and $R$ are not only
disjoint subsets of $2^{[N]}$, but they are also disjoint in the
ground set ($[N]$). However, this is not a required condition, as shown in
$G'$ in Figure~\ref{fig:G1}(b) where the vertex $9$ appears both in
some local minima and the
global minimum. This is an important feature for otherwise the problem
would be solved efficiently classically by solving $G\setminus L$.

\paragraph{Nested double-ACs or a double multi-level AC.} 
When there is a $t_i>2$, we take {\em all} the edges in the independent-cliques
as the driver graph. However, in this case, there may be an AC between
the first excited state and higher and we no longer can guarantee the large
second-level gap. Instead,  it may have a sequence of nested double-ACs
(as shown in Figure~\ref{fig:mgs}(f)) or a double multi-level anti-crossing
where the lowest excited states 
 form a narrow band, as shown in Figure~\ref{fig:Q15AB}(a) in the Appendix for the 15-qubit instance in
  \cite{Amin-Choi}.
In this case, the system would undergo diabatic transitions
to the higher excited states through a sequence of the first ACs of
the nested double-ACs (or through the first AC of a double multi-level
AC),
and then returns to the
ground state through the sequence of second ACs of the nested
double-ACs (or through the second AC of a double multi-level AC).
  More details of the relationship between the driver graph structure and 
 the nested double-ACs (or double multi-level ACs ) will be reported in our subsequent work.

 \paragraph{Rules for constructing the driver graph.}
  In our above Claim, the driver graph $G_{\ms{driver}}$ is taken to
  be $G|_{L}$. Our arguments for the Observations actually reveal
  the intuition for constructing the driver graph. Namely, 
  the idea is to include the \XX-couplers between local minima in $L$
  (so to cause the split) 
  and avoid including \XX-couplers that are coupling states in $R$ and its
  neighbors (so as not to weaken $\ER$).
We impose the independent-cliques structure in the local minima for
the reason of justification, and also for the efficacy of identifying $G|_{L}$ with partial information of $L$, to be
elaborated in the next section. 
We note that  it is possible to
  relax the independent-cliques condition
to {\em almost-independent-cliques} by allowing some edges between these
cliques.

\subsubsection{A General Procedure without Prior Knowledge of the
  Problem Structure}
A natural question is: without knowing the problem structure,
can one design an appropriate driver graph efficiently?
As we note above, for the special \gic{} instances that we are considering,
knowing the set $L$ (without any knowledge of $R$) will be sufficient to construct the
appropriate driver graph.
The idea follows that if we can identify some elements in $L$ (which
is possible because these are the ``wrong answers'' from TFQA), it is possible to recover $G|_L$
explicitly and $L$ implicitly (as $L$ may consist of exponentially many maximal independent sets) by exploring
the graph locally with a classical procedure, because of the special independent-cliques structure.
More specifically, since we assume that the instance has an $\AC{L,R}$ when running with TFQA
algorithm, and if we further assume that the $\AC{L,R}$ is the only
super-polynomially small-gap in the system Hamiltonian,  then with
polynomial time, either by a stoquastic quantum annealer,  or the simulated quantum
annealing (SQA) \cite{SQA2,Hen-2015} ,
one can identify at least one local minimum $l_0$ in $L$, or  a subset of local minima $P
\subseteq L$ by repeating the procedure a polynomial number of
times. This is (Step 1.1) of the \DDD{} algorithm as described in Table \ref{tab:DDDalg}.

%We then identify $G|_{L}$ through $P$ as follows.

Next, we  
identify an IC  through $P$ (Step 1.2) as follows.
First, based on the induced subgraph from the vertices in $P$, we
identify a set of partial cliques of partites. (With some extra work, it is
possible to screen out noisy local minima that are in $P$ but do not
belong to the IC.)
Then we greedily extend the partial cliques to include more vertices
whose weights are similar to the weight of the partites. That is, we
include the similar-weight vertices which are adjacent to the existing
partites in the partial clique, and they are independent from other
partites.
Then we eliminate the vertices in the cliques found that
are adjacent to some vertices in another clique to make sure the
cliques are independent. 
This way we obtain a set of independent cliques of partites, IC, that
generates a set of local minima $L$ containing $P$.
The above procedure can be implemented in polynomial time. 
If  $G|_{L}$ has a ``clear cut boundary'' (i.e. no
ambiguous vertex which connects two cliques), we have
$G[\mbox{IC}] = G|_{L}$.
For example, the IC so discovered through a seed
$l_0=\{1,3,6\}$ in Figure~\ref{fig:G1}(b) will be the same as
$G'|_{L}$.
However, the IC may not be unambiguous.
Many different ICs are possible.
Different procedures (e.g. with
different vertex-weight allowances; or different vertex selection
criteria)  may be used to obtain different potential ICs.

To estimate a range of $\Jxx$ as in (Step 2.1),  we first obtain an upper
bound for the MWIS
(corresponds to $-\bar{E^R}$).  Such a good upper bound can be
obtained through a weighted clique cover as used in the
branch-and-bound algorithm for MWIS \cite{HCS2012,clique-cover,Balas-Xue}.
Together with the estimate bound for $\bar{E_L}$, we obtain a possible
range $(0,U]$ for $\Jxx$. We then in (Step 2.2) run the QA a polynomial number of
different $\Jxx$ evenly distributed in the range$(0,U]$, by annealing $\HD(\Jxx,
    G_{\ms{driver}},G)$  in  
    polynomial time and record the best answer.

\paragraph{Quantum Speedup.}
If the instance has the independent-cliques
structure such that $G[\mbox{IC}] = G|_{L}$, and there is a $\Jxx \in (0,U]$
 such that there is a double-AC in $\HD(\Jxx,
  G_{\ms{driver}},G)$ as described in Claim
  (II),
 the above algorithm \DDD{} will successfully find the ground state
through DQA-GS in polynomial time,
with an $O(c^\alpha)$ speedup over
the $\AC{L,R}$ plagued
stoquastic algorithm where $\alpha=\dist_{H_D}(L,R)$.
It may
have similar speedup for some other classical heuristics that
developed in comparison with QA algorithms (see
e.g. \cite{Hen-2015}).
While Claim (II) remains to be more rigorously proved, 
it is worthwhile to point out that the mechanism of  achieving speedup
here is
through a bridged double-AC which requires the necessity of +$\XX$-interactions,
so as to overcome the single-AC in the stoquastic case.
At the same time, this is achieved through a ``cancellation'' effect, and
not by removing the local minima from
the graph as in the classical algorithm.
This point perhaps would gain much appreciation
when comparing with the classical branch-and-bound solver for the MWIS where a clique
cover is pruned. From our perspective, this mechanism of achieving the
speedup  (by overcoming the local minima) does not seem to
have a similar classical counterpart, c.f. \cite{tunneling-MAL}.

\paragraph{Remark.} For $\Jxx \in (\Jsr, \Jcr)$,  there is no AC in $\HD(\Jxx,
  G_{\ms{driver}},G)$. However, the min-gap may still
be exponentially small. If the min-gap is polynomially large,
\DDD{} would also successfully solve the problem adiabatically in
polynomial time. It will be important to investigate the question
whether for this range of
$\Jxx$, the non-stoquastic but eventually stoquastic $\HD(\Jxx,
G_{\ms{driver}},G)$ is VGP or if it can be efficiently simulated by QMC algorithms.

% We remark that there is an \ES{} $\Jxx$ such that the min-gap is a
% weak non-AC gap  that is larger than the original AC-gap. If \ES{} is VGP,
% then we have a quantum-inspired algorithm that is
% faster than the stoquastic quantum algorithm, once a QMC-simulation
% algorithm for VGP is developed. 

\section{Discussion}
\label{sec:discussion}
In this paper, we point out the essential distinction of eventually
stoquastic (\ES{}) and proper-non-stoquastic 
(\PNS{}) for the non-stoquastic Hamiltonians from the algorithmic
perspective. The quantum evolutions of the \ES{} QA and \PNS{} QA
can be very different, especially when comparing their spectral gaps 
with their de-signed counterparts. Furthermore, we demonstrate the
essentiality of the design of the \XX-driver graph (which
specifies which \XX-couplers to be included) that has not been
considered by other previous work. 

In particular, we describe a proper-non-stoquastic QA
algorithm that can overcome the anti-crossing presented in the
transeverse-field QA algorithm
 and achieve a quantum
speedup for quantum optimization through DQA-GS, with the two essential ingredients:
non-stoquastic +\XX-couplers and the structure of the \XX-driver graph.
This is in contrast to the recent comments made in  \cite{CL2020} that the
``non-stoquasticity is desirable but not essential for quantum enhancement'',
where neither
the distinction of  non-stoquasticity  nor the structure of driver graph is taken
into consideration.
The essentiality of the \PNS{} +\XX-couplers   comes from the ability to
overcome the inevitable single-AC small gap in the TFQA by
``splitting'' the AC into two 
bridged ACs (a double-AC) that would then enable DQA-GS to solve the problem
efficiently.
The possibility of constructing the desired driver graph efficiently
is because one can make use of the ``wrong answers'' from the
anti-crossing plagued TFQA, together with the imposed special independent-cliques
condition.
The quantum speedup through the idea of double-AC-enabled-DQA was
first proposed in \cite{diabatic1}, where an {\em oracular} stoquastic QA algorithm for solving the
glued-tree problem in polynomial time is presented.
There the double-AC is formed by two identical anti-crossings (due to  the symmetric evolution at the
middle). 
In contrast, our double-AC is formed by two anti-symmetric
anti-crossings that share a common arm of the opposite sign as the bridge. As we argue
above, the proper-non-stoquastic interactions (which cause the negative
amplitudes in the ground state) are essential in our argument
to form the bridge.
Admittedly, our algorithmic procedure still requires a more rigorous
proof, nevertheless, we believe that we have provided enough arguments
and evidence
that are confirmed by (scalable) numerical examples.
In particular,
when the hard instances whose local minima subgraph is indeed the
same as the 
graph $G[\mbox{IC}]$ discovered by our algorithm \DDD{},
then there would be a double-AC or a sequenced of nested double-ACs or
a narrow band
that enable DQA to solve the problem successfully in polynomial
time.
This would
achieve  
an exponential speedup over the stoquastic algorithm, and possible
exponential speedup over the state-of-the-art
classical algorithms for such instances.

Furthermore, there are some reasons to further support our arguments:
(1) We make use of the exclusive quantum feature (negative amplitudes)
in our algorithm. 
(2) The special structure of the \gic{} instances are believed to one of the
obstacles for the efficient classical algorithms (heuristics or exact
solvers).
(3) Proper-non-stoquastic (with \XX-driver) may not be VGP\cite{vgp}, and thus
not QMC-simulable.

The insights of this work are obtained based on the novel characterizations of a
  modified and generalized parametrization definition of an
  anti-crossing in
  the context of QOA. We demonstrate how the parametrized AC can be
  used as a
 tool to open up the `black-box' of the QOA algorithm, and thus
 facilitate the analysis and design of adiabatic or diabatic QA
 algorithms.
In particular, we derive the necessary conditions for the formation of
an anti-crossing.
This provides us algorithmic insight into the relationship between an
anti-crossing
and the structure of the local and global minima of the problem.
In our subsequent work, we study with what extra conditions, the
necessary conditions are also sufficient. 

% Studying the minimum spectral gap alone is not sufficient
%  to understand the quantum evolution of the QA algorithm, e.g. for the
% asymptotically same min-gap, it can be due to a single-AC in the
% stoquastic case whereas a bridged double-AC in the proper
% non-stoquastic case. The former takes exponential adiabatic
% run-time whereas the latter takes polynomial diabatic run-time.  
% Furthermore, one can discover some interesting quantum
% evolution, such as the two anti-crossings merged at $\Jcr$, where a ``sharp exchange'' occurs not because of an
% anti-crossing, but due to the merging of two anti-crossings.

There are several future works to be followed:
(1) We are developing a framework to better understand
the role of the negative amplitudes play in the non-stoquastic
Hamiltonians.
(2) We construct  a scalable
family of \gic{} instances that are believed to be hard for the
classical algorithms and the TFQA algorithms.
We will also address the question of practical applications of \gic{}
instances, and the relaxation of the independent-cliques condition to the
almost-independent-cliques such that some edges between cliques are
allowed.
(3) We will investigate how to best obtain a subset of local minima to
build the \XX-driver graph. There are also several different ways to improve the design of
the \XX-driver graph. 
(4) We can consider the possible generalization of the AC
definition to include a multi-level anti-crossing such
that the anti-crossing is between one energy level and a band of 
of closely tie together energy
levels (a pseudo degenerate state) which can make the analysis of the
algorithm more robust.
(5) The problem of the {\em intertwined} {\bf sparse} universal hardware graph
and the minor-embedding problem \cite{minor1,minor2}, including both
problem graph
and driver graph, will also be addressed.

Finally, since proper-non-stoquastic Hamiltonians can be
simulated by other quantum models with polynomial overhead, it is our
hope that our method
provides a quantum algorithm design paradigm for solving the
optimization problems.

\section{Methods}
\subsection{Proof of Proposition \ref{prop1}}
\begin{proof}
%{(of Proposition \ref{prop1})}
  The derivation is based on the assumption of the
linear interpolation between $H_D$ (driver Hamiltonian)  and $H_P$
(problem Hamiltonian). That is,
$H(s) = (1-s)H_D + s H_P$.  
(The catalyst version will be approximated.) 
Note that we can write $H(s) = H(\ap) + (s-\ap) (H_P -H_D)$. 
The idea is based on the non-degenerate perturbation theory where the unperturbed Hamiltonian $H^{(0)}= H(\ap)$, and the perturbation
$\delta H (=\frac{\delta H}{\delta s}) =H_P-H_D$. We apply the
perturbation to 
$H(\ap+\lambda) = H(\ap) + \lambda \delta H$, 
where $|\lambda|$ is sufficiently small. (The standard non-degenerate perturbation
theory usually apply to the positive small $\lambda$. See
e.g. \cite{NotesbyZwiebach}.
Here we apply to
both positive and negative $\lambda$. For the small negative $\lambda$, one can
think of it as it equivalently applies the negative sign to
$\delta H$.)
From the perturbation theory (see, e.g. \cite{NotesbyZwiebach} Chapter
1 page 8), we have the states and energies for $H(\ap+\lambda) = H(\ap) +
\lambda \delta H$:

\begin{align}
  \label{eq:4}
\begin{cases}
  \ket{E_n(\ap+\lambda)}=\ket{E_n(\ap)} - \lambda \sum_{k \neq n}
  \frac{\delta H_{kn}(\ap)}{E_k(\ap) - E_n(\ap)} \ket{E_k(\ap)} +
  O(\lambda^2)\\
E_n(\ap + \lambda) = E_n(\ap) + \lambda \delta H_{nn}(\ap) - \lambda^2 \sum_{k \neq n}
  \frac{|\delta H_{kn}(\ap)|^2}{E_k(\ap) - E_n(\ap)} + O(\lambda^3) 
\end{cases}
\end{align}
where $\delta H_{mn}(\ap) \equiv \bra{E_m(\ap)} \delta H | E_n(\ap)
\rangle$.
By definition (condition (a)), $\Delta_{10}(\ap)=E_1(\ap) - E_0(\ap) << E_2(\ap) - E_0(\ap)=\Delta_{20}(\ap)$, and
$|\lambda|$ is sufficiently small, we apply the above formulae to
$n=0,1$ and obtain:
\begin{align}
  \label{eq:5}
  \begin{cases}
    \ket{E_0(\ap+\lambda)} \simeq \ket{E_0(\ap)} - \lambda 
  \frac{\delta H_{10}(\ap)}{\Delta_{10}(\ap)} \ket{E_1(\ap)} \\
\ket{E_1(\ap+\lambda)} \simeq \ket{E_1(\ap)}  + \lambda 
  \frac{\delta H_{10}(\ap)}{\Delta_{10}(\ap)} \ket{E_0(\ap)} \\
  \end{cases}
\end{align}
and
\begin{align}
  \label{eq:6}
  \begin{cases}
E_0(\ap + \lambda) \doteq E_0(\ap) + \lambda \delta H_{00}(\ap) - \lambda^2 
  \frac{|\delta H_{10}(\ap)|^2}{\Delta_{10}(\ap)} \\
  E_1(\ap + \lambda) \doteq E_1(\ap) + \lambda \delta H_{11}(\ap) + \lambda^2 
  \frac{|\delta H_{10}(\ap)|^2}{\Delta_{10}(\ap)} 
   \end{cases}
\end{align}
[where the error tolerance in $\doteq$ due to perturbation is $\epsilon_p << \Delta_{10}(\ap)/\Delta_{20}(\ap))$.]
Eq.\eqref{eq:6} thus gives rise to the two parabolas with $\beta_0=\delta
H_{00}(\ap)$ and $\beta_1  = \delta H_{11}(\ap)$ 
and $ \alpha =  \frac{|\delta H_{10}(\ap)|^2}{\Delta_{10}(\ap)}>0$.
Furthermore, $ \beta_0 \simeq \beta_1$ (by Property (C4)).
This proves Property (C1). 

From Eq.~\eqref{eq:5},  for all $k \in \widetilde{L} \union
\widetilde{R}$, we have
\begin{align}
  \label{eq:7}
  \begin{cases}
    c_k (\ap+\lambda) \doteq c_k(\ap)  -\lambda 
  \frac{\delta H_{10}(\ap)}{\Delta_{10}(\ap)}  d_k(\ap)\\
 d_k (\ap+\lambda) \doteq  d_k(\ap)  +\lambda 
  \frac{\delta H_{10}(\ap)}{\Delta_{10}(\ap)}  c_k(\ap)
  \end{cases}
\end{align}
for $\lambda \in [-\delta,\delta]$ (the error tolerance in $\doteq$ is $\epsilon_p$).

From condition (iv),  we have  $|L_0(s-\delta)| \approx 1-\gamma$
while $|L_0(s+\delta)| \approx 0$, we have  $|c_l(s)|
\downarrow$; and $|L_1(s+\delta)| \approx 1-\gamma$
while $|L_1(s-\delta)| \approx 0$, we have $|d_l(s)| \uparrow$, for
all $l \in L$.
Similarly, we have $|c_r(s)| \uparrow, |d_r(s)| \downarrow$, for all
$r \in R$.
This proves Property (C2). 

To show Property (C3), we distinct two possible cases: (1)
$\delta H_{10}(\ap)>0$ and (2) $\delta H_{10}(\ap)<0$.
Assume case (1), i.e. $\delta H_{10}(\ap)>0$.
For $l \in L$, from the above arguments for Property (C2), we $|c_l(s)| \downarrow, |d_r(s)|
\uparrow$, i.e. $|c_l(s)|$ is decreasing while $|d_r(s)|$ is increasing.
By Eq.~\eqref{eq:7} , $c_l(s)$ must be in the same
sign of $d_l(s)$ (i.e. $c_l(s) d_l(s) >0$) (otherwise both would be
increasing, a contradiction.)
Similarly, we can deduce that $c_r(s) d_r(s) <0 \mbox{ for } r \in
R$. 
That is, we have 
$$
\left\{
  \begin{array}{l}
    L_0 \mbox{ and } L_1 \mbox{ in the same sign: }
    \sign(c_l(s))\sign(d_l(s)) =+1 \mbox{ for all } l \in L\\
  R_0\mbox{ and } R_1 \mbox{ in the opposite sign: } \sign(c_r(s)) \sign(d_r(s)) =-1 \mbox{ for all } r \in R\\
  \end{array}
\right.
$$
where $s :\ap -\delta \leadsto \ap+\delta$.

Similarly, we can show that for case (2):
%Case (2): $\delta H_{10}(\ap)<0$,
$$
\left\{
  \begin{array}{l}
     L_0 \mbox{ and } L_1 \mbox{ in the opposite sign: }
    \sign(c_l(s))\sign(d_l(s)) = -1 \mbox{ for all } l \in L\\
   R_0\mbox{ and } R_1 \mbox{ in the same sign: } \sign(c_r(s)) \sign(d_r(s)) =+1 \mbox{ for all } r \in R\\
  \end{array}
\right.
$$
where $s :\ap -\delta \leadsto \ap+\delta$.
This proves Property (C3).

To prove Property (C4), substitute $\lambda=-\delta, +\delta$ to
Eq.~\eqref{eq:7},
we get 
\begin{align*}
  \begin{cases}
    c_k (\ap-\delta) \doteq c_k(\ap)  +\delta 
  \frac{\delta H_{10}(\ap)}{\Delta_{10}(\ap)}  d_k(\ap)\\
 c_k (\ap+\delta) \doteq c_k(\ap)  -\delta 
  \frac{\delta H_{10}(\ap)}{\Delta_{10}(\ap)}  d_k(\ap)
  \end{cases}
\end{align*}
Summing up the above  two equations, we get $2 c_k(\ap) \doteq c_k (\ap-\delta) + c_k (\ap
+\delta)$. Similarly, $2 d_k(\ap) \doteq d_k (\ap-\delta) + d_k (\ap
+\delta)$.
By the full-exchange condition in Eq.(\ref{eq:13}), we have  $|c_k
(\ap-\delta)| \doteq  |d_k (\ap
+\delta)|$ and $|c_k (\ap+\delta)| \doteq |d_k (\ap -\delta)|$, consequently we have
$|c_k(\ap)| \doteq |d_k(\ap)|$.
This proves Property (C4).
\end{proof}

\paragraph{Remark on the proof.} This derivation is based on the linear
interpolation path. However, for the catalyst, we can similarly apply
the arguments to the
linear approximation near the anti-crossing point, with $\delta H =
H_P - \ap H_D$.

\subsection{Proof of the Necessary Conditions}
As in the proof of the properties of the anti-crossing,
we express the system Hamiltonian in a perturbed form (but
we are not applying the perturbation theory here; our arguments are
exact).
For any $s_0 \in [0,1]$, $0<\lambda (<1-s_0)$,
\begin{align}
  \label{eq:1}
H(s_0+\lambda) = H(s_0) + \lambda \delta H  
\end{align}
where  $\delta H= H_P - H_D$.
 Let $\OL{\delta\ham}{\ket{\Psi}} \mdef
\bra{\Psi}\delta\ham\ket{\Psi}$.

\begin{lemma}
  For any two levels $i,j \ge 0$,  $s_0 \in [0,1]$,  and $0<\lambda (<1-s_0)$,
  we have 
  \begin{align}
  \label{eq:3}
  E_i(s_0 +\lambda) - E_j(s_0) = \lambda \frac{\bra{E_j(s_0)} \delta
  H \ket{E_i(s_0 + \lambda)}}{
    \langle{E_j(s_0)}\ket{E_i(s_0+\lambda)}}
\end{align}
\label{lemma41}
\end{lemma}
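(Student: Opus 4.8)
The plan is to prove the identity by a direct ``sandwich'' computation: multiply the exact relation $H(s_0+\lambda) = H(s_0) + \lambda\,\delta H$ of Eq.~(\ref{eq:1}) on the left by the eigenbra $\bra{E_j(s_0)}$ and on the right by the eigenket $\ket{E_i(s_0+\lambda)}$, and then use Hermiticity to let each copy of the Hamiltonian act on its own eigenvector. No perturbation theory is needed here; as the surrounding text emphasises, the argument is exact, and the only subtlety is a non-degeneracy condition on the overlap in the denominator.

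Concretely, from Eq.~(\ref{eq:1}) I would form the matrix element
\begin{align*}
\bra{E_j(s_0)} H(s_0+\lambda) \ket{E_i(s_0+\lambda)}
= \bra{E_j(s_0)} H(s_0) \ket{E_i(s_0+\lambda)}
+ \lambda\,\bra{E_j(s_0)} \delta H \ket{E_i(s_0+\lambda)}.
\end{align*}
On the left-hand side, $H(s_0+\lambda)\ket{E_i(s_0+\lambda)} = E_i(s_0+\lambda)\ket{E_i(s_0+\lambda)}$, so it equals $E_i(s_0+\lambda)\,\langle E_j(s_0)|E_i(s_0+\lambda)\rangle$. In the first term on the right-hand side, since $H(s_0)$ is Hermitian with real eigenvalue $E_j(s_0)$ on $\ket{E_j(s_0)}$, we have $\bra{E_j(s_0)} H(s_0) = E_j(s_0)\bra{E_j(s_0)}$, so that term equals $E_j(s_0)\,\langle E_j(s_0)|E_i(s_0+\lambda)\rangle$. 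Collecting terms yields
\begin{align*}
\bigl(E_i(s_0+\lambda) - E_j(s_0)\bigr)\,\langle E_j(s_0)|E_i(s_0+\lambda)\rangle
= \lambda\,\bra{E_j(s_0)} \delta H \ket{E_i(s_0+\lambda)},
\end{align*}
and dividing through by $\langle E_j(s_0)|E_i(s_0+\lambda)\rangle$ gives Eq.~(\ref{eq:3}).

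The only genuine obstacle is guaranteeing that the denominator $\langle E_j(s_0)|E_i(s_0+\lambda)\rangle$ is nonzero, so I would state the hypotheses for this explicitly. For the diagonal case $i=j$ it is automatic once $\lambda$ is small enough: away from exact level crossings the instantaneous eigenstates depend continuously on $s$, so the overlap stays close to $1$. For $i\neq j$ one restricts attention to those $\lambda$ (equivalently, those level pairs and that portion of the evolution) for which the overlap is nonzero, which is precisely the regime in which the lemma is later applied, where $\ket{E_i(s_0+\lambda)}$ still carries appreciable weight on the arm that dominates $\ket{E_j(s_0)}$. As a sanity check I would note that letting $\lambda \to 0$ with $i=j$ recovers the Hellmann--Feynman relation $\partial_s E_i(s_0) = \bra{E_i(s_0)}\delta H\ket{E_i(s_0)}$, consistent with the first-order term in Eq.~(\ref{eq:4}).
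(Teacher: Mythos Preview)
Your proof is correct and follows essentially the same approach as the paper: both take the exact relation $H(s_0+\lambda)=H(s_0)+\lambda\,\delta H$, sandwich it between $\bra{E_j(s_0)}$ and $\ket{E_i(s_0+\lambda)}$, let each Hamiltonian act on its own eigenvector by Hermiticity, and divide by the overlap. Your version is somewhat more explicit about the nonvanishing of the denominator (a hypothesis the paper leaves implicit) and adds the Hellmann--Feynman sanity check, but the core argument is identical.
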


\begin{proof}
  Since $H(s_0+\lambda)\ket{E_i(s_0+\lambda)} = E_i(s_0+\lambda)
\ket{E_i(s_0+\lambda)}$,
we have
\begin{align}
  \label{eq:5}
  (E_i(s_0+\lambda) - H(s_0)) \ket{E_i(s_0 +\lambda)} = \lambda \delta
  H \ket{E_i(s_0 +\lambda)}
\end{align}

Apply $\bra{E_j(s_0)}$ to the left of the above equation, we get Eq. (\ref{eq:3}).
\end{proof}

\begin{proof} (of Theorem \ref{thm:nece})
We apply Lemma \ref{lemma41} to an $\AC{L,R}$ at $\ap$ with width
$\delta$, by setting
$\{i,j\}=\{0,1\}$, and $s_0=\apm, s_0+\lambda =\app$. That is,
$\lambda=2\delta$. 
We get
\begin{align*}
  % \label{eq:3}
  \begin{cases}
  E_1(\app) - E_0(\apm) = 2\delta \frac{\bra{E_0(\apm)} \delta H
  \ket{E_1(\app)}}{\langle{E_0(\apm)}\ket{E_1(\app)}}\\
E_0(\app) - E_1(\apm) = 2\delta \frac{\bra{E_1(\apm)} \delta H
  \ket{E_0(\app)}}{\langle{E_1(\apm)}\ket{E_0(\app)}}
  \end{cases}
\end{align*}

By the full-exchange condition in Eq. (\ref{eq:full-exchange}), we have 
\begin{align}
  \label{eq:2}
  \begin{cases}
  |{\langle{E_0(\apm)}\ket{E_1(\app)}} | \approx
  |{\langle{L_0(\apm)}\ket{L_1(\app)}} |  \approx 1\\
  |{\langle{E_1(\apm)}\ket{E_0(\app)}}| \approx 
 |{\langle{R_1(\apm)}\ket{R_0(\app)}}| 
\approx 1.
  \end{cases}
\end{align}

We thus have 
\begin{align}
   \label{eq:33}
  \begin{cases}
    E_1(\app) - E_0(\apm) \approx 2\delta \OL{\delta H}{\ket{E_0(\apm)}} \\
E_0(\app) - E_1(\apm) \approx 2\delta \OL{\delta H}{\ket{E_1(\apm)}}
  \end{cases}
\end{align}

By Property (C1), 
\begin{align}
  \label{eq:4}
  E_1(\app) - E_0(\app) =  E_1(\apm) -  E_0(\apm)
  \doteq 2 \alpha \delta^2
\end{align}
where 
  $\alpha =  \frac{|\bra{E_1(\ap)}\delta H\ket{E_0(\ap)}|^2}{\Delta_{10}(\ap)}>0$.
  Subtract the two equations in Eq.(\ref{eq:33}),
  we thus get
\begin{align}
   \label{eq:171}
   \begin{cases}
  \OL{\delta H}{\ket{E_0(\apm)}} -\OL{\delta H}
  {\ket{E_1(\apm)}}  \approx  2 \eta^2\frac{\delta }{\Delta_{01}(\ap)} \\
  \OL{\delta H}{\ket{E_1(\app)}} -\OL{\delta H}
  {\ket{E_0(\app)}} \approx 2 \eta^2\frac{\delta }{\Delta_{01}(\ap)} 
   \end{cases}
 \end{align}
 where $\eta =\bra{E_1(\ap)}\delta H\ket{E_0(\ap)}$.
 
    In particular, we have
 \begin{align}
   \label{eq:171}
   \begin{cases}
  \OL{\delta H}{\ket{E_0(\apm)}}  > \OL{\delta H}
  {\ket{E_1(\apm)}}  \\
  \OL{\delta H}{\ket{E_1(\app)}} > \OL{\delta H}
  {\ket{E_0(\app)}} 
   \end{cases}
 \end{align}
 \end{proof}

 \subsection{Proofs of AC-gap Bound}

 \begin{lemma}
  \label{lemma-gap}
 %  \begin{description}
%     \item{Case (1): $\dist_{H_D}(L,R) \le 3$:}
% $\Delta_{10}(\ap) = -4 (1-s)\bra{\widetilde{L}(\ap)} H_D
% \ket{\widetilde{R}(\ap)}.$
      
% \item{Case (2): $\dist_{H_D}(L,R) > 3$:}
 Assume $\alpha=\dist_{H_D}(L,R) >3.$
          \begin{align}
            \label{eq:1}
            \Delta_{10}(\ap)
                      =  \Theta\left(\left\vert\sum_{i \in g(L), j \in g(R), \bra{i}H_D\ket{j}
            \neq 0}
            c_i(\ap) c_j(\ap) \bra{i}H_D\ket{j}\right\vert\right)
            %   =  \Theta\left(|\sum_{i \in g(L), j \in g(R), \bra{i}H_D\ket{j}
            % \neq 0}
            % d_i(\ap) d_j(\ap) \bra{i}H_D\ket{j}|\right)
\end{align}          
where  
$g(L) = \{ i \in n(L): \exists j \in n(R) \mbox{ s.t.  } \bra{i}H_D\ket{j}
            \neq 0 \}$,
$g(R) = \{j \in n(R): \exists i \in n(L) \mbox{ s.t. } \bra{i}H_D\ket{j}
            \neq 0 \}$.
% \end{description}
\end{lemma}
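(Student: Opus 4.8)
The plan is to read off $\Delta_{10}(\ap)$ as the standard two-level splitting, exploiting the symmetry--antisymmetry structure of Property (C4) in Proposition \ref{prop1}. Write $\ket{\psi_L}\mdef\ket{\widetilde{L}(\ap)}$ and $\ket{\psi_R}\mdef\ket{\widetilde{R}(\ap)}$, so that $\ket{\psi_L}$ is supported on $\widetilde{L}$ with amplitudes $c_i(\ap)$, $\ket{\psi_R}$ is supported on $\widetilde{R}$ with amplitudes $c_j(\ap)$, and $\widetilde{L}\cap\widetilde{R}=\emptyset$. By (C4), in case $(*)$ we have $\ket{E_0(\ap)}\doteq\ket{\psi_L}+\ket{\psi_R}$ and $\ket{E_1(\ap)}\doteq\ket{\psi_L}-\ket{\psi_R}$ (case $(**)$ only flips an overall sign on $\ket{\psi_L}$ and is identical in what follows); note that $\ket{\psi_L}$ and $\ket{\psi_R}$ have squared norm $\approx 1/2$ by the last line of (C4), so these combinations are automatically normalized up to the $\doteq$ tolerance.

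Next I would evaluate $E_i(\ap)=\bra{E_i(\ap)}H(\ap)\ket{E_i(\ap)}$ by substituting the two decompositions. Since $H(\ap)$ is real and Hermitian, $\bra{\psi_L}H(\ap)\ket{\psi_R}=\bra{\psi_R}H(\ap)\ket{\psi_L}$, so the ``diagonal-block'' part $\bra{\psi_L}H(\ap)\ket{\psi_L}+\bra{\psi_R}H(\ap)\ket{\psi_R}$ enters $E_0(\ap)$ and $E_1(\ap)$ with the same sign and cancels in the difference, while the ``cross'' part $2\bra{\psi_L}H(\ap)\ket{\psi_R}$ enters with opposite signs. Hence $\Delta_{10}(\ap)=E_1(\ap)-E_0(\ap)\doteq -4\bra{\psi_L}H(\ap)\ket{\psi_R}$, and since the left-hand side is positive this gives $\Delta_{10}(\ap)\doteq 4\,\bigl|\bra{\psi_L}H(\ap)\ket{\psi_R}\bigr|$. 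Writing $H(\ap)=(1-\ap)H_D+\ap H_P$ (for the catalyst, the linear approximation near $\ap$ as in the remark following Proposition \ref{prop1}) and using that $H_P$ is diagonal in the computational basis while $\widetilde{L}\cap\widetilde{R}=\emptyset$, the $H_P$ part contributes nothing, so $\bra{\psi_L}H(\ap)\ket{\psi_R}=(1-\ap)\sum_{i\in\widetilde{L},\,j\in\widetilde{R}}c_i(\ap)c_j(\ap)\bra{i}H_D\ket{j}$. A term is nonzero only when $\bra{i}H_D\ket{j}\neq 0$, i.e. $\ket{j}\in\nbr_{H_D}(\ket{i})$, which for $i\in\widetilde{L}$, $j\in\widetilde{R}$ is exactly the defining condition $i\in g(L)$ and $j\in g(R)$. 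As $\ap$ is bounded away from $1$ for the instances in question, $4(1-\ap)=\Theta(1)$, and we obtain $\Delta_{10}(\ap)=\Theta\bigl(\bigl|\sum_{i\in g(L),\,j\in g(R),\,\bra{i}H_D\ket{j}\neq 0}c_i(\ap)c_j(\ap)\bra{i}H_D\ket{j}\bigr|\bigr)$.

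It remains to locate where the hypothesis $\alpha=\dist_{H_D}(L,R)>3$ is used. From the definitions of $n(L),n(R)$, any $i\in g(L)$ has $\dist_{H_D}(\{i\},L)\le\dist_{H_D}(\{i\},R)$ and, being $H_D$-adjacent to some $j\in n(R)$, satisfies (via the triangle inequality for the $H_D$-graph metric) $\dist_{H_D}(\{i\},R)\le 1+\dist_{H_D}(\{i\},L)$; together with $\dist_{H_D}(\{i\},L)+\dist_{H_D}(\{i\},R)\ge\alpha$ this forces $\dist_{H_D}(\{i\},R)\ge\alpha/2$ and $\dist_{H_D}(\{i\},L)\ge\alpha/2-1$. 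For $\alpha>3$ this gives $g(L)\cap L=\emptyset$ and that no state of $g(L)$ is $H_D$-adjacent to $R$, and symmetrically for $g(R)$; so the interface layers $g(L),g(R)$ are genuinely interior, negligible-coefficient sets, well separated from the cores $L$ and $R$. This is what keeps the two-level reduction at $\ap$ uncontaminated by the cores and confines the $\widetilde{L}$--$\widetilde{R}$ coupling to the interface.

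The main obstacle is the honest error bookkeeping: the right-hand side above is exponentially small in $\alpha$ (this lemma is precisely what feeds the $\Theta(\zeta^{\dist_{H_D}(L,R)})$ estimate of Theorem \ref{thm-gap}), whereas the $\doteq$ in (C4) carries the fixed-order errors $\epsilon_v,\gamma,\epsilon_p$, which a priori could swamp the splitting. To promote the heuristic identity to a genuine $\Theta$ one must show that the pieces discarded when replacing $\ket{E_i(\ap)}$ by $\ket{\psi_L}\pm\ket{\psi_R}$ --- in particular the off-diagonal action of $H(\ap)$ on the large amplitudes sitting on $L$, $R$ and their LENS --- produce contributions no larger than the interface sum; this rests on the geometric decay of $|c_i(\ap)|$ as one moves from $L$ (or $R$) toward the interface, and it is here (rather than merely in making $g(L),g(R)$ well-behaved) that $\alpha>3$ is genuinely needed, the interface sitting several $H_D$-steps into the decay regime. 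For the matching lower bound one additionally needs that the interface sum itself does not collapse through cancellation to something much smaller; but at $\ap$ the localized states $\ket{\psi_L}$ and $\ket{\psi_R}$ are degenerate to within the gap and the only coupling between their supports runs through $g(L)$--$g(R)$ matrix elements, so no cancellation beyond the one already displayed can occur, and $\bigl|\bra{\psi_L}H(\ap)\ket{\psi_R}\bigr|$ serves as both the upper and the lower estimate up to constants.
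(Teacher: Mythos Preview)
Your proof follows essentially the same route as the paper's: invoke the SAS decomposition of Property (C4) to write $\ket{E_i(\ap)}\doteq\ket{\widetilde L(\ap)}\pm\ket{\widetilde R(\ap)}$, substitute into $E_1(\ap)-E_0(\ap)$ so that the diagonal blocks cancel and only $-4\bra{\widetilde L(\ap)}H(\ap)\ket{\widetilde R(\ap)}$ survives, drop the $H_P$ piece by disjointness of supports, and use $\alpha>3$ to reduce the $\widetilde L$--$\widetilde R$ coupling to the $g(L)$--$g(R)$ interface. Your discussion of the error bookkeeping (the fixed $\epsilon_v,\gamma$ tolerances versus the exponentially small interface sum) and of the precise role of $\alpha>3$ is more careful than the paper's---which simply asserts the error is ``absorbed by the above term''---but the argument is the same.
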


\begin{proof}
  %(of Lemma \ref{lemma-gap})
By definition, $E_i(\ap)=\bra{E_i(\ap)} H(\ap)
\ket{E_i(\ap)}$ for $i=0,1$.
Thus,  we have
\begin{align}
  \label{eq:14}
  \Delta_{10}(\ap) =\bra{E_1(\ap)} H(\ap)
  \ket{E_1(\ap)} - \bra{E_0(\ap)} H(\ap) \ket{E_0(\ap)}
\end{align}

 We assume that the anti-crossing is $(1-\epsilon_v)$-full-exchange
 and it satisfies the approximate SAS property in (C4).
 %WLOG, we assume that it is the case (*). 
 %   \begin{align}
%     \label{eq:2}
%     \Delta_{10}(\ap) &=E_1(\ap) - E_0(\ap)\\
% & =\bra{E_1(\ap)} H(\ap) \ket{E_1(\ap)} - \bra{E_0(\ap)} H(\ap) \ket{E_0(\ap)} 
%   \end{align}

First, we show that if the SAS is exact, i.e.,
\begin{align*}
%  \label{eq:10}
  (*)
  \begin{cases}
    \ket{E_0(\ap)} =\ket{\widetilde{L}(\ap)} \textcolor{red}{+} \ket{\widetilde{R}(\ap)}\\
           \ket{E_1(\ap)} = \ket{\widetilde{L}(\ap)} \textcolor{red}{-} \ket{\widetilde{R}(\ap)}
         \end{cases}
\end{align*}
the \ACgap{} is contributed from the
coefficients of negligible states. 
Substitute  (*) into Eq. (\ref{eq:14}), because of the cancellation of
the opposite terms, we have $ \Delta_{10}(\ap) = -4 \bra{\widetilde{L}(\ap)} H(\ap)
\ket{\widetilde{R}(\ap)}$.
By assumption that $\widetilde{L} \cap \widetilde{R} = \emptyset$, we have $\bra{\widetilde{L}(\ap)} H_P
\ket{\widetilde{R}(\ap)}=0$.
Thus, 
$\Delta_{10}(\ap) \doteq -4 (1-s)\bra{\widetilde{L}(\ap)} H_D
\ket{\widetilde{R}(\ap)}.$
If $\dist_{H_D}(L,R) > 3$, it implies $\bra{L} H_D
\ket{R} =0$, and thus we have $\Delta_{10}(\ap) \doteq -4 (1-s)\bra{n(L)(\ap)} H_D
\ket{n(R)(\ap)}$.

Let $g(L) = \{ i \in n(L): \exists j \in n(R) \mbox{ s.t.  } \bra{i}H_D\ket{j}
            \neq 0 \}$,
$g(R) = \{j \in n(R): \exists i \in n(L) \mbox{ s.t. } \bra{i}H_D\ket{j}
            \neq 0 \}$ be the set of states that contribute non-zero
            values to the gap.

That is, 
\begin{align*}
            \Delta_{10}(\ap)) & = \kappa \bra{g(L)(\ap)} H_D
          \ket{g(R)(\ap)}\\
          &= \kappa  \sum_{i \in g(L), j \in g(R), \bra{i}H_D\ket{j}
            \neq 0}
            c_i(\ap) c_j(\ap) \bra{i}H_D\ket{j}\\
            & = \kappa  \sum_{i \in g(L), j \in g(R), \bra{i}H_D\ket{j}
            \neq 0}
            d_i(\ap) d_j(\ap) \bra{i}H_D\ket{j}
\end{align*}    
where $\kappa =-4(1-s)$.

When the AC is $(1-\epsilon_v)$-full-exchange, the small error due to
the differences in $c,
d$ of the non-negligible states  is
absorbed by the above term. 
\end{proof}

\paragraph{Remark.} 
As the AC becomes weaker, the error term due to the difference in $c,
d$ of the non-negligible states can be large enough to dominate the
gap size, in this case
the min-gap will be 
$\omega(\zeta^\alpha)$ (strictly greater than the order of $\zeta^\alpha$).

The above lemma implies that the non-zero contributions to the gap come from the
(at least $(\alpha/2)!$ ) states in
$g(L)$ and $g(R)$ from the ground state . Notice that we express the  \ACgap{} 
only on the ground state (it can be also on the first excited state as
$c_k(\ap) \doteq d_k(\ap)$ )
instead of the difference between the two levels. 

Next we show that the coefficients $c_k$ decrease in geometric order
of its distance from $L,R$, and thus the dominated terms are those in
the path of the 
shortest distance between $L,R$.

\begin{proof} (of Theorem \ref{thm-gap})
First, assume the anti-crossing is near the end of
annealing (i.e. $\ap \approx 1$) (with $R$ as the ground state, and
$L$ as the first excited state) as in the perturbative crossing case.

Observe that:
%$c_k(\ap) \doteq d_k(\ap)  \doteq 1/2 c_k(\ap + \delta)$, $c_k(\ap) \doteq d_k(\ap)  \doteq 1/2 d_k(\ap + \delta)$:
For $k \in n(R)$: $c_k(\ap - \delta) \neq 0 \longrightarrow c_k(\ap +
\delta) =0$, we have $c_k(\ap) \doteq 1/2 c_k(\ap - \delta)$.
% where we can 
%approximately compute  $c_k(\ap + \delta)$.

For $k \in n(L)$: we have $d_k(\ap - \delta) \neq 0 \longrightarrow
d_k(\ap + \delta)  =0$,
thus $c_k(\ap) \doteq d_k(\ap)  \doteq 1/2 d_k(\ap - \delta)$.

Both $c_k(\ap -\delta)$ and $d_k(\ap -\delta)$ can be computed from
the high-order corrections using
Brillouin-Wigner perturbation theory \cite{BWnotes}. 
Let $H(\lambda) = H_P + \lambda H_D$, where the
unperturbed problem Hamiltonian $H_P$, while the perturbation
Hamiltonian is $H_D$.
The high order corrections of the perturbed state is given by 
\begin{align*}
  \label{eq:3}
  \ket{E_n(\lambda)} = \ket{n} + \lambda \sum_{m_1 \neq n} \ket{m_1}
\frac{\bra{m_1}H_D\ket{n}}{E_n(\lambda) - E_{m_1}}
+ \lambda^2 \sum_{m_1 \neq n} \sum_{m_2 \neq n}\ket{m_1}
\frac{\bra{m_1}H_D\ket{m_2}\bra{m_2}H_D\ket{n} }  {(E_n(\lambda) -
  E_{m_1})(E_n(\lambda) - E_{m_2})}
+ \ldots\\
+ \lambda^k \sum_{m_1 \neq n} \sum_{m_2 \neq n}  \ldots \sum_{m_k \neq n} 
\ket{m_1}
\frac{\bra{m_1}H_D\ket{m_2} \bra{m_2}H_D\ket{m_3}  \ldots \bra{m_k}H_D\ket{n} }  {(E_n(\lambda) -
  E_{m_1})(E_n(\lambda) - E_{m_2})\ldots (E_n(\lambda) - E_{m_k})}
+ \ldots
\end{align*}
where $E_n(\lambda)$ in the denominator is the (unknown) energy of the $\ket{E_n(\lambda)}$.

We apply the above perturbation formula to the ground state ($R$) to compute
$c_k(\ap)$ for $k \in n(R)$, and to the first exited state ($L$) to
compute $c_k(\ap)$ for $k \in n(L)$.  [Using $\ER(\app)$ or
$\EL(\apm)$ for $E_n(\lambda)$.]

The coefficients decrease in an almost geometric order, that is, $c_k
\approx a^k$ with $0<a<1$, $d_k \approx b^k$ with $0 <b <1$. 
The larger $k$ ($H_D$-driver distance) the
smaller the coefficients.
Therefore it is the pairs with smallest distance $t$ (corresponding to
the $t!$ shortest paths)  between $L,R$ that dominate the
gap size in Eq. (\ref{eq:1}) of Lemma, where $t=\dist_{H_D}(L,R)$.

For the case that $\ap << 1$, we can however shift the
anti-crossing to near the end, using the scaling theorem in
\cite{Choi2020}.
\end{proof}
The above argument can be generalized (by re-deriving Brillouin-Wigner
perturbation theory) to include when $L$ consists of
  the almost degenerate first excited states, and/or $R$ consists of
  $\GS$ and its LENS, when the property that coefficients decrease in
geometric order of its distance still holds.

\appendix
\centerline{Appendix}
\section{Maximum-Weight Independent Set (MWIS) Problem}
The Maximum-Weight Independent Set (MWIS)
problem (optimization version) is defined as:

\smallskip
\hspace*{0cm}{\bf Input:} An undirected graph $G (=(\ver(G),\edge(G)))$, where each vertex $i \in \ver(G) = \{1, \ldots, n \}$ is weighted by a
positive rational number $w_i$

\hspace*{0cm}{\bf Output:} A subset $S \subseteq \ver(G)$ such that
$S$ is independent (i.e., for each $i,j \in S$, $i\neq j$, $ij
\not \in \edge(G)$) and the total
{\em weight} of $S$ ($=\sum_{i \in S}
w_i$) is maximized. 
Denote the optimal set by $\wmis(G)$.
\smallskip

%There is a one-one correspondence between the  MIS problem and the Ising
%problem, which is the problem directly solved by the quantum processor
%that implements 1/2-spin Ising Hamiltonian. 
We recall a 
quadratic binary optimization formulation (QUBO) of the problem.
More details can be found in \cite{minor1}.
\begin{theorem}[Theorem 5.1 in \cite{minor1}]
If $\lambda_{ij} \ge \min\{w_i,w_j\}$ for all $ij \in \edge(G)$, then the maximum
  value of
  \begin{equation}
\oy(x_1,\ldots, x_n) = \sum_{i \in \ver(G)}w_i x_i - \sum_{ij \in \edge(G)}
  \lambda_{ij}x_ix_j
\label{eq:Y}
  \end{equation}
is the total weight of the MIS. 
In particular if $\lambda_{ij} > \min\{w_i,w_j\}$ for all
      $ij \in \edge(G)$, then $\wmis(G) = \{i \in \ver(G) : x^*_i = 1\}$,
where $(x^*_1, \ldots, x^*_n) = \argmax_{(x_1, \ldots, x_n) \in \{0,1\}^n}
\oy(x_1, \ldots, x_n)$.
\label{thm:mis}
\end{theorem}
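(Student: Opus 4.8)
The plan is to prove the quantitative identity $\max_{x\in\{0,1\}^n}\oy(x)=w(\wmis(G))$, where I write $w(S)\mdef\sum_{i\in S}w_i$, by establishing the inequalities $\ge$ and $\le$ separately, and then to deduce the ``in particular'' clause from a strict version of the argument used for the $\le$ direction.

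First I would handle the easy direction, $\max_x\oy(x)\ge w(\wmis(G))$: take any independent set $S$ (in particular $S=\wmis(G)$) and its indicator vector $x$, with $x_i=1$ exactly when $i\in S$. Because $S$ is independent, every edge $ij\in\edge(G)$ has $x_ix_j=0$, so all quadratic terms drop and $\oy(x)=\sum_{i\in S}w_i=w(S)$. For the reverse inequality I would use a monotone ``rounding to an independent set'' step: given any $x$ with $S(x)\mdef\{i:x_i=1\}$ not independent, choose an edge $ij\in\edge(G)$ with $x_i=x_j=1$, relabel so that $w_i\le w_j$ (so $\lambda_{ij}\ge w_i$), and let $x'$ be $x$ with the $i$-th coordinate set to $0$. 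Since every $\lambda_{ik}\ge\min\{w_i,w_k\}>0$ and $x_k\ge 0$,
\begin{align*}
\oy(x')-\oy(x)=-\,w_i+\sum_{k:\,ik\in\edge(G)}\lambda_{ik}\,x_k\ \ge\ -\,w_i+\lambda_{ij}\,x_j\ =\ \lambda_{ij}-w_i\ \ge\ 0 .
\end{align*}
Thus $\oy$ does not decrease, while $S(x')=S(x)\setminus\{i\}$ is strictly smaller and deleting a vertex creates no new edge inside the selected set; so iterating the step terminates (the potential $|S(\cdot)|$ strictly decreases) at some $x^\dagger$ with $S(x^\dagger)$ independent and $\oy(x^\dagger)\ge\oy(x)$. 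For that $x^\dagger$ the easy direction gives $\oy(x^\dagger)=w(S(x^\dagger))\le w(\wmis(G))$, hence $\oy(x)\le w(\wmis(G))$ for all $x$, and combining the two inequalities yields $\max_x\oy(x)=w(\wmis(G))$.

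For the ``in particular'' clause I would observe that under the strict hypothesis $\lambda_{ij}>\min\{w_i,w_j\}$ the displayed chain improves to $\oy(x')-\oy(x)\ge\lambda_{ij}-w_i>0$ whenever $S(x)$ still contains an edge, so no such $x$ can be a maximizer. Hence every maximizer $x^*$ has $S(x^*)$ independent, whence $\oy(x^*)=w(S(x^*))$; since $\oy(x^*)=\max_x\oy(x)=w(\wmis(G))$, the set $\{i:x^*_i=1\}=S(x^*)$ is a maximum-weight independent set, i.e.\ $\wmis(G)=\{i:x^*_i=1\}$.

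I expect the only genuine obstacle to be making the iterative rounding fully rigorous: one must check that there is always a lower-weight endpoint available to zero out, that $\oy$ is nondecreasing along the whole process (the key inequality above), and that the process halts --- all of which follow once one fixes the strictly decreasing potential $|S(x)|$ and notes that vertex deletion never introduces a new violated edge. A minor subtlety worth flagging is that, without the strict hypothesis, there may exist maximizers $x$ with $S(x)$ not independent, which is precisely why the second clause requires $\lambda_{ij}>\min\{w_i,w_j\}$.
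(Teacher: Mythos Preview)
Your argument is correct. The paper does not supply its own proof of this theorem: it is quoted from \cite{minor1}, and the only justification offered is the sentence ``The proof is quite intuitive in the way that one can think of $\lambda_{ij}$ as the energy penalty when there is an edge $ij \in \edge(G)$.'' Your rounding step makes exactly this intuition rigorous: flipping the lower-weight endpoint of a violated edge to $0$ forfeits weight $w_i$ but recovers at least the penalty $\lambda_{ij}\ge w_i$, so the objective weakly increases (strictly under the stronger hypothesis). The termination argument via the strictly decreasing potential $|S(x)|$ is clean, and the deduction of the ``in particular'' clause from the strict inequality is the right way to read the statement. There is nothing to compare against in the present paper beyond noting that your proof is the natural formalization of the stated intuition.
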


Here the function $\oy$ is called the pseudo-boolean function for MIS,
where the boolean variable $x_i \in \{0,1\}$, for $i=1,\ldots, n$.  The proof is quite
intuitive in the way that one can think of $\lambda_{ij}$ as the {\em energy
  penalty} when there is an edge $ij \in \edge(G)$.
In this formulation, we only require $\lambda_{ij} > \min\{w_i,w_j\}$, and thus there is freedom in
choosing this parameter.

\subsection*{MIS-Ising Hamiltonian}
By changing the variables ($x_i=\frac{1+s_i}{2}$ where $x_i \in
\{0,1\}, s_i \in \{-1,1\})$, it is easy to show that MIS is equivalent
to minimizing the following function, known as the {\em Ising energy function}:
\begin{eqnarray}
  \energy(s_1, \ldots, s_n) &=& \sum_{i \in \ver(G)} h_i s_i + \sum_{ij \in \edge(G)} J_{ij}s_is_j,
\end{eqnarray}
which is the 
eigenfunction of the following
 {\em Ising Hamiltonian}:
%which is the energy function (eigenvalues) of 
%{\em Ising
%Hamiltonian}:
%\footnote{That is, replace each variable $s_i \in \{-1,+1\}$ by $\sigma^z_i$.}:
\begin{equation}
\ham_{\ms{Ising}} = \sum_{i \in \ver(G)} h_i \sigma^z_i + \sum_{ij \in \edge(G)} J_{ij}
\sigma^z_i \sigma^z_j
\label{eq:Ising}
\end{equation}
where $h_i = \sum_{j \in \nbr(i)}
  \lambda_{ij} - 2w_i$, (conversely $w_i = 1/2(\sum_{j \in \nbr(i)}
  J_{ij} - h_i)$), $J_{ij}=\lambda_{ij}$, $\nbr(i) =\{j: ij \in \edge(G)\}$,
for $i \in \ver(G)$.

For convenience, we will refer to a Hamiltonian
in such a form as an {\em MIS-Ising} Hamiltonian.

\section{Resolving the 15-qubit graph $G_{rm}$ in \cite{Amin-Choi}}

\begin{figure}[h]
  \centering
  $$
  \begin{array}[h]{cc}
  \includegraphics[width=0.4\textwidth]{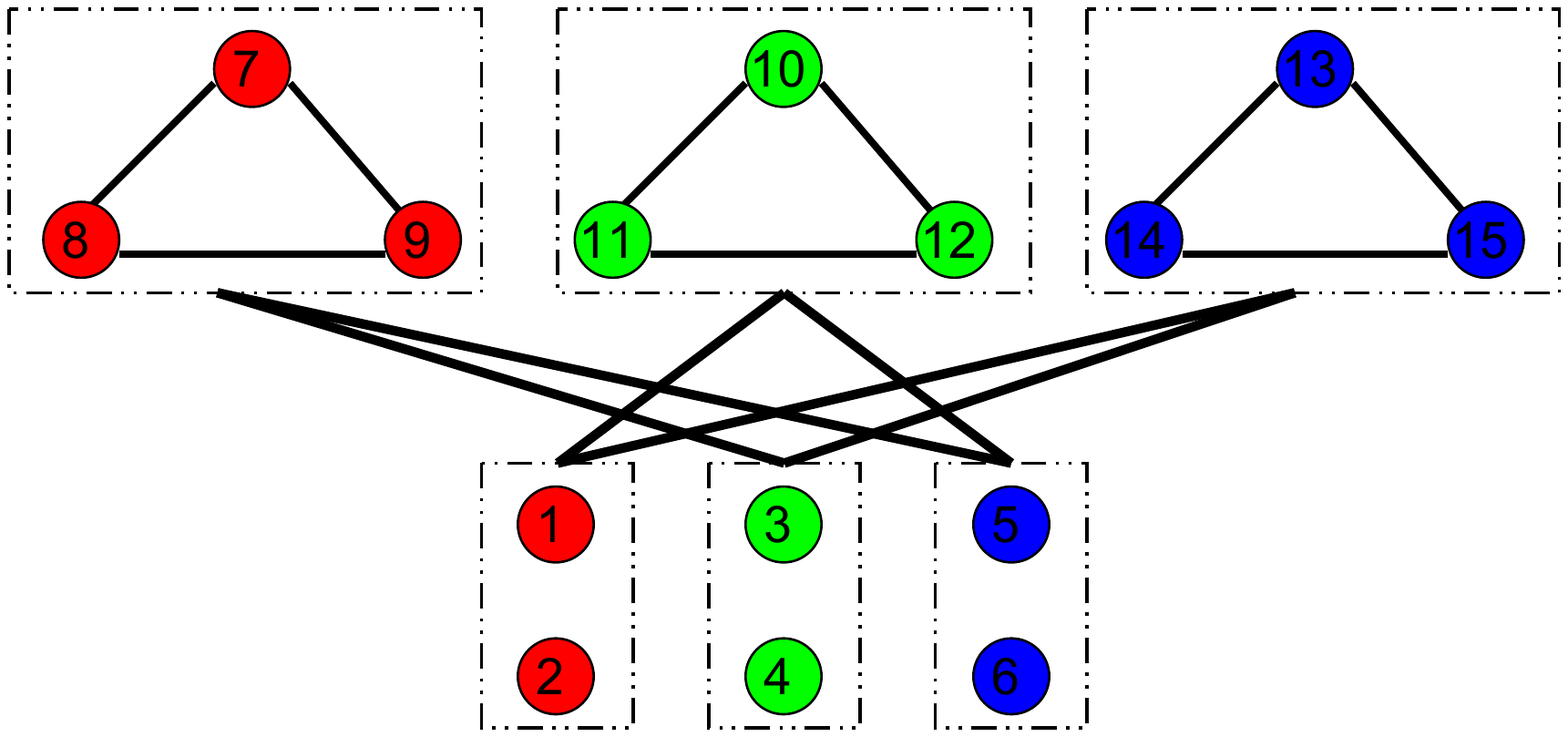}
&  \includegraphics[width=0.4\textwidth]{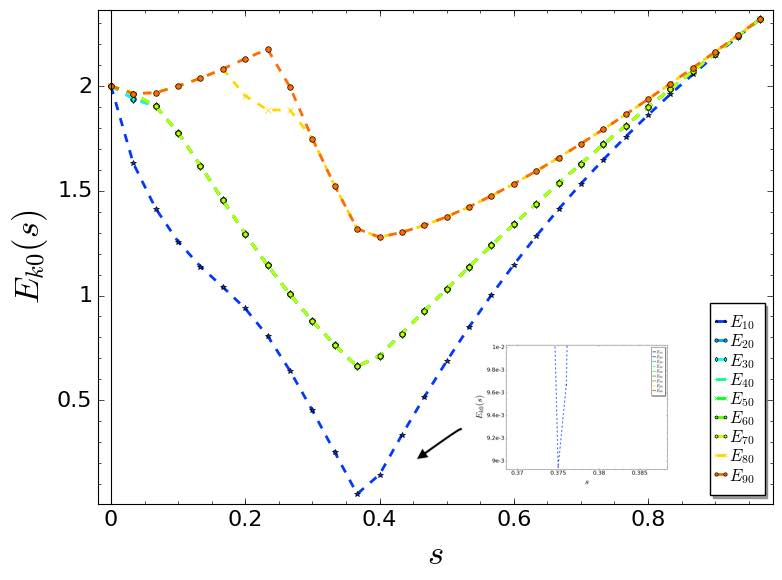}
    \\
    (a) & (b)
  \end{array}
  $$
    \caption{(a) The 15-qubit graph $G_{rm}$ from
      \cite{Amin-Choi}. Each vertex of 
    $1..6$ has a weight $W_G$, while each vertex of $7..15$ (in
    three triangles)  has a 
    weight $W_L$. For $W_L<2W_G$, the first six vertices make the
    global minimum (MWIS), while every combination of 3 vertices each
    from one triangle is a maximal independent set, altogether making
    27 degenerate local minima.
  (b) $\HD(\ham_X,G_{rm})$ has
an anti-crossing with a small gap ($\lesssim 9e-3$) at $\ap \approx 0.375$.}
\label{fig:Q15G}
\end{figure}

\begin{figure}[h]
  \centering
$$
  \begin{array}[h]{cc}
   \includegraphics[width=0.45\textwidth]{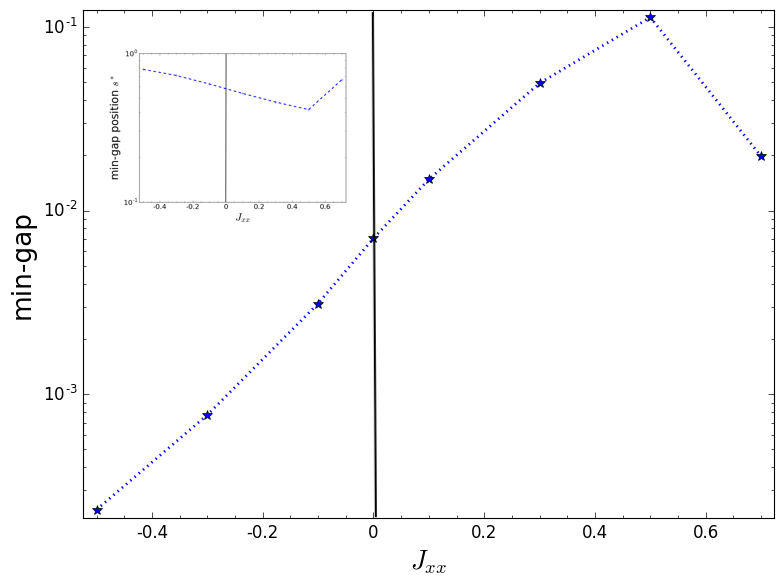} &
                                                                \includegraphics[width=0.45\textwidth]{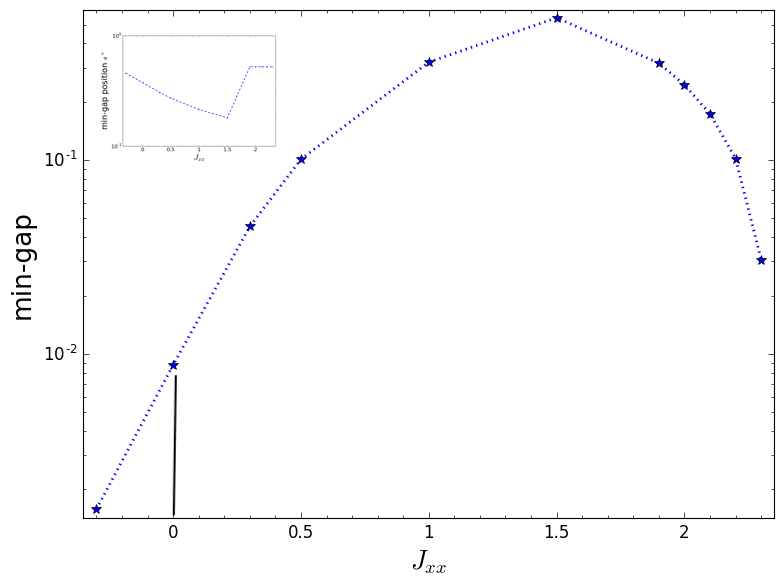}\\
    (a) G' \mbox{ in Figure~\ref{fig:G1}(b)} & (b) G_{rm} \mbox{ in Figure~\ref{fig:Q15G}(a)}
  \end{array}
  $$
    \caption{Min-gap $\Delta$  vs \XX-coupler strength $\Jxx$ of $\HD(\Jxx,
  G_{\ms{driver}},G)$ where (a) $G=G'$  in Figure~\ref{fig:G1}(b); (b)
  $G=G_{rm}$ in  Figure~\ref{fig:Q15G}(a).
(a)    For $\Jxx \in (0,0.5]$, $\Delta(\Jxx) >\Delta(0)>\Delta(-\Jxx)$ (``de-signed'' is
    smaller).
(b) $\Delta(+0.3) >\Delta(0)>\Delta(-0.3)$. For $\Jxx \in [0.5,2.2]$, there is no AC
and the min-gap is at least $0.1$.
  }
  \label{fig:mg-Jxx-Q12}
\end{figure}

\begin{figure}[h]
  \centering
  $$
  \begin{array}[h]{cc}
    \includegraphics[width=0.6\textwidth]{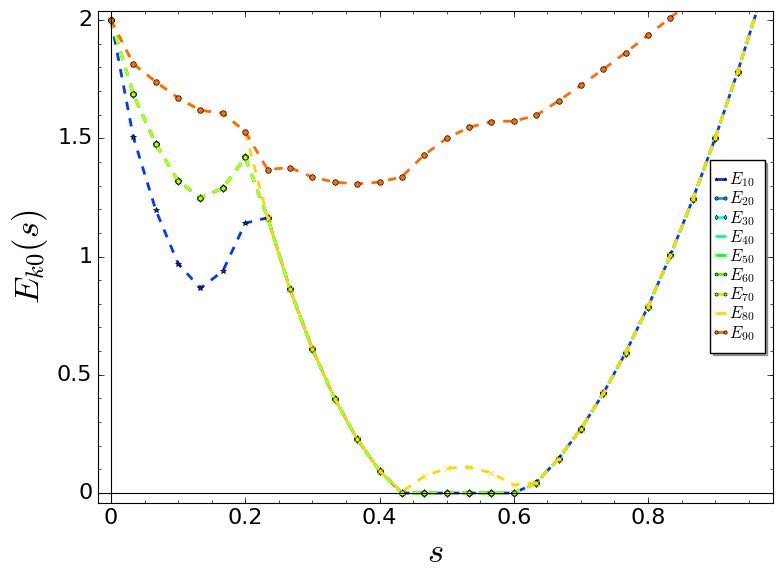} & \includegraphics[width=0.3\textwidth]{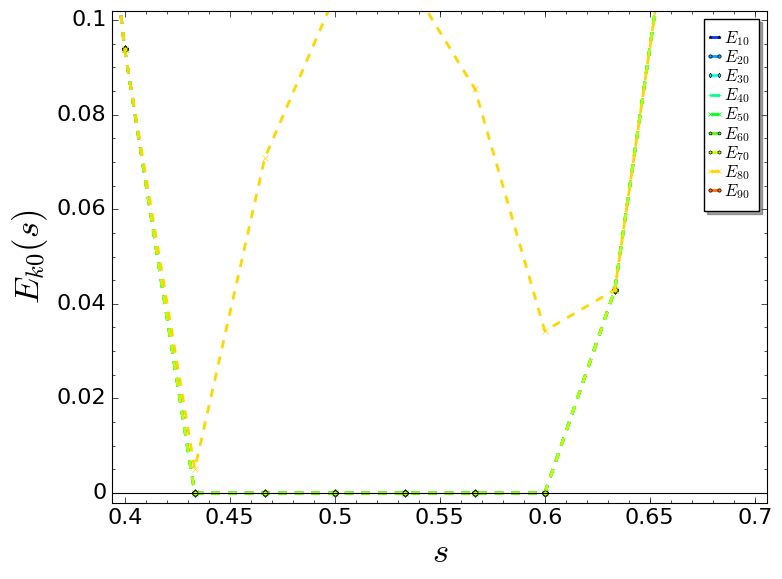}\\
                                                              
    (a) \HD(\Jxx, G_{\ms{driver}}, G_{rm}), \Jxx=2.5& (a2)
                                                                    s
                                                                    \in
                                                                    [s_1,s_2]:
                                                        E_{80}(s)>0.01\\
    
     \includegraphics[width=0.4\textwidth]{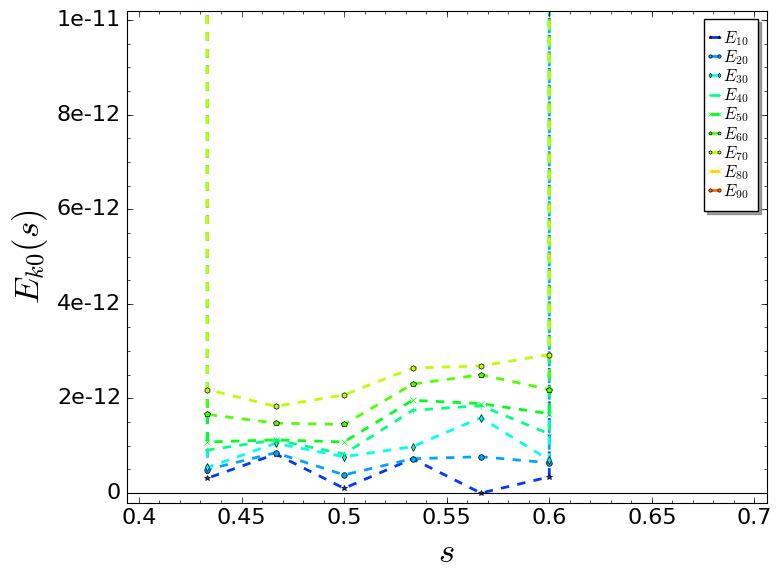} &\includegraphics[width=0.4\textwidth]{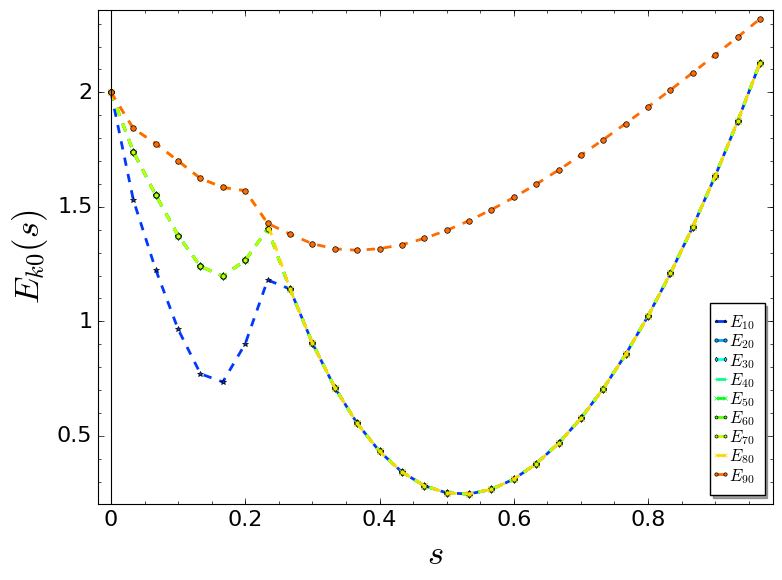} \\
                                                                     
                                                                      (a1)
    s \in [s_1,s_2]: E_{k0}(s) < 3e-12
                                                                  &
    (b) \HD(\Jxx, G_{\ms{driver}}, G_{rm}), \Jxx=2\\

  \end{array}
   $$
   \caption{The gap-spectrum for  $\HD(\Jxx, G_{\ms{driver}}, G_{rm})$ 
where
the weighted graph $G_{rm}$ shown in Figure~\ref{fig:Q15G}(a),
with $W_L=1.8$, $W_G=1$, $G_{\ms{driver}}$ consists of the three
triangles, (a) $\Jxx=2.5$; (b) $\Jxx=2.0$.
(a)
There is a double {\em multi-level} anti-crossing at $s_1=0.43$ and
$s_2=0.6$. For $s \in [s_1,s_2]$,
$E_{k0}(s) < 3e-12$ (shown in (a1)) for $k=1..7$, and
$E_{80}(s)>0.01$ (shown in (a2)).
The lowest 7 excited states ($\ket{E_1}\ldots \ket{E_7}$) form a
narrow band (as if it is a pseudo-degenerate state).
Similar to the diabatic cascade in \cite{tunneling-MAL},
the system can diabatically transition to $\ket{E_7(s)}$
     at $s_1$, and then through another diabatic transition
     back to $\ket{E_0(s)}$
     at $s_2$, when annealing time is short.
     One can further verify if indeed DQA-GS  can be successfully applied
     to this example through HOQST \cite{HOQST}.
     Remark: if one perturbs the vertex weights so as to break the
     $3^3$-fold degeneracy, one would obtain  a sequence of nested
     double-ACs for some $\Jxx$.
(b) For $\Jxx=2.0$, there is no AC  with
min-gap $=0.244$ at $0.523$. For $\Jxx \in [0.5,2.2]$, there is no AC
and the min-gap is at least $0.1$.}
\label{fig:Q15AB}
\end{figure}

% sage: E=load("Energy-WL18-May4-B2.out_3.sobj")                                  
% sage: E                                                                         
% [(2.00000000000000, 0.5230000000000005, 0.24484308720737147)]
% sage: E2=load("Energy-WL18-May4-B2.out_2.sobj")                                 
% sage: E2                                                                        
% [(2.00000000000000, 0.5199999999999999, 0.24492852647583163)]
% sage: E1=load("Energy-WL18-May4-B2.out.sobj")                                   
% sage: E1                                                                        
% [(2.00000000000000, 0.5333333333333333, 0.24613551554575963)]
% sage:            

% sage: aa=load("HX-WL18-May4-HX.out_3.sobj")                                     
% sage: aa                                                                        
% [(0.000000000000000, 0.37500000000000033, 0.008949567688716797)]
% sage: E2=load("HX-WL18-May4-HX.out_2.sobj")                                     
% sage: E2                                                                        
% [(0.000000000000000, 0.3800000000000001, 0.029112767363159264)]
% sage: E1=load("HX-WL18-May4-HX.out.sobj")                                       
% sage: E1                                                                        
% [(0.000000000000000, 0.36666666666666664, 0.05214822478517078)]
% sage:

The example graph  $G_{rm}$ from \cite{Amin-Choi} is shown in
Figure~\ref{fig:Q15G} (a). The stoquastic QA 
$\HD(\ham_X,G_{rm})$
has
an anti-crossing with a small gap ($\lesssim 9e-3$), as shown in
Figure~\ref{fig:Q15G} (b).
Taking the three triangles (three
independent cliques) as the driver graph $G_{\ms{driver}}$
$\HD(\Jxx, G_{\ms{driver}}, G_{rm})$ has no anti-crossing for $\Jxx
\in [0.5,2.2]$. In particular, for $\Jxx=2.0$, min-gap is greater than
$0.24$ at $0.523$, as shown in Figure~\ref{fig:Q15AB}(b).
For $\Jxx=2.5$, $\HD(\Jxx, G_{\ms{driver}}, G_{rm})$ has a double
multi-level anti-crossing, as shown in Figure~\ref{fig:Q15AB}(a).
DQA-GS can be applied to this example by a diabatic cascade at the
first AC and then return to ground state through another diabatic
cascade at the second AC. This
example also illustrates the difference from DQA in \cite{CL2020}
where the system remains in the subspace and does not necessarily
return to the ground state.

\section*{Acknowledgments}
I would like to thank Jamie Kerman for introducing to me the XX-driver graph
problem which directly rekindle this research, and his continuing
support and collaboration of this project. 
Special thanks to Itay
Hen for the very helpful discussion and comments and his help.
I would like to thank Daniel Lidar for his comments, especially
about diabatic cascades, and for the opportunity to
participate in DARPA-QEO and DARPA-QAFS programs.
I would also like to thank Tameem Albash and Elizabeth Crosson for the immediate responses
to the first draft of this paper.
Thanks also go to Siyuan Han and Federico Spedalieri for many comments
and discussions.
I gratefully acknowledge the help and comments from Sergio Boixo, Vadim Smelyansky,
and especially the advice and discussions with Eddie Farhi. 
 \end{document}